\global\let\@myarticle\@true
\renewcommand{\paragraph}{%
  \@startsection{paragraph}{4}%
  {\z@}{0.0ex \@plus 1ex \@minus .2ex}{-0.25em}%
  {\normalfont\normalsize\bfseries}%
}
\def\triangleq{\overset{\mathrm{def}}{=}}
\newcommand{\dec}[1]{#1\!\!\downarrow}
\let\eps=\varepsilon
\newtheorem{mytheorem}{Theorem}[section]
\newtheorem{myclaim}[mytheorem]{Claim}
\newtheorem{mycorollary}[mytheorem]{Corollary}
\newtheorem{definition}[mytheorem]{Definition}
\newtheorem{mylemma}[mytheorem]{Lemma}
\newtheorem{notation}[mytheorem]{Notation}
\newtheorem{myproposition}[mytheorem]{Proposition}
\newtheorem{remark}[mytheorem]{Remark}
\crefname{assumption}{Assumption}{Assumptions}
\crefname{conjecture}{Conjecture}{Conjectures}
\crefname{myclaim}{Claim}{Claims}
\crefname{mycorollary}{Corollary}{Corollaries}
\crefname{definition}{Definition}{Definitions}
\crefname{example}{Example}{Examples}
\crefname{exercise}{Exercise}{Exercises}
\crefname{fact}{Fact}{Facts}
\crefname{mylemma}{Lemma}{Lemmas}
\crefname{notation}{Notation}{Notations}
\crefname{note}{Notes}{Notes}
\crefname{observation}{Observation}{Observations}
\crefname{myproposition}{Proposition}{Propositions}
\crefname{problem}{Problem}{Problems}
\crefname{question}{Question}{Questions}
\crefname{remark}{Remark}{Remarks}
\newcommand{\R}{\mathbb{R}}
\newcommand{\symm}{\mathbb{S}}
\newcommand{\sympm}{\mathbb{S}_{+}}
\def\sm{\setminus}
\let\proj=\Pi
\newcommand{\unit}[1]{\overline{#1}}
\newcommand{\dunit}[1]{\overline{\overline{#1}}}
\DeclareMathOperator{\spn}{span}
\newcommand{\ind}[1]{\mathbf{e}_{#1}}
\DeclareMathOperator{\rank}{rank}
\let\es=\emptyset
\newcommand{\bestS}{{\mathcal{S}^\ast}}
\newcommand{\bestSfi}[1]{\|\xvec_{\circ| \bestS(\circ)}\|^2}
\newcommand{\xvec}{\mathrm{x}}%
\def\opt{\mathsf{OPT}}
\DeclareMathOperator*{\st}{st}
\DeclareMathOperator*{\argmin}{argmin}
\DeclareMathOperator*{\argmax}{argmax}
\newcommand{\expct}[2]{\mathbb{E}_{#1}\bigg[ #2 \bigg]}
\newcommand{\expcts}[2]{\mathbb{E}_{#1}\big[ #2 \big]}
\def\minmaxkpar{\textsc{$k$-expansion}} %
\def\optg{\Gamma_*}
\def\curg{\Gamma}
\def\newg{\widehat{\Gamma}}
\def\moptg{{\mathbf{\optg}}}
\def\mcurg{{\mathbf{\Gamma}}}
\def\mnewg{{\mathbf{\widehat{\Gamma}}}}
\def\simfun{\Delta} %
\def\setv{\mathsf{Set}_V}
\def\disjv{\mathsf{Disj}_V}
\newcommand{\orth}[1]{\mathcal{S}_{#1}(\R^n)}
\newcommand{\orthk}{\orth{k}}
\crefname{program}{algorithm}{algorithms}
\begin{document}
\title{How to Round Subspaces: A New Spectral Clustering Algorithm}

\author{Ali Kemal Sinop\footnote{ 
    Email: \url{%
      asinop@cs.cmu.edu}.  This work was done while author was
    visiting Simons Institute for the Theory of Computing, University
    of California at Berkeley.  
    This material is based upon work
    supported by the National Science Foundation under Grant
    No. 1540685.  }} \date{\today}

\maketitle

\begin{abstract}
  A basic problem in spectral clustering is the following. If a
  solution obtained from the spectral relaxation is close to an
  integral solution, is it possible to find this integral solution
  even though they might be in completely different basis?
  In this paper, we propose a new spectral clustering algorithm. It
  can recover a $k$-partition such that the subspace corresponding to
  the span of its indicator vectors is $O(\sqrt{\opt})$ close to the
  original subspace in spectral norm with $\opt$ being the minimum
  possible ($\opt \le 1$ always). Moreover our algorithm does not
  impose any restriction on the cluster sizes. Previously, no
  algorithm was known which could find a $k$-partition closer than
  $o(k \cdot \opt)$.

  We present two applications for our algorithm. First one finds a
  disjoint union of bounded degree expanders which approximate a given
  graph in spectral norm. The second one is for approximating the
  sparsest $k$-partition in a graph where each cluster have expansion
  at most $\phi_k$ provided $\phi_k \le O(\lambda_{k+1})$ where
  $\lambda_{k+1}$ is the $(k+1)^{st}$ eigenvalue of Laplacian
  matrix. This significantly improves upon the previous algorithms,
  which required $\phi_k \le O(\lambda_{k+1}/k)$.
\end{abstract}

\section{Introduction}
\label{sec:intro}
In this paper, we study the following problem. If the solution of
spectral relaxation for some $k$-way partitioning problem is close to
an integral solution, can we still find this integral solution? The
main difficulty is due to the rotational invariance of the spectral
relaxation. The basis of an integral solution might be completely
different than the basis of solutions for the spectral
relaxation. Arguably, this is an important problem in spectral
clustering, which is a widely used approach for many data clustering
and graph partitioning problems arising in practice.

In spectral clustering, one uses the top (or bottom) $k$-eigenvectors
of some matrix derived from the input (usually the Laplacian or
adjacency matrix of some graph derived from the distances or nearest
neighbors) to find a $k$-partition.
%
%
If the clusters are separated in a nice way, then these
$k$-eigenvectors will be close a $k$-partition up to an {\bf arbitrary
  rotation}. Hence a crucial part of spectral clustering methods is
how to ``round'' these $k$-eigenvectors to a close-by $k$-partition. 

Formally, we study the problem of approximating a $k$-dimensional
linear subspace of $\R^n$ by another subspace which is {$k$-piecewise}
constant: Every vector of this subspace has its coordinates comprised
of at most $k$ distinct values. 
Or equivalently, given a $k$-by-$n$ orthonormal matrix $Y$ of the form
\(Y=[y_1,\ldots, y_n]\)
(think of $Y$ as an embedding of $n$ points in $\R^k$), our problem is
to find a $k$-partition $\curg=\{T_1,\ldots, T_k\}$ so as to minimize
the total variance under any direction:
\begin{align*}
  \min_{\curg} \max_{z \in \R^k: \|z\|_2=1}
  \sum_{S\in \curg}
  \sum_{u\in S}
  \langle z , y_u - c_S \rangle^2.
  %
\end{align*}
Here $c_S$ is the mean of points in the cluster $S$. If we use
$C \in \R^{n\times k}$ to denote the matrix of cluster means with each
row of $C$ being one of the cluster means, then our objective can be
stated more concisely as \(\min_C \|Y-C\|_2\) with
$\|\cdot\|_2$ being the spectral norm. Geometrically speaking, this
corresponds to finding a $k$-piecewise constant subspace that makes the
{minimum angle} with $Y$.
This is the problem of clustering with spectral norm~\cite{kk10}.
In $2$-dimensions, where $k=2$, optimal solution corresponds to one of
the threshold cuts. From this perspective, our problem can be seen as
a generalization of thresholding to higher dimensions.
%

Our main contribution is a new spectral clustering algorithm that can
recover a $k$-partition whose center matrix $C'$ satisfies
$\|Y-C'\|_2 \le O(\sqrt{\opt})$, where $\opt$ is the minimum possible
(observe that $\opt \le 1$).  Furthermore, the recovered $k$-partition
will be $O(\sqrt{\opt})$-close in Jaccard index to the optimum
partition: 
Each cluster we found will be close to a unique cluster
among the optimum $k$-partition.  Previously, no algorithm was known
to find a $k$-partition closer than $o(k \cdot \opt)$.

We also study two closely related problems. In the first one, the goal
is to approximate a matrix in spectral norm by a block diagonal
matrix, with every block being the normalized adjacency matrix of a
clique. In our second application, we turn to the problem of
\minmaxkpar. Given an undirected, weighted graph $G=(V,C)$; find a
$k$-partition $\curg=\{S_1,\ldots, S_k\}$ of the nodes so as to
minimize the maximum expansion:
\begin{align}
  \Phi_k(G) \triangleq 
  \min_{\curg} 
  \max_{S\in \curg} \frac{C(S,\overline{S})}{\min(|S|,
  |\overline{S}|)}.   
  \label{eq:minmaxkpar-def-0}
\end{align}
Here $C(S,\overline{S})$ denotes the total weight of edges crossing
$S$.
Our second application is for approximating the optimum $k$-partition
of \minmaxkpar\ on graphs
%
whose spectrum grows faster than $\phi_k$ (we will make this precise
later).

The choice of spectral norm to measure the closeness of associated
subspaces is quite natural from the perspective of our second
application. Given any subspace, we show how to construct graphs in
polynomial time, such that approximating \minmaxkpar\
on such graphs implies a solution for the spectral clustering problem.
From this perspective, we can see
that the subspace rounding problem is a {\bf prerequisite} toward
obtaining a $o(k)$-factor approximation algorithm for the problem of
\minmaxkpar, where the best known is $O(k^4)$ due to \cite{lgt14}.
\subsection{Related Work}
\label{sec:related-work}
Spectral methods have been successfully used for clustering
tasks~\cite{b13} arising in many different areas such as
VLSI~\cite{aky99}, machine learning, data analysis~\cite{njw01} and
computer vision~\cite{sm00,ys03}. They are usually obtained by
formulating the clustering task as a combinatorial optimization
problem (such as sparsest/normalized cuts~\cite{sm00}), then solving
the corresponding basic SDP relaxation, whose solution is often given
by $k$ extremal eigenvectors of an associated matrix.

One of the first spectral clustering algorithms with worst case
guarantees was given in~\cite{kvv04} for the graph partitioning
problem assuming certain conditions on the internal versus external
conductance. The problem of finding a $k$-partition so as to minimize
the spectral norm was first introduced by~\cite{kk10} in the context
of learning mixtures of Gaussians. The best known approximation factor
is $O(k)$ due to~\cite{as12}.
A problem closely related to spectral clustering is \minmaxkpar, as
defined in~\eqref{eq:minmaxkpar-def-0}. 
%
When all cluster sizes are constrained to be nearly equal, this
problem admits a $O(\sqrt{\log n \log k})$-factor
approximation~\cite{bfkmns11}. On the other hand, if a bi-criteria
approximation is sought, then one can find $(1-\Omega(1)) k$ clusters
each of which has expansion at most $O(\sqrt{\log n \log k})$ times
the optimum \cite{lm14}.

If we look at the basic SDP relaxation of \minmaxkpar, then the
optimal fractional solution is given by the $k$ smallest eigenvectors
of the corresponding graph Laplacian matrix. In fact, this is the main
motivation behind the usage of $k$-eigenvectors for clustering tasks
in practice. %
A natural question is whether one can ``round'' these eigenvectors to
a $k$-partition (the so called Cheeger inequalities). When $k=2$, it
was shown in~\cite{am85} that simple thresholding yields a
$2$-partition with $O(\sqrt{\phi_2})$ expansion, where $\phi_k$ is the
optimal value for \minmaxkpar. Later a better bound was given
in~\cite{kllgt13}, assuming there is some gap between
eigenvalues. When $k > 2$, bi-criteria versions of Cheeger's
inequality are known \cite{abs10,lrtv12,lgt14}. Here the guarantees on
the expansion are of the form $\widetilde{O}(\sqrt{\phi_k})$, where
$\widetilde{O}$ hides the dependencies on logarithmic factors; but the
algorithms can only find $(1-\Omega(1)) k$ parts.
The problem becomes significantly harder when {\bf exactly} $k$
clusters are desired. In this case, it was shown in~\cite{lgt14} that
a method similar to the one proposed in~\cite{njw01} will yield a
$k$-partition with maximum expansion $O(k^4 \sqrt{\phi_k})$. This is
the best known approximation algorithm for \minmaxkpar\ problem and,
as of yet, there is no algorithm known which achieves a
poly-logarithmic approximation.

Perhaps the simplest case of \minmaxkpar\ is when there is a gap
between the $(k+1)^{st}$ smallest eigenvalue of Laplacian matrix,
$\lambda_{k+1}$, and $\phi_k$ of the form
\(\frac{\lambda_{k+1}}{\phi_k} \ge \frac{1}{\eps}.\)
One might think of this as a stability criteria: It implies that all
$k$-partitions with maximum expansion $\le O(\phi_k)$ are
$O(\eps)$-close to each other. To put it in another way, approximating
the optimum $k$-partition is at least as easy as finding a
$k$-partition with minimum possible expansion among all its clusters.
For the case of $k=2$, it is trivial to show that thresholding the
second smallest eigenvector of Laplacian yields $\eps$-close partition
to the optimal one.
On the other hand, when $k>2$, the best prior result is due
to~\cite{as12}, which can find a $k$-partition that is
$O(k \eps)$-close to the optimal one. In other words, when
$\eps \gg \frac1k$, there is no algorithm known to find a non-trivial
approximation of the optimum $k$-partition.
\subsection{Organization}
\label{sec:contributions}
We first introduce some useful notation and background
in~\Cref{sec:background}. After this, we state our main contributions
in~\Cref{sec:contributions}.  Then we propose a new spectral
clustering algorithm in~\Cref{sec:algo}.  In~\Cref{sec:analysis}, we
will prove that our algorithm always finds a $k$-partition that is
$\sqrt{\eps}$-close to any given subspace, where $\eps$ is the
optimum.  In~\Cref{sec:applications}, we discuss some applications of
our algorithm. Our main applications will be:
\begin{itemize}
\item (\Cref{sec:app-minmakpar}) Approximating a graph using disjoint
  union of expanders and,
\item (\Cref{sec:app-apprx}) \minmaxkpar\ when
  $\phi_k \le O(\lambda_{k+1})$.
\end{itemize}
Finally, in \Cref{sec:reduction-ncuts}, we present a simple reduction
from \minmaxkpar\ to our problem: This means any algorithm for
\minmaxkpar\ {has to solve} our subspace rounding problem as well.

\section{Notation and Background}
\label{sec:background}
Let $[m]\triangleq \{1,2,\ldots,m\}$.  We will associate $V=[n]$ with
the set of nodes.
For any vector $q \in \R^\Upsilon$,
$\unit{q} \triangleq \frac{1}{\|q\|_2} q$ and
$\dunit{q} \triangleq \frac{1}{\|q\|^2_2} q$.
Given a subset $S \subseteq V$, we use $\ind{S} \in \R^n$ to denote
the indicator vector for $S$, $\ind{S}(i) =
\begin{dcases*}
  1 & if $i \in S$, \\
  0 & else.
\end{dcases*}$.

\paragraph{Matrices.} We use $\R^{r\times c}$ to denote the set of
$r$-by-$c$ real matrices. Likewise, we use $\symm^c$ and
$\sympm^c \subseteq \symm^c$ to denote the set of $c$-by-$c$ symmetric
and positive semidefinite matrices, respectively. Finally let $\orthk$
be the set of all $n$-by-$k$ orthonormal matrices (Stiefel manifold)
for $k\le n$:
\[
\orthk \triangleq \Big\{ A \in \R^{n\times k} \ \Big|\
  A^T A = I_k
\Big\}.
\]
Given an $r$-by-$c$ matrix $A \in \R^{r\times c}$, we use
$\sigma_i(A)$, $i \in \{1,2,\ldots,\min(r,c)\}$ to refer to the
$i^{th}$ largest singular value of $A$.
We define $\sigma_{\min}(A)$ as the minimum singular value of $A$, 
and $\|A\|_2$ as the $2$-norm of $A$, which is
$\|A\|_2 = \sigma_1(A)$. Likewise $\|A\|_F$ denotes the Frobenius norm
of $A$, $\|A\|_F = \sqrt{A^T A} = \sqrt{\sum_j \sigma_j^2(A)}$.
Given matrix $R\subseteq [r], C \subseteq [c]$, we will use $A_{R,C}$
to refer to the minor corresponding to rows $R$ and columns $C$.

Finally, we will use $A^{\proj}, A^{\perp} \in \sympm^{r}$ to denote
the $r$-by-$r$ projection matrices onto the column space and co-kernel
of $A$, respectively.
Observe that for any $A\in \orthk$,
\(A^\proj = A A^T\) and \(A^\perp = I_n - A A^T\).

One way of measuring the closeness of two subspaces is to look at how
much (in degrees) we need to rotate a vector in one subspace to the
closest vector in the other subspace. It is well known that
this quantity is related to the spectral norm.  For completeness, we
provide a formal version of this statement along with its proof:
\begin{myproposition}[\cite{ss90}]
  \label{thm:subspace-sim}
  Given two linear $k$-dimensional subspaces of $\R^n$ with
  orthonormal basis $A, B\in \orthk$ respectively; the cosine of the
  largest angle between these two subspaces is given by the following:
  \[
  \cos(\angle AB) \triangleq
  \min_{x\in \spn(A)}
  \max_{y\in \spn(B)} \frac{|\langle x , y \rangle|}{\|x\|_2 \|y\|_2}.
  \]
  We have 
  \(
  \sin(\angle AB) = \|A^\perp B\|_2 = \|B^\perp A\|_2.
  \)
\end{myproposition}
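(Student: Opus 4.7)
The plan is to convert the geometric quantity $\cos(\angle AB)$ into a singular value of $A^T B$, then use an orthogonal (Pythagorean) decomposition to relate $\|A^\perp B\|_2$ to that same singular value. Symmetry of singular values of $M$ and $M^T$ will give the last equality.

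First I would rewrite the minimax in the definition of $\cos(\angle AB)$. Any unit vector $x \in \spn(A)$ is $x = Aw$ with $\|w\|_2 = 1$, and similarly $y = Bz$ with $\|z\|_2 = 1$. Since $A^T A = I_k$ and $B^T B = I_k$, the inner product is $\langle Aw, Bz\rangle = w^T (A^T B) z$. The inner maximization over $z$ (with $w$ fixed) gives $\|(A^T B)^T w\|_2 = \|B^T A w\|_2$; and then the outer minimum over $w$ with $\|w\|_2 = 1$ is precisely $\sigma_{\min}(B^T A)$. Because $B^T A$ is $k$-by-$k$, $\sigma_{\min}(B^T A) = \sigma_{\min}(A^T B)$, so
\[
\cos(\angle AB) = \sigma_{\min}(A^T B).
\]

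Next I would compute $\|A^\perp B\|_2^2$. For an arbitrary unit vector $z \in \R^k$, the vector $Bz$ is a unit vector in $\spn(B)$. Using $I_n = A A^T + A^\perp$ and the fact that $A A^T$ and $A^\perp$ are orthogonal projections onto complementary subspaces, Pythagoras gives
\[
\|Bz\|_2^2 = \|A A^T B z\|_2^2 + \|A^\perp B z\|_2^2 = \|A^T B z\|_2^2 + \|A^\perp B z\|_2^2,
\]
where the second equality uses $\|A A^T B z\|_2 = \|A^T B z\|_2$ (since $A$ is orthonormal on its column space). Since $\|Bz\|_2^2 = 1$, maximizing over $z$ yields
\[
\|A^\perp B\|_2^2 = 1 - \min_{\|z\|_2 = 1} \|A^T B z\|_2^2 = 1 - \sigma_{\min}(A^T B)^2 = 1 - \cos^2(\angle AB) = \sin^2(\angle AB).
\]

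Finally, the same argument with the roles of $A$ and $B$ swapped (using $I_n = B B^T + B^\perp$ applied to $Az$) gives $\|B^\perp A\|_2^2 = 1 - \sigma_{\min}(B^T A)^2$, and since the singular values of $A^T B$ and $B^T A$ coincide, this equals $\sin^2(\angle AB)$ as well. I do not foresee a serious obstacle: the only subtle step is justifying $\|A A^T B z\|_2 = \|A^T B z\|_2$ from $A^T A = I_k$, which is immediate. Everything else is algebraic manipulation via the identity $\|Mv\|_2^2 + \|M^\perp v\|_2^2 = \|v\|_2^2$ for orthogonal projections.
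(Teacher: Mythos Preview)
Your proposal is correct and follows essentially the same approach as the paper: both arguments reparametrize $x = Ap$ (and $y = Bz$), identify the inner maximum as the norm of the projection onto the other subspace, and then use the Pythagorean identity $\|B^{\proj} Ap\|^2 + \|B^{\perp} Ap\|^2 = \|p\|^2$ to convert $\cos(\angle AB)$ into $\sqrt{1-\|B^{\perp}A\|_2^2}$. Your version names the intermediate quantity as $\sigma_{\min}(A^T B)$ and then invokes the symmetry $\sigma_{\min}(A^T B)=\sigma_{\min}(B^T A)$ to get both equalities, whereas the paper computes $\|B^{\perp}A\|_2$ directly and leaves the symmetry implicit; this is a cosmetic difference only.
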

\begin{proof}
  From the definition of $\measuredangle AB$, it is easy to see how it
  measures the maximum degrees necessary to rotate a point in $A$ to
  any point in $B$ and vice versa. We will now prove the second
  statement. Any point $x$ in $\spn(A)$ can be written as $A p$ for
  some $p\in \R^k$. Moreover $A$ is orthonormal, thus
  $\|x\| = \|A p\| = \|p\|$. This allows us to rewrite
  $\cos(\measuredangle AB)$ as follows:
  \[
  \min_{x\in \spn(A)} \max_{y\in \spn(B)}
  \frac{|\langle x , y \rangle|}{\|x\|_2 \|y\|_2}
  = \min_{p} \max_{y\in\spn(B)}
  \frac{|\langle A p , y \rangle|}{\|p\|_2 \|y\|_2}.
  \]
  For any $p$, best $y$ is given by $B^{\proj} A p$.  Moreover
  $\|B^{\proj} A p\|^2_2 + \|B^{\perp} A p\|^2_2 = \|p\|^2_2$, thus:
  \[
  = \min_{p} 
  \frac{\|B^\proj A p\|_2}{\|p\|_2} = \sqrt{1 - \max_p
    \frac{\|B^\perp A p\|^2_2}{\|p\|^2_2}}.
  \]
  Consequently,
  \(
  \sin(\angle AB) = \max_p \frac{\|B^\perp A p\|^2_2}{\|p\|^2_2}
  = \|B^\perp A\|^2_2. 
  \)
\end{proof}

\begin{definition}
  Let $\setv(k)$ be the family of sets of $k$ non-empty subsets of
  $V$.  We will use $\disjv(k) \subseteq \setv(k)$ to denote the set
  of $k$ disjoint subsets of $V$: %
  $\curg \in \disjv(k)$ if and only if $\curg\in\setv(k)$ and
  $S\cap T = \es$ for all $S\neq T \in \curg$.
\end{definition}
In order to compare subspaces with $k$-partitions, we need to identify
a canonical representation of the subspaces associated with
$k$-partitions. The most natural representation is to use each basis
vector as the normalized indicator of one of the clusters.
\begin{notation}[Basis Matrices of $k$-Partitions]
  \label{def:kpar-mat}
  Given $k$-subsets $\curg=\{A_1,\ldots, A_k\}$ of $V$, let
  $\mcurg \in \R^{n\times k}$ be the corresponding normalized
  incidence matrix
  $\mcurg \triangleq \big[ \unit{\ind{A_1}} \quad \ldots\quad
  \unit{\ind{A_k}} \big]$.
  We will use $\mcurg^\proj\in \sympm^n$ and
  $\mcurg^\perp\in \sympm^n$ to denote the associated projection
  matrices so that $\mcurg^\proj \mcurg = \mcurg$ and
  $\mcurg^\perp \mcurg = 0$.
\end{notation}
Multiplication with either of the projection matrices $\mcurg^\proj$
and $\mcurg^\perp$ has a natural correspondence with means and the
differences to means:
\begin{myproposition}
  \label{thm:gamma-basic-props}
  If \(\curg\in\disjv\),
  then $\mcurg$ is an orthonormal matrix, \(\mcurg \in \orthk\),
  and \(\mcurg^\perp\) is a Laplacian matrix.
  For any \(Y \in \R^{k\times n}\), $i^{th}$ column of:
  \begin{itemize}
  \item %
    \(Y \mcurg^\proj\)
    is the mean of points in the same cluster with
    $i$ provided $i$ is in any cluster of $\curg$,
    and $0$ otherwise.
  \item %
    \(Y \mcurg^\perp\)
    is the difference between $y_i$
    and its associated center as defined above.
  \end{itemize}
  For example, \(\|Y \mcurg^\perp\|^2_F\) measures the sum of
  squared distances of each point to the center of its cluster or
  origin if it is not in any cluster.
\end{myproposition}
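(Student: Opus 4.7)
The plan is to reduce every claim to a one-line computation built on the identity $\mcurg = \sum_j \unit{\ind{A_j}}\, e_j^T$. First I would verify $\mcurg \in \orthk$ by computing $(\mcurg^T\mcurg)_{ij} = \langle \unit{\ind{A_i}}, \unit{\ind{A_j}}\rangle = \frac{|A_i \cap A_j|}{\sqrt{|A_i|\,|A_j|}}$. Since $\curg \in \disjv$, the sets are pairwise disjoint, so this entry collapses to $\delta_{ij}$, giving $\mcurg^T\mcurg = I_k$. Orthonormality then lets me invoke the remark immediately preceding the proposition to write $\mcurg^\proj = \mcurg\mcurg^T = \sum_j \frac{1}{|A_j|}\,\ind{A_j}\ind{A_j}^T$ and $\mcurg^\perp = I_n - \mcurg\mcurg^T$.

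From this closed form the two column identities for $Y\mcurg^\proj$ and $Y\mcurg^\perp$ follow mechanically. The $i$-th column of $\mcurg^\proj$ is $\frac{1}{|A_j|}\ind{A_j}$ when $i \in A_j$ (and $0$ when $i \notin \cup_j A_j$), so
\[
(Y\mcurg^\proj)_{:,i} \;=\; \frac{1}{|A_j|}\sum_{q \in A_j} y_q,
\]
which is exactly the cluster mean. Subtracting from $Y = Y\cdot I_n$ gives $(Y\mcurg^\perp)_{:,i} = y_i - (Y\mcurg^\proj)_{:,i}$, the point-to-center difference (and $y_i$ itself if $i$ lies in no cluster). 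Summing squared column norms then yields the stated Frobenius-norm interpretation; no cross terms appear because the point-to-center and center vectors lie in complementary subspaces cut out by $\mcurg^\proj$ and $\mcurg^\perp$.

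The remaining step is to identify $\mcurg^\perp$ as a Laplacian, and this is the one point that requires a little care. I would decompose
\[
\mcurg^\perp \;=\; \sum_{j=1}^{k} \frac{1}{|A_j|}\Bigl(|A_j|\,I_{A_j} - \ind{A_j}\ind{A_j}^T\Bigr) \;+\; \sum_{i \notin \cup_j A_j} e_i e_i^T,
\]
where each $j$-summand (extended by zeros on the complement of $A_j$) is precisely the Laplacian of the weighted complete graph on $A_j$ with uniform edge weight $1/|A_j|$. When $\curg$ partitions $V$ the second sum disappears and $\mcurg^\perp$ is literally the Laplacian of a disjoint union of weighted cliques; otherwise the isolated-node contribution sits on rows of all-zero off-diagonals and can be absorbed by viewing those vertices as degenerate components, so the Laplacian claim holds in the relevant sense. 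The main obstacle is only this interpretive step; the rest is essentially bookkeeping from the two displayed formulas for $\mcurg^\proj$ and $\mcurg^\perp$.
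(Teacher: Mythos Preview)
Your argument is correct. The paper states this proposition without proof, treating it as routine background, so there is no ``paper proof'' to compare against; your write-up supplies exactly the kind of direct verification the authors left implicit.

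One small remark on the Laplacian step, which you already flag: when $\curg$ does not cover $V$, the uncovered rows of $\mcurg^\perp$ have row-sum $1$ rather than $0$, so $\mcurg^\perp$ is not a Laplacian in the strict $L=D-A$ sense. Your interpretation (absorbing those vertices as degenerate components, i.e.\ allowing self-loops) is fine, and in any case the only properties the paper actually uses later (PSD, non-positive off-diagonals, diagonal dominance as in Lemma~\ref{thm:minmaxkpar-as-snorm}) survive regardless. So the caveat you give is adequate for every downstream use in the paper.
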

We will measure the distance between sets in a way similar to cosine
distance.
\begin{notation}%
  Given $p, q\in \R^n$, we define $\simfun(p,q)$ as
  $\simfun(p,q)\triangleq 1 - \langle \unit{p} , \unit{q}\rangle^2$.
  Note that $\simfun(p,q)
  =\frac12 \| {\unit{p}}^{\otimes 2} -  {\unit{q}}^{\otimes 2}\|^2
  $.
  For convenience, we will use $\simfun(S,q)$ as
  $\simfun(\ind{S}, q)$. In particular,
  $\simfun(A,B) = 1 - \frac{|A\cap B|^2}{{|A| |B|}}$.
\end{notation}
Our measure of set similarity is closely related to the Jaccard index.
\begin{myproposition}
  \label{thm:simfun-props}
  For any pair of subsets $A, B\subseteq V$:
  \[
    \frac14 \frac{|A\Delta B|}{|A\cup B|}
    \le
     \simfun(A,B) \le \frac{|A\Delta B|}{|A\cup B|}.
  \]
\end{myproposition}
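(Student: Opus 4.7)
The plan is to parameterize everything in terms of $a = |A|$, $b = |B|$, and $c = |A \cap B|$ (with $0 \le c \le \min(a,b)$), using $|A \cup B| = a+b-c$, $|A \Delta B| = a+b-2c$, and $\simfun(A,B) = 1 - c^2/(ab) = (ab-c^2)/ab$. Clearing denominators, the lower bound becomes $ab(a+b-2c) \le 4(a+b-c)(ab-c^2)$ and the upper bound becomes $(a+b-c)(ab-c^2) \le ab(a+b-2c)$. The central algebraic tool I would establish first, by direct expansion of both sides, is the identity
\[
(a+b-c)(ab-c^2) \;=\; ab(a+b-2c) + c(a-c)(b-c),
\]
in which every factor of the slack term $c(a-c)(b-c)$ is nonnegative.

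The lower bound then falls out immediately: the identity forces $(a+b-c)(ab-c^2) \ge ab(a+b-2c)$, and dividing by $ab(a+b-c)$ gives $\simfun(A,B) \ge |A\Delta B|/|A\cup B|$, which implies the stated lower bound with the loose constant $1/4$ replaced by $1$. So the left half of the sandwich is handled with room to spare and does not require any subtlety beyond the identity.

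The upper bound is the real obstacle. The same identity, read in the opposite direction, shows that $\simfun(A,B) \le |A\Delta B|/|A\cup B|$ can hold only when the slack $c(a-c)(b-c)$ vanishes, i.e.\ when $A$ and $B$ are disjoint or one contains the other; for instance $A=\{1,2\}$ and $B=\{1,3\}$ give $\simfun(A,B) = 3/4 > 2/3 = |A\Delta B|/|A\cup B|$, so the literal upper bound is not valid in general. To recover a bound of the same flavor up to a constant, I would reparametrize by $t = 1 - c/\sqrt{ab} \in [0,1]$ and use $\simfun(A,B) = 1 - (1-t)^2 = 2t - t^2 \le 2t$, then bound $t$ by $|A\Delta B|/|A\cup B|$ via the chain $\sqrt{ab} \le \max(a,b) \le a+b-c$, which after a short manipulation yields $t \le (a+b-2c)/(a+b-c)$ and therefore $\simfun(A,B) \le 2\,|A\Delta B|/|A\cup B|$. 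The main difficulty is thus the missing factor of $2$: I would flag the stated upper bound as a likely typo and prove the sharper form just described, which is the strongest inequality of this shape that actually holds and appears to be all that the subsequent applications require.
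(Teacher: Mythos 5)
Your algebra is right, and you have in fact caught an error in the paper: the upper bound $\simfun(A,B)\le |A\Delta B|/|A\cup B|$ is false as stated, exactly as your counterexample $A=\{1,2\}$, $B=\{1,3\}$ shows ($\simfun=3/4$ versus $2/3$). Your identity
\[
(a+b-c)(ab-c^2)\;=\;ab(a+b-2c)+c(a-c)(b-c)
\]
checks out, so the true inequality in that direction is $\simfun(A,B)\ge |A\Delta B|/|A\cup B|$, with equality only in the disjoint or nested cases. The paper's one-line argument for its upper bound (``since $|A\cup B|^2\ge |A||B|$, we see $1-\simfun(A,B)\ge |A\cap B|/|A\cup B|$'') is a non sequitur: that hypothesis only yields $1-\simfun(A,B)\ge \big(|A\cap B|/|A\cup B|\big)^2$, which points the wrong way. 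Your replacement $\simfun(A,B)\le 2\,|A\Delta B|/|A\cup B|$, obtained from $\simfun=2t-t^2\le 2t$ with $t=1-c/\sqrt{ab}$ together with $\sqrt{ab}\le\max(a,b)\le a+b-c$, is valid and is the right fix; the proposition is only used to motivate $\simfun$ as a proxy for Jaccard distance, so the constant is immaterial.

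Your route also differs from the paper's on the half that is true. The paper proves the $\frac14$ lower bound by assuming $\simfun(A,B)\le\eps$ with $|A|\ge|B|$, deducing $|A\cap B|\ge(1-\eps)^2|A|$ and hence $|A\Delta B|\le 4\eps|B|$ (its final displayed inequality has the direction reversed, a typo for $\frac{|A\Delta B|}{|A\cup B|}\le 4\,\simfun(A,B)$). Your identity gives the stronger constant-$1$ lower bound in one line and simultaneously explains why the stated upper bound fails, so it is both shorter and more informative than the paper's case analysis.
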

\begin{proof}
  Since $|A\cup B|^2 \ge {|A| |B|}$, we immediately see that
  $1-\simfun(A,B) \ge \frac{|A\cap B|}{|A\cup B|}$.  For the other
  direction, suppose $\simfun(A,B) \le \eps$ and $|A| \ge |B|$. Then
  \(
  (1-\eps) \sqrt{|A| |B|} \le |A\cap B|
  \)
  which implies
  \[
  \frac{|A\cap B|}{|A|} \ge 
  (1-\eps) \sqrt{\frac{|B|}{|A|}} \ge
    (1-\eps) \sqrt{\frac{|A\cap B|}{|A|}}.
  \]
  Therefore $|A\cap B| \ge (1-\eps)^2 |A|$
  and $|A\Delta B| = |A| + |B| - 2 |A\cap B|
  \le |B| - (1-4 \eps + 2\eps^2)|A|
  \le 4 \eps |B|$. In particular,
  \[
    \frac{|A\Delta B|}{|A\cup B|}
    \ge 4 \simfun(A,B).
  \]
\end{proof}
We will now generalize our set similarity measure to $k$-partitions.
\begin{notation}%
  \label{def:kpar-sim}
  Given $\curg, \newg \in \setv(k)$; we define
  $\simfun(\curg,\newg)$ as:
  \[
  \simfun(\curg,\newg) \triangleq \min_{\pi: \curg\leftrightarrow
    \newg} \max_{S\in \curg} \simfun(S, \pi(S)).
  \]
  We say $A$ (resp. $\curg$) is $\eps$-close to $B$ (resp. $\newg$)
  whenever $\simfun(A,B) \le \eps$ (resp.
  $\simfun(\curg,\newg) \le \eps$).
\end{notation}
Observe that our notion of proximity is a very strong bound. For
example if $\curg$ is $\eps$-close to $\optg$, then any subset
$S \in \optg$ of size $|S| < \frac{1}{\eps}$ has to be preserved
{\em exactly} in $\curg$.

The next theorem says that the similarity measure we use for
$k$-partitions in Notation~\ref{def:kpar-sim} is tightly related to the
spectral norm distance between the corresponding basis.
\begin{mytheorem}
  \label{thm:set-similarity}
  Given $\curg, \newg\in \disjv(k)$;
  $
    \simfun(\curg,\newg) \le 
    \| \mcurg^\perp \mnewg \|^2_2 \le 2 \simfun(\curg,\newg).
  $
  Moreover, after appropriately ordering the columns of
  $\mnewg$, $\|\mcurg - \mnewg\|_2^2 \le 4 \simfun(\curg,\newg)$.
\end{mytheorem}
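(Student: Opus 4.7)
The argument rests on the $k \times k$ matrix $N \triangleq \mcurg^T \mnewg$, whose entries
\[
  N_{ij} = \langle \unit{\ind{A_i}}, \unit{\ind{B_j}}\rangle = \frac{|A_i \cap B_j|}{\sqrt{|A_i||B_j|}} = \sqrt{1 - \simfun(A_i, B_j)}
\]
are non-negative; since the $A_i$'s (resp.\ $B_j$'s) are pairwise disjoint, rows and columns of $N$ each have $\ell_2$-norm at most $1$. The key identity $\mnewg^T \mcurg^\perp \mnewg = I_k - N^T N$ yields
\[
\|\mcurg^\perp \mnewg\|_2^2 = \|I_k - N^T N\|_2 = 1 - \sigma_{\min}^2(N),
\]
so both inequalities reduce to two-sided estimates on $\sigma_{\min}(N)$ in terms of $\simfun(\curg,\newg)$.

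For the lower bound, the plan is to set $\delta = 1 - \sigma_{\min}^2(N)$ and consider the bipartite graph $H$ on $[k] \sqcup [k]$ with edges $(i,j)$ iff $N_{ij}^2 \ge 1 - \delta$, equivalently $\simfun(A_i, B_j) \le \delta$. If $H$ has a perfect matching $\pi$, then $\simfun(\curg, \newg) \le \max_j \simfun(A_{\pi(j)}, B_j) \le \delta$ and we are done. Otherwise, K\"onig's theorem supplies a vertex cover $U \cup W$ of $H$ with $|U| + |W| < k$; writing $R = [k] \setminus U$, $C = [k] \setminus W$ (so $|R| + |C| > k$), every entry of $N_{R,C}$ has square strictly less than $1 - \delta$, and combined with the $\ell_2$ row/column constraints on $N$ this produces a unit vector $v$ supported on a suitable sub-structure with $\|Nv\|_2^2 < 1 - \delta$, contradicting $\sigma_{\min}^2(N) = 1 - \delta$.

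For the upper bound, let $\eps = \simfun(\curg,\newg)$ and permute the columns of $\mnewg$ so the optimal matching becomes the identity (which leaves $\|\mcurg^\perp \mnewg\|_2$ unchanged); now $N_{ii}^2 \ge 1 - \eps$. Elementary counting from $\simfun(A_i, B_i) \le \eps$ gives $|A_i \setminus B_i| \le \eps |A_i|$, $|B_i \setminus A_i| \le \eps |B_i|$, and $|A_i|/|B_i| \in [1-\eps, 1/(1-\eps)]$. For any unit $v \in \R^k$, set $c_j \triangleq v_j/\sqrt{|B_j|}$, so that $w \triangleq \mnewg v$ equals $c_j$ on each $B_j$. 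By \Cref{thm:gamma-basic-props},
\[
  1 - \|Nv\|_2^2 = \|\mcurg^\perp w\|_2^2 = \sum_i |A_i|\,\mathrm{Var}_{u\in A_i}[w(u)] \le \sum_{i\neq j}|A_i \cap B_j|(c_i - c_j)^2,
\]
the inequality replacing each conditional variance by the squared deviation from the matched value $c_i$. Expanding $(c_i - c_j)^2 \le 2(c_i^2 + c_j^2)$ and applying the size and difference bounds above yields $\|\mcurg^\perp \mnewg v\|_2^2 \le 2\eps$; the residual $1/(1-\eps)$ factor produced by the size-ratio is absorbed by symmetrizing via the dual identity $\|\mcurg^\perp \mnewg\|_2 = \|\mnewg^\perp \mcurg\|_2$.

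For the spectral distance statement, with the same column reordering, write $((\mcurg - \mnewg)v)(u) = v_{i(u)}/\sqrt{|A_{i(u)}|} - v_{j(u)}/\sqrt{|B_{j(u)}|}$ for $u \in A_{i(u)} \cap B_{j(u)}$ (where $i(u), j(u)$ are the clusters containing $u$ in $\curg, \newg$). Splitting $\|(\mcurg - \mnewg)v\|_2^2$ into matched ($i(u) = j(u)$, bounded by $O(\eps^2)$ via the size-ratio bound above) and mismatched ($i(u) \neq j(u)$, bounded by $4\eps$ via $(a-b)^2 \le 2(a^2+b^2)$ and the set-difference bounds) contributions gives $\|\mcurg - \mnewg\|_2^2 \le 4\simfun(\curg,\newg)$ to leading order, completing the proof. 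The main obstacle throughout is extracting the sharp constants $2$ and $4$ in the upper bounds: the counting bounds naturally leave a $1/(1-\eps)$ slack, and obtaining the clean constants requires a careful combination of the primal and dual spectral identities together with the combinatorial symmetry between $\curg$ and $\newg$.
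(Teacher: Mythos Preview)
Your approach differs from the paper's in all three parts, and two of the parts have real problems.

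\textbf{Lower bound.} The K\"onig-theorem idea is appealing but the contradiction step is not justified. After obtaining $R,C$ with $|C|>|U|$ and $N_{ij}^2<1-\delta$ on $R\times C$, you can indeed pick a unit $v$ supported on $C$ in the kernel of $N_{U,C}$, so that $\|Nv\|^2=\|N_{R,C}v\|^2$. But the pointwise bound $N_{ij}^2<1-\delta$ together with column norms $\le 1$ does \emph{not} force $\|N_{R,C}v\|^2<1-\delta$; a matrix with small entries and unit columns can still have operator norm $1$. You would need additional structure (e.g.\ control on row norms of $N_{R,C}$, or non-negativity of $v$) that you do not have. The paper avoids this entirely: it defines $\pi_1(S)=\arg\max_{T}\frac{|S\cap T|}{|T|}$ and symmetrically $\pi_2$, shows from the diagonal of $N N^T$ that $\pi_1(S)=T$ forces $|S\cap T|\ge(1-\eps)|T|$, proves each $\pi_i$ is a bijection by disjointness, and then uses a short cycle argument to show $\pi_1=\pi_2^{-1}$, producing the required matching directly.

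\textbf{Upper bound.} Your variance computation, with $(c_i-c_j)^2\le 2(c_i^2+c_j^2)$, yields
\[
\|\mcurg^\perp\mnewg v\|_2^2 \;\le\; 2\eps\sum_j v_j^2 \;+\; 2\eps\sum_i v_i^2\,\frac{|A_i|}{|B_i|}\;\le\; 2\eps+\frac{2\eps}{1-\eps},
\]
which is $\approx 4\eps$, not $2\eps$. The ``symmetrizing via the dual identity'' cannot recover the factor: the estimate is already symmetric in $\curg,\newg$, so swapping roles gives the same bound. The paper instead observes that $\mcurg^\perp$ is a Laplacian, so $\mnewg^T\mcurg^\perp\mnewg$ is a $k\times k$ Laplacian, and Lemma~\ref{thm:minmaxkpar-as-snorm} bounds its spectral norm by twice its largest diagonal entry; each diagonal is $\unit{\ind T}^T\mcurg^\perp\unit{\ind T}\le 1-\langle\unit{\ind T},\unit{\ind S}\rangle^2=\simfun(S,T)\le\eps$, giving the sharp constant $2$ immediately.

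\textbf{Third inequality.} Your matched/mismatched split works in principle but is unnecessary. The paper gets it in one line from the upper bound: with the optimal ordering, $\frac12(\mcurg-\mnewg)^T(\mcurg-\mnewg)=I_k-\frac12(N+N^T)$, hence $\frac12\|\mcurg-\mnewg\|_2^2\le 1-\sigma_{\min}(N)=1-\sqrt{1-\|\mcurg^\perp\mnewg\|_2^2}\le 2\eps$.
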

Proof of Theorem~\ref{thm:set-similarity} is given
in~\Cref{sec:proof-of-set-similarity}.
%
%
%
%

\begin{myproposition}
  \label{thm:sing-max-to-min}
  Given $A, B \in \orthk$, %
  $
     \sigma_{\min}(A^T B) = \sqrt{1 - \| A^\perp B\|_2^2}. 
  $
\end{myproposition}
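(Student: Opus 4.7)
The plan is to reduce everything to a computation of $(A^\perp B)^T(A^\perp B)$ and relate its spectrum to that of $M^T M$ where $M \triangleq A^T B$.

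First I would observe that $M$ is a $k\times k$ matrix whose singular values coincide with the square roots of the eigenvalues of the symmetric PSD matrix $M^T M = B^T A A^T B$. In particular, $\sigma_{\min}(A^T B)^2 = \lambda_{\min}(M^T M)$. Next I would compute the other side: since $A^\perp = I_n - A A^T$ is a projection ($(A^\perp)^2 = A^\perp$ and $A^\perp$ symmetric),
\[
(A^\perp B)^T (A^\perp B) \;=\; B^T A^\perp B \;=\; B^T B - B^T A A^T B \;=\; I_k - M^T M,
\]
using the orthonormality $B^T B = I_k$. Therefore $\|A^\perp B\|_2^2 = \|I_k - M^T M\|_2$.

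The small but essential step is to note that the eigenvalues of $M^T M$ lie in $[0,1]$: since $A,B\in\orthk$, we have $\|M\|_2 = \|A^T B\|_2 \le \|A\|_2 \|B\|_2 = 1$, so $\lambda_i(M^T M) = \sigma_i(M)^2 \in [0,1]$. This means $I_k - M^T M \succeq 0$ and its spectral norm equals its largest eigenvalue, so
\[
\|I_k - M^T M\|_2 \;=\; 1 - \lambda_{\min}(M^T M) \;=\; 1 - \sigma_{\min}(A^T B)^2.
\]
Combining the two equalities yields $\|A^\perp B\|_2^2 = 1 - \sigma_{\min}(A^T B)^2$, which rearranges to the claim.

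I do not expect any real obstacle; the only point that requires a line of justification is that the eigenvalues of $M^T M$ are bounded by $1$ (so that $|1 - \lambda|$ does not need an absolute value), which follows immediately from submultiplicativity of the spectral norm applied to orthonormal matrices. Everything else is a direct expansion using the projection identity $A^\perp = I_n - A A^T$ and $B^T B = I_k$.
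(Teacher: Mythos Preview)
Your proof is correct and follows essentially the same approach as the paper: compute $B^T A^\perp B = I_k - B^T A A^T B$, observe that $\|A^T B\|_2 \le 1$ so the eigenvalues of $B^T A A^T B$ lie in $[0,1]$, and conclude that $\|A^\perp B\|_2^2 = 1 - \sigma_{\min}(A^T B)^2$. You have simply spelled out the eigenvalue bound and the PSD step more explicitly than the paper's two-line version.
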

\begin{proof}
  $B^T A^\perp B = B^T B - B^T A A^T B = I_k - B^T A A^T B$.  Since
  $\|A^T B\|_2 \le 1$, $\|A^\perp B\|_2^2 = 1 - \sigma_{\min}(A^T B)^2$.
\end{proof}
Consider two subspaces with basis $A$
and $B$. If the angle between these two subspaces is small, then one might
intuitively expect that $A A^T$ and $B B^T$ are very close also.
In the next lemma, we make this intuition formal. We also include its
proof for completeness.
\begin{mylemma}[\cite{ss90}]
  \label{thm:spec-rad-bnd}
  \label{thm:sandwich}    
  Given $A, B \in \orthk$; %
  $\| A A^T - B B^T \|_2 =
  \|A^\perp B\|_2$.
\end{mylemma}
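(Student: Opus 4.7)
The plan is to reduce the spectral norm of the symmetric difference $AA^T-BB^T$ to the quantity $\|A^\perp B\|_2$ via an orthogonal-range decomposition, and then invoke \Cref{thm:subspace-sim} to identify the two sides of the claimed identity as a common sine.

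First I would write down the algebraic identity
\[
AA^T - BB^T \;=\; AA^T B^\perp \;-\; A^\perp B B^T,
\]
which follows immediately by substituting $A^\perp = I_n - AA^T$ and $B^\perp = I_n - BB^T$ and cancelling. Call the two summands $M_1 \triangleq AA^T B^\perp$ and $M_2 \triangleq A^\perp BB^T$. The next step is to observe that these two matrices are mutually orthogonal in both ``directions'': using $B^\perp B = 0$ one gets $M_1 M_2^T = AA^T B^\perp B B^T A^\perp = 0$, and using $A^T A^\perp = 0$ one gets $M_1^T M_2 = B^\perp A A^T A^\perp B B^T = 0$.

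From these two orthogonality relations it follows that
\[
(M_1 - M_2)(M_1 - M_2)^T \;=\; M_1 M_1^T + M_2 M_2^T,
\]
so the problem reduces to bounding the spectral norm of the right-hand side. But $M_1 M_1^T = AA^T B^\perp A A^T$ has its column space inside $\spn(A)$, while $M_2 M_2^T = A^\perp B B^T A^\perp$ has its column space inside $\spn(A^\perp) = \spn(A)^\perp$; hence the two summands act on complementary orthogonal subspaces and
\[
\|M_1 M_1^T + M_2 M_2^T\|_2 \;=\; \max\bigl(\|M_1\|_2^2,\, \|M_2\|_2^2\bigr).
\]
The final step is the identification of each term: since $A$ is orthonormal, $\|M_1\|_2 = \|AA^T B^\perp\|_2 = \|A^T B^\perp\|_2 = \|B^\perp A\|_2$, and symmetrically $\|M_2\|_2 = \|A^\perp B\|_2$. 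By \Cref{thm:subspace-sim} these two quantities both equal $\sin(\angle AB)$, so the $\max$ collapses to $\|A^\perp B\|_2$ and the lemma follows.

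The only real content is spotting the decomposition in the first display; once that is in place every subsequent step is a one-line verification. So the main ``obstacle'' is not a hard computation but recognising the identity $AA^T - BB^T = AA^T B^\perp - A^\perp B B^T$ as the right handle, after which the orthogonality of ranges makes the spectral norm split cleanly without needing to diagonalise.
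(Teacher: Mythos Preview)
Your proof is correct. Both you and the paper start from the same quadratic identity---your $(M_1-M_2)(M_1-M_2)^T = AA^TB^\perp AA^T + A^\perp BB^T A^\perp$ is exactly the paper's $(AA^T-BB^T)^2 = A^\proj B^\perp + B^\proj A^\perp$---but you diverge in how you read off the spectral norm. The paper proves the two inequalities separately: for $\le$ it takes a top eigenvector $q$ of the square, projects it to $A^\perp q$, and checks that this is an eigenvector of $A^\perp BB^T A^\perp$ with the same eigenvalue; for $\ge$ it sandwiches with $A^\perp$ on both sides. Your route is more structural: you observe that the two PSD summands are supported on the complementary subspaces $\spn(A)$ and $\spn(A)^\perp$, so the norm of their sum is simply the max of the two norms, and then you invoke \Cref{thm:subspace-sim} to identify $\|B^\perp A\|_2 = \|A^\perp B\|_2$ and collapse the max. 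This gives equality in one stroke rather than two inequalities, and it makes the symmetry between $A$ and $B$ explicit; the paper's eigenvector-chasing is slightly more hands-on but avoids appealing back to the sine identity.
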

\begin{proof} 
  We will prove $\le$ by 
  upper bounding the spectral norm of $(A A^T - B B^T)^2$. Since
  $A A^T= A^{\proj}$ and $B B^T = B^{\proj}$:
  \begin{align}
    (A A^T - B B^T)^2
    = & A^{\proj} + B^{\proj} 
    - A^{\proj} B^{\proj} - B^{\proj} A^{\proj} \notag \\
    = & A^{\proj} B^{\perp} + B^{\proj} A^{\perp}.
    \label{eq:917828}
  \end{align}
  If this matrix is zero, then our claim is trivially true. Suppose
  not. Consider the largest eigenvalue $\sigma$ of \cref{eq:917828}
  and a corresponding eigenvector $q$.
  We have
  $0 \neq \sigma q = (A^{\proj} B^{\perp} + B^{\proj} A^{\perp}) q$,
  which means either $B^{\perp} q\neq 0$ or $A^{\perp} q \neq 0$ (or
  both). Without loss of generality, we may assume
  $A^{\perp} q \neq 0$:
  \[
  \sigma q = (A^{\proj} B^{\perp} + B^{\proj} A^{\perp}) q \implies
  \sigma A^{\perp} q = A^{\perp} B^{\proj} A^{\perp} q.
  \]
  Consequently, $q'\triangleq A^{\perp} q$ is an eigenvector of
  $A^{\perp} B^{\proj} A^{\perp}$ with eigenvalue $\sigma$:
  \[
  A^{\perp} B^{\proj} A^{\perp} q'
  = A^{\perp} B^{\proj} A^{\perp} q = \sigma A^{\perp} q = \sigma q'.
  \]
  In particular, 
  $\|A A^T - B B^T\|^2_2 =
  \|(A A^T - B B^T)^2\|_2 = \sigma \le \|A^{\perp} B^{\proj}
  A^{\perp}\|_2 = \|A^\perp B B^T A^{\perp}\|_2 = \|A^\perp B\|^2$.
  %
  %
  Now we will prove $\ge$. If we multiply both sides of
  \cref{eq:917828} with $A^{\perp}$, we see that
  \[
  (A A^T - B B^T)^2 \succeq 
  A^\perp (A A^T - B B^T)^2 A^{\perp} 
  = A^{\perp} B^{\proj} A^{\perp}
  \]
  which implies 
  \[
  \|(A A^T - B B^T)^2\|_2
  \ge \|A^{\perp} B\|^2_2. 
  \]
\end{proof}

\subsection{Graph Partitioning}
\label{sec:back:graph-par}
Given an undirected graph $G=(V,C)$
with nodes $V$
and non-negative edge weights $C$,
we use \(A_G \in \symm^V\)
and \(L_G \in \sympm^V\)
to denote the adjacency and Laplacian matrices of $G$.
Consider the following $k$-way
graph partitioning problem where the goal is to minimize the maximum
ratio of the total weight of edges cut and the number of nodes inside
among all clusters.
\begin{definition}
  [\minmaxkpar]
  \label{def:minmaxkpar}
  Given an undirected graph $G=(V,C)$
  with nodes $V$
  and non-negative edge weights $C$,
  we define the $k$-way expansion of $G$ as the following:
  \[
  \phi_k(G) \triangleq \min_{\curg \in \disjv(k)} \max_{T\in \curg}
  \frac{C(T,\overline{T})}{|T|}.
  \]
  Here \(C(A,B)\) denotes the total weight of unordered edges
  between \(A\) and \(B\). For fixed \(G\), we will use
  \(\optg \in \disjv(k)\) to refer to the $k$-partition which
  achieves \(\phi_k(G)\).
\end{definition}
At the first glance, our notion of expansion might seem different than
the usual definition given in~\cref{eq:minmaxkpar-def-0}. However they
are indeed the same:
\begin{myproposition}
  \label{thm:expansion-equiv}
  For any $G=(V,C)$ and a $k$-partition of $V$, $\curg \in \disjv(k)$,
  \[
  \max_{T\in \curg} \frac{C(T,\overline{T})}{|T|}
  = 
  \max_{S\in \curg}
  \frac{C(S,\overline{S})}{\min(|S|, |\overline{S}|)}.
  \]
  In particular, $\phi_k(G) = \Phi_k(G)$. 
\end{myproposition}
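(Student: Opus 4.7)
The plan is to prove the equality by showing each side bounds the other. The direction $\max_T C(T,\bar T)/|T| \le \max_S C(S,\bar S)/\min(|S|,|\bar S|)$ is immediate since $\min(|S|,|\bar S|) \le |S|$ for every $S$, so each summand on the left is at most the corresponding summand on the right.

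For the reverse inequality, I would fix a cluster $S^* \in \curg$ achieving the right-hand side maximum and split on whether $|S^*| \le |\bar S^*|$ or not. The easy case is $|S^*| \le |\bar S^*|$: then $\min(|S^*|,|\bar S^*|) = |S^*|$, and the right-hand side value $C(S^*,\bar S^*)/|S^*|$ appears directly as a term on the left, so the left-hand side is at least as large.

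The interesting case is when $|S^*| > |\bar S^*|$, so that $\min = |\bar S^*|$. Here $\bar S^*$ decomposes as the disjoint union $\bigsqcup_{T \in \curg \setminus \{S^*\}} T$, and I would use two observations: first, for each $T \ne S^*$ we have $S^* \subseteq \bar T$, hence $C(T,\bar T) \ge C(T,S^*)$, and summing over $T \ne S^*$ gives
\[
\sum_{T \in \curg \setminus \{S^*\}} C(T,\bar T) \;\ge\; \sum_{T \in \curg \setminus \{S^*\}} C(T,S^*) \;=\; C(S^*,\bar S^*).
\]
Second, by the elementary mediant/pigeonhole inequality, $\max_i a_i/b_i \ge (\sum_i a_i)/(\sum_i b_i)$ for nonnegative $a_i$ and positive $b_i$. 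Applying this to $a_T = C(T,\bar T)$ and $b_T = |T|$ for $T \ne S^*$ yields
\[
\max_{T \in \curg \setminus \{S^*\}} \frac{C(T,\bar T)}{|T|}
\;\ge\; \frac{\sum_{T\ne S^*} C(T,\bar T)}{\sum_{T\ne S^*} |T|}
\;\ge\; \frac{C(S^*,\bar S^*)}{|\bar S^*|},
\]
which is precisely the right-hand side value. Taking the max over all of $\curg$ only makes this larger, completing the reverse inequality.

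There is no real obstacle here; the only mildly nontrivial step is the mediant inequality, which is standard. Taking the minimum over $\curg \in \disjv(k)$ on both sides of the established equality gives $\phi_k(G) = \Phi_k(G)$ as claimed.
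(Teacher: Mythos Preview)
Your proof is correct and follows essentially the same approach as the paper: the trivial direction is identical, and for the reverse you use the same key inequality $C(S^*,\overline{S^*}) \le \sum_{T\neq S^*} C(T,\overline{T})$ together with $\sum_{T\neq S^*}|T|=|\overline{S^*}|$. The paper organizes this slightly more cleanly by showing directly that $C(T',\overline{T'})/|\overline{T'}| \le \phi$ for \emph{every} $T'$ (bounding each summand by $\phi\,|T|$), which avoids your case split on whether $|S^*|\le |\overline{S^*}|$; but the content is the same.
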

\begin{proof}
  For any $S$, $\min(|S|, |\overline{S}|) \le |S|$, therefore
  $\frac{C(S,\overline{S})}{|S|} \le
  \frac{C(S,\overline{S})}{\min(|S|, |\overline{S}|)}$.
  Now we will prove the other direction. Let
  $\phi = \max_{T\in \curg} \frac{C(T,\overline{T})}{|T|}$. 
  For any $T' \in \curg$:
  \[
  C(T', \overline{T'}) \le 
  \sum_{T \in \curg\setminus T'} C(T,\overline{T})
  \le \phi \sum_{T\in \curg\setminus T'} |T|
  = \phi |V\setminus T'|.
  \]
  Consequently,
  \( \frac{C(T', \overline{T'})}{|V\setminus T'|} 
  = \frac{C(T', \overline{T'})}{|\overline{T'}|} 
  \le \phi.  \)
  Recall that
  $\frac{C(T', \overline{T'})}{|T'|} \le \phi$, so \( 
  \frac{C(T', \overline{T'})}{\min(|T'|, |\overline{T'}|} \le \phi\).
\end{proof}
We can capture the objective function of $\phi_k$ using the spectral
norm, within a factor of $2$:
%
\begin{mylemma}
  \label{thm:minmaxkpar-as-snorm}
  Given $\curg\in \disjv(k)$,
  \(
  \frac12 \| \mcurg^T L \mcurg \|_2
  \le 
  \max_{T\in \curg} \frac{C(T,\overline{T})}{|T|}
  \le
  \| \mcurg^T L \mcurg \|_2.
  \)
\end{mylemma}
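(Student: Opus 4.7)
The plan is to compute the entries of $M := \mcurg^T L \mcurg$ explicitly, observe that $M$ is PSD, and then bound both sides separately. Since $\mcurg$ has columns $\unit{\ind{T_i}} = |T_i|^{-1/2}\ind{T_i}$, the standard Laplacian identities $\ind{S}^T L \ind{S} = C(S,\overline{S})$ and $\ind{S}^T L \ind{T} = -C(S,T)$ for disjoint $S,T$ give
\[
M_{ii} = \frac{C(T_i,\overline{T_i})}{|T_i|}, \qquad M_{ij} = -\frac{C(T_i,T_j)}{\sqrt{|T_i||T_j|}} \ (i \neq j).
\]
Moreover $z^T M z = (\mcurg z)^T L (\mcurg z) \ge 0$ for every $z$, so $M \succeq 0$.

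The right inequality $\max_T C(T,\overline{T})/|T| \le \|M\|_2$ is then immediate: because $M$ is PSD, $\|M\|_2 = \lambda_{\max}(M) \ge e_i^T M e_i = M_{ii}$ for each $i$, and we take the max over $i$.

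For the left inequality $\|M\|_2 \le 2 \phi$, where $\phi := \max_{T\in\curg} C(T,\overline{T})/|T|$, I would fix an arbitrary unit vector $z \in \R^k$ and set $w := \mcurg z$, so $w$ is piecewise constant with value $z_i/\sqrt{|T_i|}$ on $T_i$ and $0$ on $U := V \setminus \bigcup_i T_i$. Then $z^T M z = w^T L w = \sum_{uv\in E} C(uv) (w_u - w_v)^2$. Edges inside a single $T_i$ or inside $U$ contribute zero, edges between $T_i$ and $T_j$ contribute $C(T_i,T_j)\bigl(z_i/\sqrt{|T_i|} - z_j/\sqrt{|T_j|}\bigr)^2$, and edges between $T_i$ and $U$ contribute $C(T_i,U)\, z_i^2/|T_i|$. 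Using $(a-b)^2 \le 2(a^2+b^2)$ on the cross-cluster terms and summing symmetrically, the total is at most
\[
2 \sum_i \frac{z_i^2}{|T_i|} \Bigl( \sum_{j \neq i} C(T_i,T_j) + C(T_i,U) \Bigr) = 2 \sum_i \frac{z_i^2}{|T_i|} C(T_i,\overline{T_i}) \le 2\phi \|z\|_2^2 = 2\phi.
\]
Taking the supremum over unit $z$ gives $\|M\|_2 \le 2\phi$.

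No step looks hard; the only subtlety is that the factor of $2$ in the lemma comes precisely from the $(a-b)^2 \le 2(a^2 + b^2)$ expansion, and one has to be careful to account for edges that touch the ``leftover'' set $U$ (recall that $\curg \in \disjv(k)$ is only required to be disjoint, not to cover $V$), but those contribute in the same way and are absorbed into the bound $C(T_i, \overline{T_i})$.
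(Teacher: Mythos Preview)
Your proof is correct and follows essentially the same approach as the paper: the paper writes $\mcurg = J D^{-1/2}$ and uses that $J^T L J$ is a Laplacian, hence $J^T L J \preceq 2W$ where $W$ is its diagonal---this matrix inequality is precisely your $(a-b)^2 \le 2(a^2+b^2)$ step applied entrywise. Your explicit treatment of the leftover set $U = V\setminus\bigcup_i T_i$ is in fact slightly more careful than the paper, which calls $J^T L J$ a Laplacian (literally true only when $\curg$ covers $V$, though the needed bound still holds by diagonal dominance).
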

\begin{proof}
  Let
  \(\phi\triangleq \max_{T\in \curg} \frac{C(T,\overline{T})}{|T|}\).
  We need to prove
  \( \phi \le \sigma_{\max} (\mcurg^T L \mcurg) \le 2 \phi.  \)
  The lower bound is trivial, so we only give the proof of upper
  bound.
  Note $\mcurg = J D^{-1/2}$ where $D$ is a matrix whose diagonals are
  $|T|$ for $T\in \curg$ and the columns of $J$ are indicator vectors
  for every $T\in \curg$. Then
  \( \mcurg^T L \mcurg = D^{-1/2} J^T L J D^{-1/2}.  \)
  Define $W$ as the matrix which is equal to $J^T L J$ along its
  diagonals and $0$ everywhere else. Since $J^T L J$ is a Laplacian
  matrix $ J^T L J \preceq 2 W $. Therefore:
  \[
  \mcurg^T L \mcurg \preceq 2 D^{-1/2} W D^{-1/2}.
  \]
  \( D^{-1/2} W D^{-1/2}\)
  is diagonal whose entries are %
  \( \frac{\ind{T}^T L \ind{T}}{|T|} = \frac{C(T,\overline{T})}{|T|}
  \le \phi \) over all $T\in\curg$.  Consequently,
  \[
  \sigma_{\max}(\mcurg^T L \mcurg) \le 2 \sigma_{\max}( D^{-1/2} Z D^{-1/2} )
  \le 2 \phi.  
  \]
\end{proof}
Given Lemma~\ref{thm:minmaxkpar-as-snorm}, a simple relaxation for
$\phi_k(G)$ (the basic SDP relaxation) is the following:
\begin{align}
\min{\| Q^T L Q \|_2}\ \st\ Q^T Q = I_k. \label{eq:minmaxkpar-sdp}
\end{align}
Note that for any $\curg\in\disjv(k)$, $\mcurg$ is feasible and
\cref{eq:minmaxkpar-sdp} is indeed a relaxation. Moreover the
Courant-Fischer-Weyl principle implies that the optimum value of
\cref{eq:minmaxkpar-sdp} is $\lambda_k$ with the corresponding optimal
solution being the smallest $k$-eigenvectors of $L$. Therefore:
\begin{align}  
  \lambda_k \le \phi_k(G).
  \label{eq:lambda-phi}
\end{align}
\section{Our Contributions}
\label{sec:contributions}
%
We re-state our main problem. 
Given a $k$-by-$n$ matrix
$Y: Y^T \in \orthk$ of the form \(Y=[y_1,\ldots, y_n]\)
(think of $Y$ as an embedding of $n$ points in $\R^k$), find a
$k$-partition $\curg\in \disjv(k)$ so as to minimize the total
variance under any direction:
\begin{align}
  \min_{\curg} \max_{z \in \R^k: \|z\|_2=1}
  \sum_{S\in \curg}
  \sum_{u\in S}
  \langle z , y_u - c_S \rangle^2
  .\label{eq:main-goal-1}
\end{align}
Here $c_u$ is the mean of points in the same cluster with $u$ if one
exists, and $c_u = 0$ otherwise. This is the problem of clustering
with spectral norm~\cite{kk10}.
\begin{remark}[Covering all points]
  For simplicity, we allow some points to be left uncovered by any set
  in $\curg$. 
  However the same guarantees still hold even if we require $\curg$ to
  cover all points: We arbitrarily assign uncovered points to clusters 
  while making sure that the relative cluster sizes do not change. 
  This procedure changes the approximation ratio by a factor of $2$. 
\end{remark}
We can express \cref{eq:main-goal-1} more succinctly as the following:
\begin{align}
  \min_{\curg} \| Y \mcurg^\perp \|_2^2
  \overset{\textrm{Proposition~\ref{thm:gamma-basic-props}}}{=}
  \cref{eq:main-goal-1}
  \label{eq:main-goal}
\end{align}
There are two closely related problems, whose optimum is within square
root of \cref{eq:main-goal-1} (Lemma~\ref{thm:sandwich}): 
\begin{itemize}
\item Finding a $k$-by-$k$ rotation matrix $R: R^T R = R R^T = I_k$
  and a $k$-partition $\curg\in \disjv(k)$ so as to minimize the
  following:
  \begin{align}
    \min_{R, \curg} \| R Y - \mcurg \|_2. \label{eq:main-goal-r}
  \end{align}
\item Approximate the Gram matrix of $Y$, $Y^T Y$, using block
  diagonal matrices with each block being constant. This is equivalent
  to:
  \begin{align}
    \min_{\curg} \| Y^T Y - \mcurg \mcurg^T \|_2. \label{eq:main-goal-r}
  \end{align}
\end{itemize}
Our main contribution is a new spectral clustering algorithm whose
pseudo-code is given through
\Cref{alg:unravel,alg:find-cluster,alg:boost,alg:round-vec,alg:spectral-clustering}.
We prove the following guarantee on its outputs.
\begin{mytheorem}[Restatement of Theorem~\ref{thm:spectral-clustering}]
  Let $\optg \in \disjv(k)$ with $\|Y \moptg^\perp\|^2_2 \le O(\eps)$.
  Then $\newg \gets \textsc{SpectralClustering}(Y)$ is a $k$-partition
  so that $\newg \in \disjv(k)$, and it is $O(\sqrt{\eps})$ close to
  both $\optg$ and $Y$:
  \[
  \simfun(\optg, \newg) \le O( \sqrt{\eps} )
  \quad\mbox{and}\quad
  \| Y \mnewg^\perp \|^2_2 \le O(\sqrt{\eps}).
  \]
\end{mytheorem}
\begin{remark}[Small Clusters]
  Our main guarantee as stated in Theorem~\ref{thm:spectral-clustering} works
  for any cluster size. For example, consider the case of some optimal
  cluster $T\in \optg$ having size $|T| \le O\big(1/\sqrt{\eps}\big)$.
  For such cluster, any $S$ with $\simfun(S,T) \le O(\sqrt{\eps})$ has
  to be exactly equal to $T$.

  In other words, our algorithm will recover any $T \in \optg$ with
  $|T| \le O(1/\sqrt{\eps})$ {\bf exactly}.
\end{remark}
As an easy consequence, we show how to approximate a graph as a
disjoint union of expanders (provided one exists) in polynomial time.
\begin{mycorollary}[Restatement of Corollary~\ref{thm:app-graph-par}]
  Given a graph $G$, if there exists $\optg \in \disjv(k)$ such that
  Laplacian of $G$ is $\eps$-close (in spectral norm) to the Laplacian
  corresponding to the disjoint union of normalized cliques on each
  $T \in \optg$:
  \[
    \| L - \moptg^\perp \|_2 \le \eps, 
  \]
  then in polynomial time, we can find $\curg\in\disjv(k)$ which is
  $O\big(\sqrt{\eps}\big)$-close to $\optg$ and $G$:
  \[
    \big\| L - \mcurg^\perp \|_2 \le O({\eps}^{1/4}).
  \]
\end{mycorollary}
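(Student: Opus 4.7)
The plan is to reduce this corollary to the main spectral clustering guarantee by taking the bottom $k$ eigenvectors of $L$ as the input subspace and then transferring the $\simfun$-closeness conclusion back to a Laplacian/spectral distance via the projection identities already established.

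\textbf{Step 1: Choice of subspace.} Let $Y \in \R^{k\times n}$ with $Y^T \in \orthk$ be a matrix whose rows are the bottom $k$ orthonormal eigenvectors of $L$. This can be computed in polynomial time. Writing $L = \moptg^\perp + E$ with $\|E\|_2 \le \eps$, Weyl's inequality together with the fact that $\moptg^\perp$ is a projection of corank $k$ gives $\lambda_i(L) \le \eps$ for $i\le k$, so $\|YLY^T\|_2 \le \eps$. Then
\[
  Y \moptg^\perp Y^T = Y L Y^T - Y E Y^T,
\]
so $\|Y\moptg^\perp Y^T\|_2 \le 2\eps$. Since $\moptg^\perp$ is idempotent, $\|Y\moptg^\perp\|_2^2 = \|Y\moptg^\perp Y^T\|_2 \le 2\eps$. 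This is precisely the hypothesis required by Theorem~\ref{thm:spectral-clustering} with parameter $O(\eps)$.

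\textbf{Step 2: Invoke the main theorem.} Run $\curg \gets \textsc{SpectralClustering}(Y)$, which is polynomial time. The main theorem yields $\curg \in \disjv(k)$ with $\simfun(\optg,\curg) \le O(\sqrt{\eps})$, which already gives the first claim (closeness to $\optg$).

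\textbf{Step 3: From $\simfun$-closeness to spectral closeness of Laplacians.} By the triangle inequality,
\[
  \|L - \mcurg^\perp\|_2 \;\le\; \|L - \moptg^\perp\|_2 + \|\moptg^\perp - \mcurg^\perp\|_2 \;\le\; \eps + \|\moptg\moptg^T - \mcurg\mcurg^T\|_2,
\]
using $\moptg^\perp = I - \moptg\moptg^T$ and likewise for $\curg$. Lemma~\ref{thm:sandwich} converts this last quantity into $\|\moptg^\perp \mcurg\|_2$, and Theorem~\ref{thm:set-similarity} bounds $\|\moptg^\perp \mcurg\|_2^2 \le 2\simfun(\optg,\curg) \le O(\sqrt{\eps})$. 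Hence $\|L-\mcurg^\perp\|_2 \le \eps + O(\eps^{1/4}) = O(\eps^{1/4})$, as required.

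The only nontrivial step is Step 1: verifying that the hypothesis $\|L - \moptg^\perp\|_2 \le \eps$ implies the subspace-closeness hypothesis $\|Y\moptg^\perp\|_2^2 \le O(\eps)$ expected by the spectral clustering theorem. Everything afterward is a clean chain of the identities from Section~\ref{sec:background}. No Davis--Kahan eigengap argument is actually needed, because the reference matrix $\moptg^\perp$ is a projection whose nonzero eigenvalue is $1$, which already provides a unit gap above $0$; all the work is absorbed into the simple two-line bound on $\|Y\moptg^\perp Y^T\|_2$.
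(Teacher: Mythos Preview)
Your proof is correct and follows essentially the same route as the paper. The paper simply packages Steps~1--3 as an application of Theorem~\ref{thm:app-mtx-par} to the matrix $X = I - L$ (using $\|L - \moptg^\perp\|_2 = \|(I-L) - \moptg^\proj\|_2$), but once that theorem is unrolled it is exactly your argument: take $Y$ to be the bottom $k$ eigenvectors of $L$, use Weyl to get $\|Y\moptg^\perp\|_2^2 \le 2\eps$, invoke Theorem~\ref{thm:spectral-clustering}, and finish with the triangle inequality together with Lemma~\ref{thm:sandwich} and Theorem~\ref{thm:set-similarity}.
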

Next we significantly improve the known bounds for recovering a
$k$-partition when all clusters have small expansion as
in Definition~\ref{def:minmaxkpar}.
Previous spectral clustering algorithms only guarantee recovering each
$T \in \optg$ when the ${(k+1)}^{st}$ smallest eigenvalue,
$\lambda_{k+1}$, of the associated Laplacian matrix for $G$ satisfies
\[\lambda_{k+1} > \Omega(k \cdot \phi_k).\] Our new algorithm significantly
relaxes this requirement to $\lambda_{k+1} > \Omega(\phi_k)$.
\begin{mytheorem}[Restatement of Theorem~\ref{thm:app-min-max-exp}]
  Given a graph $G$ with Laplacian matrix $L$, let $\curg$ be the
  $k$-partition obtained by running~\Cref{alg:spectral-clustering} on
  the smallest $k$ eigenvectors of $L$.
  Then:
  \[
    \Delta(\curg, \optg) \le O\Bigg(\sqrt{\frac{\phi_k}{\lambda_{k+1}}}\Bigg).
  \]
\end{mytheorem}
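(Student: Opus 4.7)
The plan is to reduce this to the restated SpectralClustering theorem by bounding $\|Y\moptg^\perp\|_2^2$ for the input $Y$ in terms of the Rayleigh-quotient/eigenvalue-gap ratio $\phi_k/\lambda_{k+1}$. Let $U \in \orthk$ denote the $n$-by-$k$ matrix whose columns are the smallest $k$ eigenvectors of $L$; then the algorithm's input is $Y = U^T$, a $k$-by-$n$ matrix with $YY^T = I_k$. By the main theorem, if we can show $\|Y\moptg^\perp\|_2^2 \le O(\phi_k/\lambda_{k+1})$, then the returned $\curg$ satisfies $\simfun(\optg,\curg) \le O\bigl(\sqrt{\phi_k/\lambda_{k+1}}\bigr)$, as required.

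To establish this bound, first translate it into a statement about the angle between the two subspaces. Because $Y^T = U \in \orthk$, Proposition~\ref{thm:subspace-sim} gives $\|Y\moptg^\perp\|_2 = \|\moptg^\perp U\|_2 = \|U^\perp \moptg\|_2$. Let $U_\perp \in \R^{n\times(n-k)}$ be an orthonormal basis of the complementary eigenspace (corresponding to eigenvalues $\lambda_{k+1}\le\cdots\le\lambda_n$), so that $U^\perp = U_\perp U_\perp^T$, and set $B \triangleq U_\perp^T \moptg$ so that $\|U^\perp \moptg\|_2 = \|B\|_2$.

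Next, introduce the spectral gap via the Rayleigh quotient on $\moptg$. Decomposing $L = U\Lambda_{\le k} U^T + U_\perp \Lambda_{>k} U_\perp^T$ and writing $A = U^T \moptg$, one obtains
\[
\moptg^T L \moptg = A^T \Lambda_{\le k} A + B^T \Lambda_{>k} B \succeq \lambda_{k+1} B^T B,
\]
since the first summand is PSD. Taking the operator norm yields $\lambda_{k+1}\|B\|_2^2 \le \|\moptg^T L \moptg\|_2$. Combining with Lemma~\ref{thm:minmaxkpar-as-snorm}, which gives $\|\moptg^T L \moptg\|_2 \le 2\phi_k$, we conclude
\[
\|Y\moptg^\perp\|_2^2 = \|B\|_2^2 \le \frac{2\phi_k}{\lambda_{k+1}}.
\]
Plugging this into the restated SpectralClustering theorem finishes the proof.

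There is no serious obstacle here: the argument is a sin-theta bound of Davis–Kahan flavor made possible because the spectral-norm expansion objective $\|\moptg^T L \moptg\|_2$ dominates the Rayleigh contribution of $\moptg$ on the high-eigenvalue subspace. The only real thing to be careful about is the transpose convention (the algorithm's input $Y$ is $k$-by-$n$ with $YY^T=I_k$, whereas the eigenvector matrix $U$ is $n$-by-$k$), so that the projection identity $\|Y\moptg^\perp\|_2 = \|U^\perp\moptg\|_2$ is applied correctly.
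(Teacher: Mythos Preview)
Your proposal is correct and follows essentially the same route as the paper: both use Lemma~\ref{thm:minmaxkpar-as-snorm} to bound $\|\moptg^T L \moptg\|_2 \le 2\phi_k$, combine this with the operator inequality $\moptg^T L \moptg \succeq \lambda_{k+1}\,\moptg^T U^\perp \moptg$ coming from the spectral decomposition of $L$, and then identify $\|U^\perp \moptg\|_2^2$ with $\|Y\moptg^\perp\|_2^2$ before invoking Theorem~\ref{thm:spectral-clustering}. The only cosmetic difference is that the paper phrases the last identification through $\sigma_{\min}(U^T\moptg)$, whereas you cite Proposition~\ref{thm:subspace-sim} directly.
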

Finally, we show that any approximation algorithm for \minmaxkpar\
implies the same approximation bound for the spectral clustering
problem restricted to orthonormal matrices. In other words, the
spectral clustering problem is a {\em prerequisite} for approximating
\minmaxkpar\ {\em even} on graphs whose normalized Laplacian matrix
has its $(k+1)^{st}$ eigenvalue $\lambda_{k+1}$ larger than a
constant.
\begin{mytheorem}[Restatement of Theorem~\ref{thm:y-to-x}]
  Given \(Y: Y^T \in \orthk\),
  let \(\optg \triangleq \argmin_{\curg} \|Y \curg^\perp\|_2^2\)
  with $\eps = \|Y \optg^\perp\|^2_2$. Then there exists a weighted,
  undirected, regular graph $X$, whose normalized Laplacian matrix has
  its $(k+1)^{st}$ smallest eigenvalue
  at least \( \lambda_{k+1} \ge 1 - O(\sqrt{\eps}) \) such that:
  \begin{itemize}
  \item Each $T\in \optg$ has small expansion, $\phi_X(T) \le O(\eps)$;
  \item If $\curg \in \disjv(k)$ is a $k$-partition with
    $\max_{S\in \curg} \phi_X(S) \le \delta$, then
    $
      \simfun(\curg, \optg) \le O(\delta + \sqrt{\eps}).
    $
  \end{itemize}
  Moreover such $X$ can be constructed in polynomial time.
\end{mytheorem}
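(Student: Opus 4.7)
The approach is to build $X$ so that its normalized adjacency matrix $\tilde M$ sits within spectral-norm distance $O(\sqrt{\eps})$ of the rank-$k$ projector $Y^T Y$. The crucial starting observation is that combining $\|Y\moptg^\perp\|_2^2 = \eps$ with \Cref{thm:sandwich} gives $\|Y^T Y - \moptg \moptg^T\|_2 = \sqrt{\eps}$. The second matrix is already a perfectly valid normalized adjacency: it is block diagonal with the block on $T \in \optg$ equal to $\frac{1}{|T|}$ times the all-ones matrix, making it the $1$-regular disjoint union of weighted cliques on $\optg$. The construction of $\tilde M$ then starts from $Y^T Y$, zeros out its (small) negative off-diagonal entries, and adjusts the diagonal self-loops so that every row sums to a common constant $d$, producing a non-negative symmetric $d$-regular weighted adjacency. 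Provided the construction achieves $\|\tilde M - Y^T Y\|_2 \le O(\sqrt{\eps})$, Weyl's inequality applied to the $\{0,1\}$-spectrum of $Y^T Y$ yields $\lambda_{k+1}(\mathcal{L}_X) = 1 - \lambda_{k+1}(\tilde M) \ge 1 - O(\sqrt{\eps})$.

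For the first bullet, fix $T \in \optg$. The cut $C_X(T, \overline T) = \ind{T}^T \tilde M \ind{\overline T}$ is controlled using the identity $\ind{T}^T \moptg\moptg^T \ind{\overline T} = 0$ (since $\ind{T}$ and $\ind{\overline T}$ lie on disjoint diagonal blocks of $\moptg\moptg^T$) together with $\|Y^T Y - \moptg\moptg^T\|_2 = \sqrt{\eps}$ and $\|\tilde M - Y^T Y\|_2 \le O(\sqrt{\eps})$; combined with the explicit non-negativity-preserving form of $\tilde M$ on off-diagonal blocks, this gives $\phi_X(T) = C_X(T,\overline T)/|T| \le O(\eps)$. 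For the second bullet, suppose $\curg \in \disjv(k)$ has $\max_S \phi_X(S) \le \delta$. By \Cref{thm:minmaxkpar-as-snorm} we have $\|\mcurg^T L_X \mcurg\|_2 \le 2\delta$, and since $X$ is $d$-regular with $d = \Theta(1)$ by construction, this translates to $\|\mcurg^T \mathcal{L}_X \mcurg\|_2 \le O(\delta)$. Combined with the spectral gap $\lambda_{k+1}(\mathcal{L}_X) \ge 1 - O(\sqrt{\eps})$, a standard Davis--Kahan estimate (which follows quickly from \Cref{thm:sing-max-to-min} and \Cref{thm:sandwich}) yields $\|V_k^\perp \mcurg\|_2^2 \le O(\delta)$, where $V_k$ denotes the bottom-$k$ eigenspace of $\mathcal{L}_X$. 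Because $\|V_k V_k^T - Y^T Y\|_2 \le O(\sqrt{\eps})$ by another application of \Cref{thm:sandwich} to the perturbation $\tilde M - Y^T Y$, and $\|Y^T Y - \moptg\moptg^T\|_2 = \sqrt{\eps}$, the triangle inequality on projector differences gives $\|\moptg^\perp \mcurg\|_2^2 \le O(\delta + \sqrt{\eps})$, and \Cref{thm:set-similarity} converts this to $\simfun(\curg, \optg) \le O(\delta + \sqrt{\eps})$.

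The main obstacle is the construction step itself. The bound $\|\tilde M - Y^T Y\|_2 \le O(\sqrt{\eps})$ is delicate because the natural Frobenius-norm control $\|Y^T Y - \moptg \moptg^T\|_F^2 \le 2k\eps$ (from the rank-$2k$ structure of the difference) loses a $\sqrt{k}$ factor when converted to a spectral-norm bound on the negative-entry part. Overcoming this likely requires a more refined construction than naive entrywise clipping---for instance, vertex-splitting each $u$ into signed copies $u^{\pm}$ so that negative cross-inner-products $\langle y_u, y_v\rangle$ are realized as positive edges between opposite signs, or adding a very small auxiliary regularizing adjacency that lifts every entry above zero while perturbing the top-$k$ eigenspace only negligibly. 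Once the spectral-norm closeness is established, all three bullets then follow cleanly from the linear-algebraic infrastructure already developed in \Cref{sec:background}.
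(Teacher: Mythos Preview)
Your high-level strategy---make the normalized adjacency spectrally close to $Y^TY$, then read off the eigenvalue gap and expansion bounds---is sound, and your analysis of the two bulleted conclusions would go through once the construction is in hand. But the construction is precisely where you leave a real gap, and you acknowledge this yourself: naively zeroing the negative off-diagonals of $Y^TY$ and patching the diagonal only gives Frobenius control of order $\sqrt{k\eps}$ on the perturbation, which is $\sqrt{k}$ too weak for the spectral-norm bound you need. Your suggested fixes (signed vertex splitting, an auxiliary regularizer) are plausible but you do not carry any of them out, so as written the proof is incomplete at exactly the step that distinguishes this theorem from a routine corollary of \Cref{thm:sandwich}.

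The paper sidesteps the entire obstacle with a different device: rather than building $X$ by hand from $Y^TY$, it poses an SDP whose feasible region is the set of symmetric matrices that are simultaneously (i) bounded above by $Y^\proj + \sqrt{\eps}\,Y^\perp$ in the Loewner order, (ii) bounded below on the $Y$-subspace by $(1-\eps)I_k$, and (iii) doubly stochastic, diagonally dominant, PSD, with trace $k$. Constraint (iii) forces any feasible $X$ to already be a valid $1$-regular weighted adjacency; constraint (i) directly gives $\sigma_{k+1}(X)\le\sqrt{\eps}$ and hence $\lambda_{k+1}\ge 1-\sqrt{\eps}$; and constraint (ii) gives $\|Y(I-X)Y^T\|_2\le\eps$, from which both the $\phi_X(T)\le O(\eps)$ bound and the reverse implication follow by the same linear algebra you use. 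Feasibility of the SDP is certified by plugging in $X=\moptg^\proj$, which satisfies all constraints thanks to \Cref{thm:sandwich}. The point is that the SDP lets one ask for the graph properties (nonnegativity, regularity) and the spectral closeness \emph{simultaneously} as linear/semidefinite constraints, so there is no need to repair negative entries after the fact and no $\sqrt{k}$ loss ever appears. Your entrywise-clipping route may be salvageable, but as proposed it is not a proof, and the paper's approach is both shorter and avoids the difficulty entirely.
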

\section{Our Algorithm}
\label{sec:algo}
The pseudo-code of our clustering algorithm and its sub-procedures are
listed
in~\Cref{alg:unravel,alg:find-cluster,alg:boost,alg:round-vec,alg:spectral-clustering}.
The main procedure is invoked by
$\newg=\textsc{SpectralClustering}(Y)$
(\Cref{alg:spectral-clustering}), where $Y$ is a $k$-by-$n$ matrix $Y$
such as the smallest $k$-eigenvectors of some Laplacian matrix. The
output $\newg$ is a $k$-partition close to $Y$. We use
$\optg=(T_1,\ldots,T_k)$ to denote the closest $k$-partition to $Y$.
We will refer to $T$'s as {\em true}  clusters.

\subsection{Intuition}
\label{sec:algo:int}
First we start with the discussion of some of the main challenges
involved in spectral clustering and the intuition behind the major
components of our algorithm.
\vspace{1ex}
\paragraph{Finding a Cluster.}
Since there are $k$ directions and $k$ clusters, we can think of each
direction being associated with one of the clusters. Moreover, for one
of the true clusters $T$, the {\em total} correlation of its center
with remaining directions will be very small. By utilizing this
intuition, we can easily find such a subset, say
$S$~(\Cref{alg:find-cluster}). However this property need not be true
for all $T$'s: Even though each $T$ will be at most $\eps$-close to
every non-associated direction, the {\em total} correlation might be
$(k-1) \eps \gg 1$, thus overwhelming the correlation it had with its
associated direction. In fact, this is the reason why $k$-means type
procedures will fail to find every cluster when $\eps > 1/k$. To
remedy this, each time we find $S$, we can try to ``peel'' it off. A
natural approach is to project the columns of $Y$ onto the orthogonal
complement of the center of $S$. Similar ideas were used before in the
context of learning mixtures of anisotropic Gaussians~\cite{bv08} and
column based matrix reconstruction~\cite{dr10,gs12-col}. After
projection, we obtain a new $(k-1)$-by-$n$ orthonormal matrix, $Z$,
corresponding to remaining $k-1$ clusters.
\vspace{1ex}
\paragraph{Boosting.}
Unfortunately we can not iterate the above approach: No matter how
accurate we are in $S$, there will be some error: If $Y$ were
$\eps$-close to $(T_1,\ldots, T_k)$, then we can at best guarantee
that $Z$ is $2 \eps$-close to $(T_2,\ldots, T_k)$.
After $k$ iterations, our error will be $2^{O(k)} \eps$, which is much
worse than $k$-means! In our algorithm, we keep the error from
accumulating via a boosting step~(\Cref{alg:boost}). 
\vspace{1ex}
\paragraph{Unraveling.}
Even with boosting, there remains one issue: The clusters we found may
overlap with one another. Unlike other distance based clustering
problems such as $k$-means, the assignment problem (``which cluster
does this node belong to?'') is quite non-trivial in spectral
clustering {\em even} if we are given all $k$-centers. There is no
simple {\em local} procedure which can figure out the assignment of
node $u$ by only looking at $y_u$ and the cluster centers.
We deal with this issue by reducing the ownership problem to finding a
matching in a bipartite graph~(\Cref{alg:unravel}). Our approach is
very similar to the one used in~\cite{bs06} for a special case of
Santa Claus problem.
Unfortunately, this operation has a {\em cascading effect}: Adding a
new cluster might considerably change the previous clusters and their
centers. Dealing with this challenge is what causes our final
algorithm to be rather involved.
\vspace{1ex}
\paragraph{Final Algorithm.}
The final difficulty we face is that, the boosted cluster might
cannibalize other much smaller clusters. We overcome this issue by
maintaining both estimates for every true cluster: a coarse estimate,
which is the core we originally found; and the finer estimate
obtained after boosting. Due to the cascading effect of
unraveling whenever we add a new cluster, we have to re-compute the
centers and project onto their orthogonal complement at every round. 
\subsection{Overview} %
\label{sec:algo:overview}
Our algorithm proceeds iteratively.  At $r^{th}$ iteration, it finds a
{\em core} $S_r$ for one of the yet-unseen true clusters, say
$T_r$. The main invariant we need from the core set is the
following (Lemma~\ref{thm:find-cluster-correct}):
\begin{enumerate}[(i)]
\item \label{item:alg:overview:1}$S_r$ is noticeably close to $T_r$, say 
  \(\simfun(S_r,T_r) \le \frac{1}{100}.\)
\item \label{item:alg:overview:2} All the remaining $k-r$ true clusters,
  \(\{T_{r+1}, \ldots, T_k\}\),
  have small overlap with $S_r$ in the sense that
  \(|T_{j} \cap S_r| \le \frac{1}{100} |T_{j}|\)
  over all $j> r$.
\end{enumerate}
After having found $S_r$, our algorithm needs to boost $S_r$ to
$\widehat{S_r}$. For boosting to work however (Theorem~\ref{thm:boost-main}),
we need the invariant~\eqref{item:alg:overview:2} to be true for all
$j\neq i$. We do this by
$(U_1,\ldots, U_r)\gets \textsc{Unravel}(S_1,\ldots, S_r)$. Since
$U_i$'s are close to $S_i$'s, each $T_j$ with $j<r$ will still be
mostly overlapping with $S_j$~(Theorem~\ref{thm:unravel-correct}); hence
$T_j$'s with $j < r$ can not overlap with $U_r$ as $U_j$'s are
disjoint. Consequently we can use $U_r$ instead of $S_r$ for boosting
so as to obtain $\widehat{S_r}$. The only invariant we require from
$\widehat{S_r}$ is that it is much {\em closer} to $T_r$ than $S_r$:
\[
  \simfun(\widehat{S_r}, T_r) \le O(\sqrt{\eps}).
\]
By using the centers of boosted sets instead of core sets, we can make
sure that the error does not accumulate after the projection step
$Z \gets (Y \mnewg')^\perp Y$~(Lemma~\ref{thm:progress}).

After $k$ iterations, we apply $\textsc{Unravel}$ to all boosted sets
$(\widehat{S_1},\ldots, \widehat{S_k})$ one last time and output the
result. 

\begin{remark}
  In \Cref{alg:find-cluster,alg:boost}, the last step involves
  computing the top singular vector. In both cases, we have:
  \begin{itemize}
  \item A good initial guess (the indicator vector),
  \item Large separation between $\sigma_1$ and $\sigma_2$.
  \end{itemize}
  Thus, we can simply use power method for $O(\log 1/\eps)$ many
  iterations to compute a sufficiently accurate approximation of the
  top right singular vector, from which we can obtain an approximation
  of the top left singular vector easily. It was previously shown
  in~\cite{gkb13} that power method is sufficient in the context of
  spectral clustering.
\end{remark}
\setlength{\leftmargini}{1em}
\setlength{\leftmarginii}{1.2em}

\begin{program*}[!h]
  \begin{floatrow}[2]
    \floatbox[\captop]{program}[\FBwidth]
    {
      \caption{$\widehat{S} =\textsc{Boost}(Y, S)$.}
      \label{alg:boost}}
    {
      \begin{minipage}[t]{\linewidth}
        \begin{compactenum}
        \item $p \gets$ top left singular vector of $Y_S$.
        \item Return $\textsc{Round}(Y^T p)$.
        \end{compactenum}
      \end{minipage}
    }
    \floatbox[\captop]{program}[\FBwidth]
    {
      \caption{$S=\textsc{Round}(q)$.}
      \label{alg:round-vec}}
    {
      \begin{minipage}[t]{\linewidth}
        \begin{compactenum}
        \item
          $F \gets \big\{  
          \{ u \mid s q_u \ge s q_v \} \mid v \in V, s\in\{\pm 1\}  \}
          $.
        \item Return $\argmax_{S\in F} | \langle q , \unit{\ind{S}} \rangle |$.
        \end{compactenum}
      \end{minipage}
    }   
  \end{floatrow}
  \vspace{0.5ex}
  \begin{floatrow}[2]
    \floatbox[\captop]{program}[\FBwidth]
    {
      \caption{$\newg=\textsc{SpectralClustering}(Y)$.}
      \label{alg:spectral-clustering}}
    {
      \begin{minipage}[t]{\linewidth}
        \begin{compactenum}
        \item For $r\gets 1$ to $\rank(Y)$ do:
          \begin{compactenum} 
          \item
            $\newg' \gets \textsc{Unravel}(\widehat{S_1}, \widehat{S_2},
            \ldots, \widehat{S_{r-1}})$.
          \item $Z \gets (Y \mnewg')^\perp Y$.
          \item $S_r \gets \textsc{FindCluster}( Z )$.
          \item $(U_1,..., U_{r}) \gets \textsc{Unravel}(S_1,..., S_r)$.
          \item $\widehat{S_r} \gets \textsc{Boost}(Y, U_r)$.
          \end{compactenum}
        \item Return $\textsc{Unravel}(\newg)$.
        \end{compactenum}
      \end{minipage}
    }

    \floatbox[\captop]{program}[\FBwidth]
    {
      \caption{$S=\textsc{FindCluster}(Y)$.}      
      \label{alg:find-cluster}}
    {
      \begin{minipage}[t]{\linewidth}
        \begin{compactenum}
        \item Choose $\delta' \in [0,1]$ as the minimum %
          for which the
          following returns some $S$.
        \item For each $c \in V$:
          \begin{compactenum}
          \item $\pi$ be an ordering 
            st
            \(
            \frac{ \| Y_{\pi(i)} - Y_c \| }{ \|Y_{\pi(i)}\| }\nobreak
            \le\nobreak
            \frac{ \| Y_{\pi(i+1)} - Y_c \| }{ \|Y_{\pi(i+1)}\| }.
            \)
          \item
            $m \gets \min \Big\{ j \Big| \sum_{i\le j} \|Y_{\pi(i)}\|^2
            \ge 1 - \delta' \Big\}$.
          \item If $\sum_{j\le m} \|Y_{\pi(j)} - Y_c\|^2 \le \delta'$, then:
            \begin{compactenum}
            \item $S \gets \{\pi(1),\ldots, \pi(m)\}$.
            \item $q \gets$ top right singular vector of $Y_S$.
            \item Return $\textsc{Round}(q)$.
            \end{compactenum}
          \end{compactenum}
        \end{compactenum}        
      \end{minipage}
    }
    
  \end{floatrow}
  \begin{floatrow}[1]
    \floatbox[\captop]{program}[\FBwidth]
    {      
      \caption{$\newg=\textsc{Unravel}_{\delta}(\curg)$ (see~\Cref{fig:unravel:ex}
        for a sample graph construction.)}
      \label{alg:unravel}}
    {
      \begin{minipage}[t]{0.89\linewidth}
        \begin{compactenum}
        \item Choose $\delta' \in [0,1]$ as the minimum for which the
          following returns a matching. %
        \item Construct a bipartite graph $H=(L,R,E)$,
          where left side $L$ is $V$: %
          \begin{compactenum}
          \item For all $S\in \curg$, there is a block
            $B_S$ of $\lceil(1-\delta) |S|\rceil$ identical nodes in $R$.
          \item For all $S\in \curg$ and $u\in S$, there is an edge
            between $u$ and all nodes in $B_S$.
          \end{compactenum}
        \item Find a matching that covers $R$.
        \item
          For all $S\in \curg$, $\widehat{S}\gets$ nodes matched to
          the block $B_{S}$.
          Return $(\widehat{S},...\mid S\in \curg)$.
        \end{compactenum}
      \end{minipage}
    }
  \end{floatrow}
  \vspace{0.5ex}
  \begin{floatrow}[1]
    \killfloatstyle\floatsetup[figure]{style=plain}
    \ffigbox[\Xhsize]
    {
      \caption{The graph constructed by \textsc{Unravel} on input
        $S_1=\{a,b,c,e\}$, $S_2=\{d,e,f,g\}$, $S_3=\{g,h,i,j,k\}$, $S_4=\{k\}$
        for $\delta=\frac14$.}
      \label{fig:unravel:ex}
    }
    {
      \begin{tabular}[h]{cc}
        \includegraphics[width=0.48\textwidth]{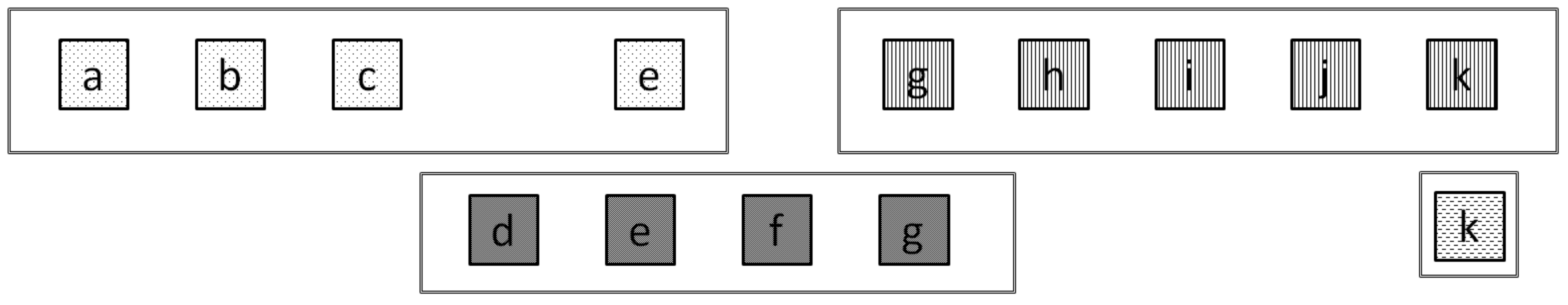}
        &
        \includegraphics[width=0.48\textwidth]{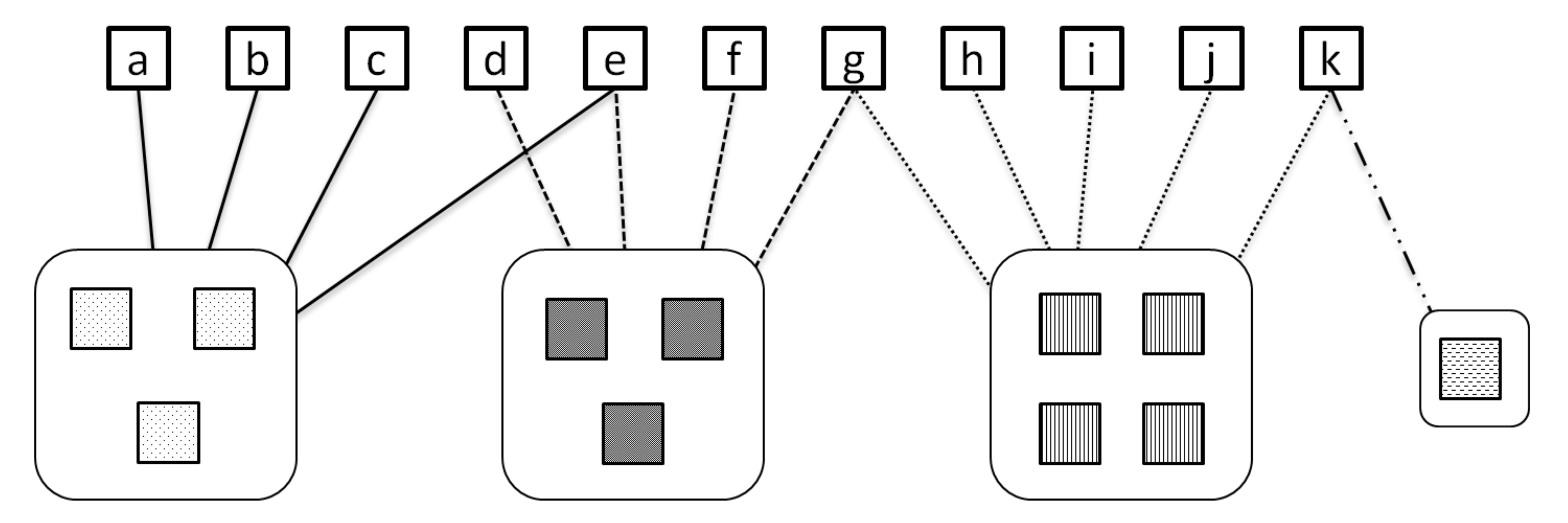}
        \\
        Input $\curg=\{S_1,S_2,S_3,S_4\}$.
        &
        Bipartite Graph.
      \end{tabular}
    }
  \end{floatrow}

\end{program*}

\setlength{\leftmargini}{2.5em}
\setlength{\leftmarginii}{2.2em}

\section{Analysis of the Algorithm}
\label{sec:analysis}
In this section, we prove the correctness of our algorithm. Our main
result is the following. Its proof is given at the end of
\Cref{sec:analysis:round}.
\begin{mytheorem}[Restatement of Theorem~\ref{thm:spectral-clustering}]
  Let $\optg \in \disjv(k)$ with $\|Y \moptg^\perp\|^2_2 \le O(\eps)$.
  Then $\newg \gets \textsc{SpectralClustering}(Y)$ is a $k$-partition
  so that $\newg \in \disjv(k)$, and it is $O(\sqrt{\eps})$ close to
  both $\optg$ and $Y$:
  \[
  \simfun(\optg, \newg) \le O( \sqrt{\eps} )
  \quad\mbox{and}\quad
  \| Y \mnewg^\perp \|^2_2 \le O(\sqrt{\eps}).
  \]
\end{mytheorem}
In order to keep the analysis simple, we make no effort toward
optimizing the constants.
\subsection{Preliminaries}
\label{sec:analysis:prelims}
In the following proposition, we show that for any pair of symmetric
matrices that are close to each other in spectral norm, if there is a
gap between the largest and second largest eigenvalues; then the
largest eigenvectors of both matrices will be very close to each
other. This can also be obtained using Wedin's theorem~\cite{ss90},
but we chose to give a simple and self-contained proof.
\begin{myproposition}
  \label{thm:eig-vec-dist}
  Given $A, B \in \symm^n$ with maximum eigenvectors
  $p, q \in \R^n$:
  \[
  \langle \unit{p} , \unit{q} \rangle^2 \ge 1 - \frac{2 \|A -
    B\|_2}{\sigma_1(A) - \sigma_2(A)}.
  \]
\end{myproposition}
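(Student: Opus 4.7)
The plan is to show this via a direct Rayleigh-quotient comparison, avoiding the full machinery of Wedin's theorem. Let me write $\sigma_1 = \sigma_1(A)$ and $\sigma_2 = \sigma_2(A)$, normalize $\|p\|_2 = \|q\|_2 = 1$, and set $\alpha \triangleq \langle p, q\rangle$, so that we can decompose $q = \alpha p + \beta w$ for some unit vector $w \perp p$ and $\beta^2 = 1 - \alpha^2$. The goal reduces to showing $\alpha^2 \ge 1 - \frac{2\|A-B\|_2}{\sigma_1 - \sigma_2}$.

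First I would \emph{upper bound} $q^T A q$ using spectral information of $A$ alone. Since $p$ is the top eigenvector of the symmetric matrix $A$ and $w \perp p$, the min-max characterization gives $w^T A w \le \sigma_2$, while $p^T A p = \sigma_1$. Expanding,
\[
q^T A q = \alpha^2 \sigma_1 + \beta^2 w^T A w \le \alpha^2 \sigma_1 + (1-\alpha^2) \sigma_2 = \sigma_2 + \alpha^2(\sigma_1 - \sigma_2).
\]

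Next I would \emph{lower bound} $q^T A q$ by transferring to $B$. Since $q$ is the top eigenvector of $B$, we have $q^T B q = \sigma_1(B)$, and by Weyl's inequality $\sigma_1(B) \ge \sigma_1 - \|A-B\|_2$. Combined with $|q^T A q - q^T B q| \le \|A - B\|_2$, this yields
\[
q^T A q \;\ge\; \sigma_1(B) - \|A-B\|_2 \;\ge\; \sigma_1 - 2\|A-B\|_2.
\]

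Chaining the two bounds gives $\alpha^2(\sigma_1 - \sigma_2) \ge (\sigma_1 - \sigma_2) - 2\|A-B\|_2$, and dividing by $\sigma_1 - \sigma_2$ (which we may assume positive, else the right-hand side of the claim is vacuous/non-positive) produces exactly the stated bound. There is no real obstacle here: the only subtlety is that both sides of the Rayleigh-quotient sandwich must reference the same vector $q$, which is why the proof uses Weyl's inequality on $\sigma_1$ rather than computing $p^T B p$; but this costs only the factor $2$ that appears in the numerator.
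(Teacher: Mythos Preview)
Your proof is correct and follows essentially the same Rayleigh-quotient sandwich as the paper's own argument. The only difference is a harmless symmetry: the paper expands $p$ in the eigenbasis of $B$ (bounding $p^T B p$ above via $\sigma_1(B),\sigma_2(B)$ and then pushing both back to $A$ with Weyl), whereas you expand $q$ in the eigenbasis of $A$; both routes land on exactly the same inequality $\alpha^2(\sigma_1-\sigma_2)\ge(\sigma_1-\sigma_2)-2\|A-B\|_2$.
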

\begin{proof} Suppose $\|p\|=\|q\|=1$.  Let
  $\delta \triangleq \|A-B\|_2$ and
  $\theta \triangleq \langle p , q \rangle^2$.  We have
  \begin{align*}    
    \sigma_1(A) = & p^T A p \le
    \delta + 
    \sigma_1(B) \langle q , p \rangle^2
      + \sigma_2(B) \|p^\perp q\|^2 \\
    \le& \delta + (\sigma_1(A) + \delta) \theta
     + (\sigma_2(A) + \delta)(1-\theta)\\
    =& \delta + \theta ( \sigma_1 - \sigma_2) + (\sigma_2 + 2 \delta).
  \end{align*}
  Hence $\theta \ge \frac{ \sigma_1 - \sigma_2 - 2 \delta}{\sigma_1 - \sigma_2}$.
\end{proof}
The main tool we use to identify the clusters will be the eigenvalues
of principal minors of the Gram matrix, $Y^T Y$. Basically,
eigenvalues measure how much true clusters, $T$, overlap with given
principal minor. We make this connection formal in the following
claim:
\begin{myclaim}
  \label{thm:minor-s-props}
  Given $Y \in \R^{m\times n}$, $\curg \in \disjv(r)$ and subset
  $S\subseteq V$, let
  $\rho=\big(|S\cap T|/|T|\ \big|\ T\in \curg\big)$.  Then for any $i$:
  \[
  | \sigma_i^2(Y_S) - \dec{(\rho)}_i | \le \|Y^T Y - \mcurg^\proj\|_2.
  \]
  Here $\dec{(\rho)}_i$ is the $i^{th}$ largest element of $\rho$.
\end{myclaim}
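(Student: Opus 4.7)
The plan is to reduce the claim to Weyl's eigenvalue perturbation inequality applied to the $|S|\times|S|$ principal submatrices of $Y^T Y$ and of $\mcurg^\proj$, and then observe that restricting to a principal submatrix cannot increase the spectral norm of the difference.

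First I would rewrite the quantities in terms of principal minors. Since $Y_S$ is the $m\times |S|$ matrix formed from the columns of $Y$ indexed by $S$, we have
\[
Y_S^T Y_S = (Y^T Y)_{S,S}, \qquad \text{so } \sigma_i^2(Y_S) = \lambda_i\bigl((Y^T Y)_{S,S}\bigr).
\]
Next I would analyze $M\triangleq \mcurg^\proj = \sum_{T\in\curg} \tfrac{1}{|T|}\ind{T}\ind{T}^T$. Restricting to rows and columns in $S$ gives
\[
M_{S,S} = \sum_{T\in\curg} \tfrac{1}{|T|}\,\ind{T\cap S}\,\ind{T\cap S}^T,
\]
where I view $\ind{T\cap S}$ as a vector in $\R^S$. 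Because the sets in $\curg$ are disjoint, the vectors $\{\ind{T\cap S}\}_{T\in\curg}$ are pairwise orthogonal, so $M_{S,S}$ is a sum of rank-one projections along orthogonal directions and its nonzero eigenvalues are exactly $\tfrac{1}{|T|}\|\ind{T\cap S}\|^2 = |S\cap T|/|T|$ for $T\in\curg$. Hence $\lambda_i(M_{S,S}) = \dec{(\rho)}_i$ (interpreting the $i$-th largest of the length-$r$ vector $\rho$ as $0$ when $i>r$).

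Combining these two observations, the claim becomes
\[
\bigl|\lambda_i((Y^TY)_{S,S}) - \lambda_i(M_{S,S})\bigr| \le \|Y^T Y - M\|_2.
\]
Weyl's inequality for Hermitian matrices gives the left-hand side bounded by $\|(Y^T Y)_{S,S} - M_{S,S}\|_2 = \|(Y^T Y - M)_{S,S}\|_2$. Finally, for any symmetric matrix $A$ and subset $S$, writing $A_{S,S} = P_S^T A P_S$ with $P_S$ the orthonormal matrix whose columns are the standard basis vectors indexed by $S$, we have $\|A_{S,S}\|_2 \le \|P_S\|_2^2 \|A\|_2 = \|A\|_2$. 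Applying this to $A = Y^TY - M$ finishes the proof.

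The main (minor) obstacle is the eigenvalue computation of $M_{S,S}$: one has to notice that although $M$ itself is not block-diagonal in the standard basis, its restriction to $S$ decomposes along the disjoint pieces $T\cap S$, turning it into an orthogonal sum of rank-one pieces whose norms are exactly the $\rho_T$'s. Once that is in place, the rest is Weyl's inequality plus the trivial spectral-norm bound for principal submatrices.
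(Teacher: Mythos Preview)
Your proof is correct and follows essentially the same approach as the paper: both reduce to the principal submatrix, identify the eigenvalues of $(\mcurg^\proj)_{S,S}$ as the ratios $|S\cap T|/|T|$, and then apply Weyl's inequality together with the fact that passing to a principal submatrix does not increase the spectral norm. Your write-up is simply more explicit than the paper's (which leaves Weyl's inequality implicit); one small quibble is that $\mcurg^\proj$ actually \emph{is} block-diagonal in the standard basis since the sets in $\curg$ are disjoint, but this does not affect the argument.
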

\begin{proof}
  We have
  $\|Y^T Y - \moptg^\proj\|_2 \ge \|Y_S^T Y_S -
  (\moptg^\proj)_{S,S}\|_2$.
  Observe that the eigenvectors of $(\moptg^\proj)_{S,S}$ are
  $\unit{\ind{T'\cap S}}$ with corresponding eigenvalue
  $\frac{|T'\cap S|}{|T'|}$ over all $T' \in \optg$. Thus $\rho$'s are
  the eigenvalues of $(\moptg^\proj)_{S,S}$.
\end{proof}
Consider a principal minor corresponding to some $S$ whose largest
eigenvalue is large, and second largest eigenvalue is small. The
previous claim implies that there is a unique optimal cluster $T$
which is almost contained by $S$. However this is still not
sufficient: $S$ might be much larger than $T$. In the next lemma, we
show that, one can take the top right singular vector of $Y_S$ and
round (threshold) it to obtain another subset $\widehat{S}\subseteq S$
which is now very close to $T$.
\begin{mylemma}[Initial Guess]
  \label{thm:boost-initial}  
  Given $Y \in \R^{m\times n}$, $\optg \in \disjv(r)$
  with $\|Y^T Y - \moptg^\proj\|_2 \le \delta$, and subset
  $S \subseteq V$,
  let $q \in \R^S$ be the top right singular vector of $Y_S$.
  If we define $\sigma_1 \triangleq \sigma_1^2(Y_S)$ and
  $\sigma_2 \triangleq \sigma_2^2(Y_S)$,
  then the subset $\widehat{S}\subseteq S$ obtained by:
  \[\widehat{S}\gets \textsc{Round}(q)\] satisfies the following.
  For $T$ being $\argmax_{T\in \optg} \frac{|S\cap T|}{|T|}$:
  \[
    \frac{|T\cap \widehat{S}|}{\sqrt{|T| |\widehat{S}|}}
    \ge \sqrt{\sigma_1 - \delta} \Big(1 - \frac{ 4 \delta } { \sigma_1
      - \sigma_2 }\Big)
  \]
  and
  \[
    \forall T'\neq T\in \optg:\ 
    |T' \cap \widehat{S}| \le (\sigma_2 + \delta) |T'|.
  \]
\end{mylemma}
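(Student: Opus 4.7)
My plan is to combine three ingredients already in hand: (i) Claim~\ref{thm:minor-s-props} to describe the spectrum of the principal submatrix $(\moptg^\proj)_{S,S}$, whose eigenvectors are $\{\unit{\ind{T'\cap S}}\}_{T'\in\optg}$ with eigenvalues $\rho_{T'}=|T'\cap S|/|T'|$; (ii) Proposition~\ref{thm:eig-vec-dist} to transfer top-eigenvector closeness from $(\moptg^\proj)_{S,S}$ to $Y_S^T Y_S$; and (iii) a standard thresholding observation showing that $\textsc{Round}(q)$ never loses against the target indicator $\unit{\ind{T\cap S}}$.

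Set $t\triangleq\unit{\ind{T\cap S}}$ and $w\triangleq\unit{\ind{\widehat S}}$. Because principal submatrices do not increase spectral norm, $\|Y_S^T Y_S-(\moptg^\proj)_{S,S}\|_2\le\delta$, so Claim~\ref{thm:minor-s-props} yields $\rho_T\ge\sigma_1-\delta$ (since $T$ is the argmax) and $\rho_{T'}\le\sigma_2+\delta$ for every $T'\ne T$. Proposition~\ref{thm:eig-vec-dist} applied with $A=Y_S^T Y_S$ and $B=(\moptg^\proj)_{S,S}$, whose top eigenvector is $t$, then gives
\[
\langle q,t\rangle^2 \;\ge\; 1-\tfrac{2\delta}{\sigma_1-\sigma_2}.
\]

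The next step is to justify the rounding claim: for any fixed size $k'$ the quantity $|\langle q,\unit{\ind{S'}}\rangle|=|\sum_{u\in S'}q_u|/\sqrt{k'}$ is extremized over $|S'|=k'$ by the top-$k'$ or bottom-$k'$ threshold set of $q$, both of which lie in $F$. Choosing $k'=|T\cap S|$ therefore yields $\langle q,w\rangle^2\ge\langle q,t\rangle^2\ge 1-\tfrac{2\delta}{\sigma_1-\sigma_2}$. After flipping the sign of $q$ so that $\langle q,t\rangle\ge 0$, I would verify that $\langle q,w\rangle\ge 0$ as well: since $q$ is close to the nonnegative vector $t$, any ``lower-threshold'' candidate in $F$ correlates with $q$ in absolute value at most $O\bigl(\sqrt{\delta/(\sigma_1-\sigma_2)}\bigr)$, so the argmax must come from the ``upper-threshold'' branch. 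Setting $\theta_i=\arccos$ of the two (now nonnegative) inner products and applying the spherical triangle inequality $\angle(t,w)\le\theta_1+\theta_2$ gives
\[
\langle t,w\rangle \;\ge\; \cos(\theta_1+\theta_2) \;=\; \cos\theta_1\cos\theta_2-\sin\theta_1\sin\theta_2 \;\ge\; 1-\tfrac{4\delta}{\sigma_1-\sigma_2}.
\]

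Finally, because $q$ is supported on $S$, every upper-threshold set chosen by $\textsc{Round}$ is a subset of $S$, so $\widehat S\subseteq S$ and $|T\cap\widehat S|=|T\cap S\cap\widehat S|$; hence
\[
\frac{|T\cap\widehat S|}{\sqrt{|T||\widehat S|}} \;=\; \langle t,w\rangle\,\sqrt{\rho_T} \;\ge\; \sqrt{\sigma_1-\delta}\,\Bigl(1-\tfrac{4\delta}{\sigma_1-\sigma_2}\Bigr),
\]
which is the first inequality. For the second, any $T'\ne T$ satisfies $|T'\cap\widehat S|\le|T'\cap S|=\rho_{T'}|T'|\le(\sigma_2+\delta)|T'|$. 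The only mildly subtle ingredient is the sign bookkeeping in the rounding step; the remainder is a direct application of the previously established tools.
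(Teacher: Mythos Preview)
Your proposal is correct and follows essentially the same route as the paper's proof: both invoke Claim~\ref{thm:minor-s-props} for the eigenvalue bounds, Proposition~\ref{thm:eig-vec-dist} for $\langle q,t\rangle^2\ge 1-\tfrac{2\delta}{\sigma_1-\sigma_2}$, the maximization in \textsc{Round} for $\langle q,w\rangle^2\ge\langle q,t\rangle^2$, and then combine via what you call the spherical triangle inequality and the paper writes as $\langle t,w\rangle\ge\langle q,t\rangle\langle q,w\rangle-\|q^\perp t\|\,\|q^\perp w\|$ (the same identity). The paper dispatches the sign issue with the one-line remark that when $\delta\le\tfrac14(\sigma_1-\sigma_2)$ both inner products exceed $1/\sqrt2$ in absolute value and hence, since $t,w$ are nonnegative, must share a sign; your version of this step is a bit more verbose but equivalent.
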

\begin{proof} %
  We will use $\sigma_1 \triangleq \sigma_1^2(Y_S) $,
  $\sigma_2 = \sigma_2^2(Y_S)$ and
  $\Delta \triangleq \sigma_1 - \sigma_2$.  Since $q$ is the top right
  singular vector of $Y_S$, $\|Y_S q\|^2 = \sigma_1$ and $\|q\|=1$.

  By Claim~\ref{thm:minor-s-props},
  $ \big| \frac{|T\cap S|}{|T|} - \sigma_1 \big|
  \le \delta $, which implies
  $\frac{|S\cap T|}{|T|} \ge \sigma_1 - \delta$.
  For any $T' \neq T$,
  $\frac{|S\cap T'|}{|T'|} \le \sigma_2 + \delta$.  
  Via Proposition~\ref{thm:eig-vec-dist}, we see that:
  \[
  \langle q , \unit{\ind{T \cap S}}\rangle^2 \ge  1 - \frac{2
    \delta}{\sigma_1 - \sigma_2} 
  \ge 1 - \frac{2 \delta}{\Delta}
  \implies \langle q , \unit{\ind{\widehat{S}}}\rangle^2
    \ge 1 - \frac{2 \delta}{\Delta}.
  \]
  Provided that $\delta \le \frac14 \Delta$, both
  $\langle q , \unit{\ind{T \cap S}}\rangle$ and
  $ \langle q , \unit{\ind{\widehat{S}}}\rangle$ have the same
  sign. Therefore:
  \[
  \langle \unit{\ind{T \cap S}} , \unit{\ind{\widehat{S}}}\rangle
  \ge
    \langle q , \unit{\ind{T \cap S}}\rangle
    \langle q , \unit{\ind{\widehat{S}}}\rangle
    - \| q^\perp \unit{\ind{T \cap S}}\|
      \| q^\perp \unit{\ind{\widehat{S}}}\|
    \ge 1 - \frac{4 \delta}{\Delta}.
  \]
  Consequently, using the fact $\widehat{S} \subseteq S$, we see that
  any $T'\neq T$ has $|T'\cap \widehat{S}| \le (\alpha+\delta) |T'|$
  and $T\cap S \cap \widehat{S} = T\cap \widehat{S}$:
  \[
    1 - \frac{4 \delta}{\Delta}
    \le
     \frac{ |T\cap \widehat{S}| }{\sqrt{|T\cap S| |\widehat{S}| }}
    \le
    \frac{1}{\sqrt{\sigma_1-\delta}}
    \frac{ |T\cap \widehat{S}| }{\sqrt{|T| |\widehat{S}| }}.
  \]
\end{proof}
After we found new clusters, we iterate by projecting $Y$ onto the
orthogonal complement of their center. In the following lemma, we
prove that, as long as the clusters were close to optimal ones, the
projection preserves remaining clusters.
\begin{mylemma}
  \label{thm:progress}
  Given $Y: Y^T \in \orthk$ and $\curg \in \disjv(r)$
  with the cluster centers in $\curg$  being linearly independent,
  suppose there exists $\optg \in\disjv(k)$ of the form
  $\optg = \optg' \uplus \optg''$:
  $\optg' \in \disjv(r), \optg'' \in \disjv(k-r)$ such that:
  \begin{itemize}
  \item $\|\mcurg - \moptg'\|_2^2\le \alpha$.
  \item $\|Y \moptg^\perp\|^2_2 \le \eps$.
  \end{itemize}
  Then $\optg''$ is a good spectral clustering for
  $Z \triangleq (Y \mcurg)^\perp Y$, in the sense that
  $ \| Z (\moptg'')^\perp \|^2_2 \le \eps + \alpha.  $ In addition,
  $\sigma_1(Z) = \ldots = \sigma_{k-r}(Z)=1$, $\sigma_{k-r+1}(Z) = 0$.
\end{mylemma}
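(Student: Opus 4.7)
The plan is to dispatch the singular-value claim with a direct computation and then prove the main inequality via a clean decomposition of $(\moptg'')^\perp$.

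For the singular values, since $Y^T \in \orthk$ gives $YY^T = I_k$, I would compute
$$Z Z^T = (Y\mcurg)^\perp\, Y Y^T\, (Y\mcurg)^\perp = (Y\mcurg)^\perp,$$
which is an orthogonal projector of rank $k - \rank(Y\mcurg)$. The hypothesis that the cluster centers of $\curg$ are linearly independent is exactly the statement that $Y\mcurg$ has full column rank $r$, so $(Y\mcurg)^\perp$ has rank $k-r$. Hence $Z$ has exactly $k-r$ nonzero singular values, all equal to $1$.

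For the main bound, the idea is to exploit the disjointness of $\optg=\optg'\uplus\optg''$. By \Cref{thm:gamma-basic-props}, $\moptg$ has orthonormal columns, so $\moptg^\proj=(\moptg')^\proj+(\moptg'')^\proj$ and therefore $(\moptg'')^\perp=\moptg^\perp+(\moptg')^\proj$. This gives the decomposition
$$Z(\moptg'')^\perp = Z\moptg^\perp + Z(\moptg')^\proj,$$
and I would bound the two pieces separately. For the first, since $(Y\mcurg)^\perp$ is a contraction,
$$\|Z\moptg^\perp\|_2 = \|(Y\mcurg)^\perp Y\moptg^\perp\|_2 \le \|Y\moptg^\perp\|_2 \le \sqrt{\eps}.$$

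For the second piece, which is the substantive step, the key algebraic identity is $(Y\mcurg)^\perp Y\mcurg = 0$. Using it to substitute $\mcurg$ for $\moptg'$,
$$Z(\moptg')^\proj = (Y\mcurg)^\perp Y\moptg'(\moptg')^T = (Y\mcurg)^\perp Y(\moptg'-\mcurg)(\moptg')^T,$$
so $\|Z(\moptg')^\proj\|_2 \le \|\moptg'-\mcurg\|_2 \le \sqrt{\alpha}$. A triangle inequality then yields $\|Z(\moptg'')^\perp\|_2 \le \sqrt{\eps}+\sqrt{\alpha}$, whose square is $O(\eps+\alpha)$, matching the claimed bound up to the unoptimized constant the authors explicitly allow themselves. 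The only nontrivial move is the projector-based substitution of $\mcurg$ for $\moptg'$; the rest is bookkeeping, and I do not expect any real obstacle.
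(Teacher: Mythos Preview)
Your argument is essentially the paper's: both proofs hinge on the decomposition $(\moptg'')^\perp = \moptg^\perp + (\moptg')^\proj$ and on the substitution $(Y\mcurg)^\perp Y\moptg' = (Y\mcurg)^\perp Y(\moptg'-\mcurg)$ coming from $(Y\mcurg)^\perp(Y\mcurg)=0$. The only difference is where you place the triangle inequality. You bound $\|Z(\moptg'')^\perp\|_2 \le \sqrt{\eps}+\sqrt{\alpha}$ and then square, ending at $2(\eps+\alpha)$; the paper instead works with the sandwiched form
\[
\|Z(\moptg'')^\perp\|_2^2 = \|Z(\moptg'')^\perp Z^T\|_2 = \|Z\moptg^\perp Z^T + Z(\moptg')^\proj Z^T\|_2 \le \|Z\moptg^\perp\|_2^2 + \|Z\moptg'\|_2^2 \le \eps+\alpha,
\]
so the addition happens at the already-squared level and the stated constant comes out exactly. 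Your proof is correct up to that factor of two; since the lemma is written without a big-$O$, the one-line sandwiching tweak is worth making.
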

\begin{proof}
  Note that $Z$ has all singular values either $0$ or $1$:
  \[
     Z Z^T = (Y \mcurg)^\perp Y Y^T (Y \mcurg)^\perp
      = (Y \mcurg)^\perp.
  \]
  Moreover $(Y \mcurg)^\perp$ has rank
  $\rank(Y) - \rank(Y \mcurg) = k - r$.
  Since spectral clustering is invariant under change of basis, we can
  assume $Z \in \R^{(k-r)\times n}$ so that all singular values of $Z$
  are $1$.  This means $Z$ is orthonormal. It is obtained from $Y$ by
  a linear transformation, therefore $\optg$ is a good spectral
  clustering for $Z$ also:
  \begin{align*}
    \eps \ge & \|Y \moptg^\perp Y^T\|_2  \ge \|Z \moptg^\perp Z^T\|_2 \\
    = & \| Z Z^T - (Z \moptg) (Z \moptg)^T\|_2 \\
    = & \| I_{k-r} - (Z \moptg) (Z \moptg)^T\|_2.    
  \end{align*}
  In other words,
  \begin{align*}    
    (1-\eps) I_{k-r} \preceq &  Z \moptg^{\proj} Z^T
      = Z \big[(\moptg')^{\proj} + (\moptg'')^{\proj}\big]  Z^T \\
    \implies 
    \eps I_{k-r} \succeq & Z (\moptg'')^\perp Z^T - Z^T \moptg^{\proj} Z^T.
  \end{align*}
  Now we will upper bound
  $\|Z (\moptg')^{\proj} Z^T\|_2 = \|Z \moptg'\|_2 = \|(Y
  \mcurg)^\perp (Y \moptg') \|_2$ by a simple Cauchy-Schwarz:
  \begin{align*}     
    \| (Y \mcurg)^\perp (Y \moptg') \|_2^2
    = & \| (Y \mcurg)^\perp (Y \moptg') - (Y \mcurg)^\perp (Y \mcurg) \|_2^2 \\
    = & \| (Y \mcurg)^\perp Y (\moptg' - \mcurg) \|_2^2 \\
    \le & \| (Y \mcurg)^\perp Y\|_2^2 \|\moptg' - \mcurg\|_2^2 \\
    \le & \|Y\|_2^2 \|\moptg' - \mcurg\|_2^2 \le \|\moptg' - \mcurg\|_2^2
          \le {\alpha}.
  \end{align*}
  As a consequence,
  $
  \|Z (\moptg'')^\perp \|^2_2
  \le \eps + \| Z \moptg' \|^2_2 \le \eps + \alpha.
  $
\end{proof}
Now we will start with the proof of correctness for our unraveling
procedure.
\subsection{Correctness of~\textsc{Unravel}~(\Cref{alg:unravel})}
\label{sec:analysis:unravel}
We will now prove that if the input of
$\textsc{Unravel}(\curg)$~(\Cref{alg:unravel}) is a list of (possibly
overlapping) sets which are close to some $k$-partition (ground
truth), then the output will be a list of $k$ {\bf disjoint} sets
which are also close to the ground truth. Our algorithm is based on
formulating this as a simple maximum bipartite matching problem. 
\begin{mylemma}
  \label{thm:unravel-correct}
  Given $\curg \in \setv(k)$, if there exists $\optg \in \disjv(k)$
  which is $\delta$-close to $\curg$, then
  $\textsc{Unravel}(\curg)$~(\Cref{alg:unravel}) will output
  $\newg \in \disjv(k)$ such that:
  \begin{itemize}
  \item For each $S \in \curg$, there exists $U \in \newg$ with
    $U \subseteq S$ and $|U| \ge (1-\delta) |S|$.
  \item $\newg$ is $4 \delta$-close to $\optg$.
  \end{itemize}
\end{mylemma}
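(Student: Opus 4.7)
The plan is to exhibit an explicit feasible matching that certifies the algorithm's chosen parameter $\delta'$ is at most $\delta$, and then to track, under the natural bijection, how close each output cluster lies to its optimal counterpart.

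First I would fix a bijection $\pi\colon\curg\to\optg$ witnessing $\simfun(\curg,\optg)\le\delta$ and set $T_S\triangleq\pi(S)$. Since $\simfun(S,T_S)\le\delta$ unfolds to $|S\cap T_S|^2\ge(1-\delta)|S|\,|T_S|$, combining it with $|S\cap T_S|\le\min(|S|,|T_S|)$ gives the two linear bounds $|S\cap T_S|\ge(1-\delta)|S|$ and $|S\cap T_S|\ge(1-\delta)|T_S|$. Integrality of $|S\cap T_S|$ upgrades the first to $|S\cap T_S|\ge\lceil(1-\delta)|S|\rceil$.

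Second I would check that the bipartite graph with block sizes $\lceil(1-\delta)|S|\rceil$ admits a matching covering $R$: for each $S\in\curg$ pick any $\lceil(1-\delta)|S|\rceil$ elements of $S\cap T_S$ and match them to $B_S$. Because $\optg\in\disjv(k)$ has pairwise disjoint parts, the subsets $\{S\cap T_S\}_{S\in\curg}$ are pairwise disjoint, so this is a valid matching. Hence the optimum chosen by the algorithm satisfies $\delta'\le\delta$, the output $U_S$ lies in $S$ with $|U_S|=\lceil(1-\delta')|S|\rceil\ge(1-\delta)|S|$, and injectivity of the matching on $L=V$ makes the $U_S$'s pairwise disjoint, so $\newg\in\disjv(k)$.

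Third, for the $4\delta$-closeness I would estimate $\simfun(U_S,T_S)$ under the same bijection $U_S\leftrightarrow T_S$. Since $U_S\subseteq S$ with $|S\setminus U_S|\le\delta|S|$, we have $|U_S\cap T_S|\ge|S\cap T_S|-\delta|S|$. Writing $\mu=|S\cap T_S|$, $a=|S|$, $b=|T_S|$, and using $\mu^2\ge(1-\delta)ab$, $\mu\le b$, $|U_S|\le a$, a one-line expansion gives
\[
\frac{|U_S\cap T_S|^2}{|U_S|\,|T_S|}\;\ge\;\frac{(\mu-\delta a)^2}{ab}\;\ge\;\frac{\mu^2}{ab}-\frac{2\delta\mu}{b}\;\ge\;(1-\delta)-2\delta\;=\;1-3\delta,
\]
so $\simfun(U_S,T_S)\le3\delta\le4\delta$ for every $S$, which yields $\simfun(\newg,\optg)\le 4\delta$.

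The main obstacle is extracting, from the single scale-symmetric inequality $\simfun(S,T_S)\le\delta$, the separate one-sided linear bounds $|S\cap T_S|\ge(1-\delta)|S|$ and $|S\cap T_S|\ge(1-\delta)|T_S|$; these are what allow both the Hall-type matching argument (which needs a lower bound in terms of $|S|$) and the $\simfun$ estimate at the end (which needs $\mu\le b$ to kill the cross term) to go through cleanly. The rest is bookkeeping with ceilings.
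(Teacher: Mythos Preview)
Your proof is correct and reaches the same conclusion as the paper's, but via a somewhat different route in two places. First, the paper certifies feasibility of the bipartite matching by verifying Hall's condition (for any union of blocks $B=\bigcup_{S\in A} B_S$, it bounds $|N(B)|\ge\sum_{S\in A}|S\cap\pi(S)|\ge(1-\delta)\sum_S|S|$), whereas you bypass Hall entirely by exhibiting an explicit matching using the pairwise disjointness of the sets $S\cap T_S$; your approach is more direct and elementary. Second, for the $4\delta$-closeness bound, the paper observes $\simfun(U,S)\le\delta$ and invokes the triangle inequality for $\sqrt{\simfun}$ to get $\sqrt{\simfun(U,T)}\le\sqrt{\simfun(U,S)}+\sqrt{\simfun(S,T)}\le2\sqrt{\delta}$, while you do a direct algebraic estimate that in fact yields the slightly sharper $\simfun(U_S,T_S)\le3\delta$. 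Both arguments are short; yours avoids relying on the metric property of $\sqrt{\simfun}$, which the paper uses but does not explicitly justify. One small caveat: your squaring step $|U_S\cap T_S|^2\ge(\mu-\delta a)^2$ tacitly assumes $\mu-\delta a\ge0$, which follows from $\mu\ge(1-\delta)a$ once $\delta\le\tfrac12$; for $\delta\ge\tfrac14$ the conclusion $\simfun\le4\delta$ is vacuous anyway, so this is harmless but worth a one-line remark.
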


\begin{proof}  
  It is easy to see that if all blocks are matched, then
  the resulting assignment is a collection of $k$-disjoint subsets so 
  it has the first property. For the second property, consider any
  $S \in \curg$ with corresponding subsets $U \in \newg$ and
  $T \in \optg$.
  \[
  \simfun(U,S) =
  1 - \frac{|S\cap U|^2}{|S| |U|}
  = 1 - \frac{|U|}{|S|} = \frac{|S\setminus U|}{|S|} \le \delta.
  \]
  Hence
  \(
  \sqrt{\simfun(U,T)}
  \le
  \sqrt{\simfun(U,S)}+\sqrt{\simfun(S,T)}
  \le 2 \sqrt{\delta}.
  \)
  This implies $\newg$ is $4 \delta$-close to $\optg$.

  Now we will prove that if $\delta$-close $\optg \in \disjv(k)$
  exists, then there is always a matching of all blocks.  Suppose
  $\pi:\curg\leftrightarrow \optg$ is a matching which minimizes
  $\max_S \simfun(S, \pi(S))$.  
  Then:
  \[
  {  | S \cap \pi(S) |^2 } \ge (1- \delta) { |S| |\pi(S)|}
  \implies
  |S\cap \pi(S)| \ge (1-\delta) |S|.
  \]
  By Hall's theorem, we have to show that for any set of right nodes,
  $B \subseteq R$, $B$'s neighbors on the left, $N(B)$, are more than
  $B$: $|N(B)| \ge |B|$.
  Observe that if $B$ contains some nodes of block $B_S$, then adding
  the whole block $B_S$ to $B$ does not increase $|N(B)|$, because all
  nodes in $B_S$ have the same set of neighbors. Hence the only
  subsets we need to consider are of the form $B = \cup_{S\in A} B_S$
  over all $A \subseteq \curg$:
  \begin{align*}
    |N(B)| = & \big| \cup_{S\in A} S \big| \ge
    \big| \cup_{S\in A} (S\cap \pi(S)) \big| \\
    = &
    \sum_{S \in A} |S\cap \pi(S)|
    \ge (1-\delta) \sum_S |S|
    = |B|.
  \end{align*}
\end{proof}

\subsection{Correctness of \textsc{FindCluster}~(\Cref{alg:find-cluster})}
\label{sec:analysis:find-cluster}
Here we prove that, given an orthonormal basis $Y$, if it is close to
a $k$-partition up to rotation, then
$\textsc{FindCluster}(Y)$~(\Cref{alg:find-cluster}) will output a set
which is close to one of the clusters in this $k$-partition.

\begin{mylemma} %
  \label{thm:find-cluster-correct}
  Given $Y \in \R^{m\times n}$, whose singular values are $0$ or $1$,
  if there exists $\optg \in \disjv(k)$ with
  $\|Y^T Y- \moptg^\proj\|^2_2 \le \delta$ for some small enough
  constant $\delta$, then
  $\textsc{FindCluster}(Y)$~(\Cref{alg:find-cluster}) will output
  ${S'} \subseteq V$ such that: %
  \begin{itemize}
  \item There exists $T \in \optg$ with
    $\simfun({S'}, T) \le O(\sqrt{\delta})$,
  \item Any $T' \neq T: T'\in\optg$ has
    $|T' \cap {S'}| \le O(\sqrt{\delta}) |T'|$.
  \end{itemize}
\end{mylemma}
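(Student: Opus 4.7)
The plan is to verify that for an appropriately chosen candidate center $c^*$, Algorithm~\ref{alg:find-cluster} succeeds with threshold $\delta' = O(\sqrt{\delta})$, and then show that any pair $(c',\delta',S)$ the algorithm actually returns produces a prefix $S$ concentrated on a single true cluster $T^\star \in \optg$, so that Lemma~\ref{thm:boost-initial} applied to the top right singular vector of $Y_S$ delivers the claimed $S'$. The first structural step is to use a Procrustes rotation $R \in \R^{k\times k}$ to write $Y^T R = \moptg + M$ with $\|M\|_2 = O(\sqrt{\delta})$ and $\|M\|_F^2 = O(k\delta)$; each row $M_{u,:}$ then measures the deviation of $y_u$ from its ideal cluster-indicator direction when $u \in T_i$.

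To exhibit a valid $(c^*,\delta')$, I will pigeonhole over the $k$ clusters to pick $T^\star \in \optg$ with $\|M_{T^\star,:}\|_F^2 = O(\delta)$, and then average over $c \in T^\star$ to pick $c^*$ with $\|M_{c^*,:}\|^2 = O(\delta/|T^\star|)$. A Markov argument identifies a subset $T^\star_{\mathrm{good}} \subseteq T^\star$ of size $\ge (1 - O(\sqrt{\delta}))|T^\star|$ whose elements $u$ satisfy $\|M_{u,:}\|^2 = O(\sqrt{\delta}/|T^\star|)$; such nodes have sorting ratio $\|y_u - y_{c^*}\|/\|y_u\| = O(\delta^{1/4})$, while any node outside $T^\star$ has sorting ratio at least $1 - o(1)$. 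The sorted prefix therefore captures all of $T^\star_{\mathrm{good}}$ before admitting any non-$T^\star$ point, and a short calculation shows that the mass and distance constraints of~\Cref{alg:find-cluster} are both satisfied with $\delta' = O(\sqrt{\delta})$; consequently the minimum $\delta'$ chosen by the algorithm is at most $O(\sqrt{\delta})$.

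For any $(c',\delta',S)$ the algorithm actually returns with $\delta' \le O(\sqrt{\delta})$, the distance constraint $\sum_{u\in S}\|y_u - y_{c'}\|^2 \le \delta'$ implies that $Y_S$ is Frobenius-close to the rank-one matrix $y_{c'}\ind{S}^T$, so Weyl's inequality gives $\sigma_2^2(Y_S) \le \delta' \le O(\sqrt{\delta})$. Claim~\ref{thm:minor-s-props} converts this to $|S\cap T'|/|T'| \le O(\sqrt{\delta})$ for every cluster $T'$ other than the dominant cluster $T^\star$. Combining the mass constraint $\sum_{u\in S}\|y_u\|^2 \ge 1 - \delta'$ with these overlap bounds and the trace bound $|\tr(E_{S,S})| = O(\sqrt{\delta})$ (which follows from $\rank(E) \le 2k$ and $\|E\|_2 \le \sqrt{\delta}$, where $E = Y^T Y - \moptg^\proj$) forces the dominant overlap $\rho_1 = \max_T |S\cap T|/|T|$ to satisfy $\rho_1 \ge 1 - O(\sqrt{\delta})$, and hence $\sigma_1^2(Y_S) \ge 1 - O(\sqrt{\delta})$ with $\sigma_1^2(Y_S) - \sigma_2^2(Y_S) = \Omega(1)$ for sufficiently small $\delta$.

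Finally, applying Lemma~\ref{thm:boost-initial} to the top right singular vector $q$ of $Y_S$ yields
\[
\frac{|T^\star \cap S'|^2}{|T^\star|\,|S'|} \ge \bigl(\sigma_1^2(Y_S) - \sqrt{\delta}\bigr)\bigl(1 - O(\sqrt{\delta})\bigr)^{2} \ge 1 - O(\sqrt{\delta}),
\]
i.e.\ $\simfun(S', T^\star) \le O(\sqrt{\delta})$, together with $|T' \cap S'| \le \bigl(\sigma_2^2(Y_S) + \sqrt{\delta}\bigr)|T'| \le O(\sqrt{\delta})\,|T'|$ for every other $T'$, exactly matching the claimed conclusion. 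The main technical obstacle is the lower bound $\rho_1 \ge 1 - O(\sqrt{\delta})$ used above: a generic estimate that uses only the mass and distance constraints on $S$ yields the weaker $\rho_1 \ge 1 - O(\delta^{1/4})$, and the sharper $O(\sqrt{\delta})$ bound crucially combines the near-rank-one structure of $Y_S$ (via $\sigma_2^2(Y_S) \le O(\sqrt{\delta})$) with the rank/spectral bounds on $E$ to prevent the mass of $S$ from spreading across several clusters.
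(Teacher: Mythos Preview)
Your overall architecture---show feasibility of the algorithm at threshold $\delta'=O(\sqrt{\delta})$, then analyze whatever $(c',\delta',S)$ it actually returns, then feed $\sigma_1^2(Y_S),\sigma_2^2(Y_S)$ into Lemma~\ref{thm:boost-initial}---is exactly the paper's route. Your Procrustes-based feasibility argument is a reasonable alternative to the paper's, which simply uses $\|Y\moptg^\perp\|_F^2\le k\delta$ and averages; both reach the same endpoint. (Your side claim that the sorted prefix admits no point outside $T^\star$ is neither needed nor obviously true: rows of $M$ are only bounded by $\|M\|_2=O(\sqrt\delta)$, which can be large compared to $1/\sqrt{|T_i|}$ when $|T_i|$ is big. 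The paper never asserts this; it only needs the mass and distance constraints to hold for \emph{some} prefix.)

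The genuine gap is the step you yourself flag as the ``main technical obstacle'': your lower bound $\rho_1\ge 1-O(\sqrt\delta)$ does not follow from the ingredients you list. From $\rank(E)\le 2k$ and $\|E\|_2\le\sqrt\delta$ you only get $|\tr(E_{S,S})|\le 2k\sqrt\delta$, not $O(\sqrt\delta)$; and summing the $(k-1)$ non-dominant overlap bounds from Claim~\ref{thm:minor-s-props} likewise gives $O(k\sqrt\delta)$. So your argument yields at best $\rho_1\ge 1-O(k\sqrt\delta)$, which is useless once $k\gtrsim \delta^{-1/2}$. The paper sidesteps $\rho_1$ entirely: since $y_{c'}\ind{S}^T$ is rank one and $\sum_{u\in S}\|y_u-y_{c'}\|^2\le\delta'$ is its Frobenius approximation error,
\[
\sigma_1^2(Y_S)\;\ge\;\|Y_S\|_F^2-\sum_{u\in S}\|y_u-y_{c'}\|^2\;\ge\;(1-\delta')-\delta'\;=\;1-2\delta',
\]
i.e.\ the same rank-one observation you already used for $\sigma_2^2(Y_S)\le\delta'$ also gives the $\sigma_1$ bound directly. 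With $\sigma_1^2(Y_S)\ge 1-O(\sqrt\delta)$ and $\sigma_2^2(Y_S)\le O(\sqrt\delta)$ in hand, Lemma~\ref{thm:boost-initial} finishes exactly as you wrote---no detour through $\rho_1$, and no $k$-dependence.
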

\begin{proof}
  $\|Y^T Y - \moptg^\proj\|^2_2 \le \delta$ implies 
  $
  \sigma_{k}(Y) = 1
  $
  and
  $
  \sigma_{k+1}(Y) = 0.
  $
  Therefore 
  $
  \| Y \moptg^\perp \|^2_F \le \delta k.
  $
  In other words,
  \[
  \sum_{T \in \optg} (\|Y_T\|^2_F - \|Y \unit{\ind{T}}\|^2_2)
  =
  \sum_{T \in \optg} (1 - \|Y \unit{\ind{T}}\|^2_2)
  \le \delta k.
  \]
  So:
  \[
  \sum_{T \in \optg} \Big[\max(1,\|Y_T\|^2_F) - \|Y \unit{\ind{T}}\|^2_2\Big]
  \le 2 \delta k.
  \]
  As a consequence, there exists $T \in \optg$ and some $c\in T$
  such that:
  \[
  \|Y_T\|^2_F \ge \|Y \unit{\ind{T}}\|^2_2 \ge  1 - 2 \delta.
  \]
  \[
  4 \delta \ge
  \frac{1}{|T|} 
  \sum_{u\in T, v\in T} \| Y_u - Y_v \|^2_2
  \ge \sum_{u\in T} \|Y_u - Y_c\|^2.
  \]
   
  \[
  \frac{4 \delta}{1-2\delta} \ge \expct{u\sim \|Y_u\|^2} {
    \frac{\|Y_u - Y_c\|^2}{\|Y_u\|^2}
  }.
  \]
  Let's define $\rho_u \triangleq \frac{\|Y_u - Y_c\|^2}{\|Y_u\|^2}$
  and sort the nodes in ascending order so that
  \(\rho_1 \le\rho_2 \le \ldots \le \rho_n:\)
  \[
  \expcts{u\sim \|Y_u\|^2}{\rho_u} \le \frac{4 \delta}{1- 2\delta} \le 9 \delta
  \qquad\mbox{provided $\delta \le \frac14$}.
  \] 
  By a simple Markov inequality, sum of all $\|Y_u\|^2$ over $u\in T$
  with $\rho_u \le 3 \sqrt{\delta}$ is at least $1 - 3 \sqrt{\delta}$.
  Consequently, the smallest integer $m$ for which
  $\sum_{1\le u\le m} \|Y_u\|^2 \ge 1 - 3 \sqrt{\delta}$ satisfies
  $\sum_{1\le u\le m} \|Y_u - Y_c\|^2 \le 3 \sqrt{\delta}$.

  From now on, we assume $S$ is a subset and
  $\delta': \delta' \le 3 \sqrt{\delta}$ with:
  \[
  \|Y_S\|^2_F
  = \sum_j \sigma_j^2(Y_S)
  \ge 1-\delta'
  \quad\mbox{and}\quad
  \delta' \ge
  \sum_{u\in S} \|Y_u - Y_c\|^2.
  \]
  Recall that variance is lower bounded by sum of the squares of all
  but largest singular values:
  \[\sum_{u\in S} \|Y_u - Y_c\|^2 \ge 
  \|Y_S\|^2_F - \sigma_1^2(Y_S) =
  \sum_{j\ge 2} \sigma_j^2(Y_S). \] 
  Hence $\sigma_1^2(Y_S) \ge 1 - 2 \delta'$ and
  $\sigma_2^2(Y_S) \le \delta'$.
  Provided that $\delta \le \frac{1}{100}$, which implies
  $\delta' \le \frac{3}{10}$:
  \[
  (\sigma_1-\delta) \Big(1 - \frac{4 \delta}{\sigma_1 - \sigma_2}
  \Big) \ge
  1 - 8 \sqrt{\delta}.
  \]
  For such $S$, Lemma~\ref{thm:boost-initial} tells us that the new subset
  $\widehat{S} \subseteq S$ obtained by rounding the top right
  singular vector of $Y_S$ satisfies, for
  $T \triangleq \argmax_{T\in \optg} \frac{|S\cap T|}{|T|}$:
  \begin{itemize}
  \item
    $\simfun(\widehat{S}, T) \le 1 - (\sigma_1-\delta) \Big(1 - \frac{ 4 \delta
    }{\sigma_1 - \sigma_2} \Big)
    \le 8 \sqrt{\delta}
    $.
  \item For any $T' \neq T \in \optg$,
    $|T' \cap \widehat{S}| \le 4 \sqrt{\delta} |T'|$. 
  \end{itemize}
\end{proof}

\subsection{Correctness of \textsc{Boost}~(\Cref{alg:boost})}
\label{sec:analysis:boost}
As we mentioned earlier, if we keep finding clusters and removing them
iteratively, the error will quickly accumulate and degrade the quality
of remaining clusters. To prevent this, we apply a boosting procedure
as described in
$\widehat{S}\gets \textsc{Boost}(Y,S)$~(\Cref{alg:boost}) every time
we find a new cluster. The main idea is that, if $S$ is close to some
cluster in ground truth, say $T$, and far from others; then the top
left singular vector, say $p$, of the vectors associated with $S$ will
be close to the ones $S\cap T$.  Unfortunately, we can not use simple
perturbation bounds such as Wedin's theorem. We have to make full use
of \cref{eq:main-goal-1} instead: Under projection by $p$, the vectors
of $T$ tend to stay together; therefore vectors in $T \setminus S$
will be very close to the vectors in $S\cap T$. Hence indeed $p$ will
be close most of $T\setminus S$ in addition to $S\cap T$.

%

%
\begin{mytheorem}[Boosting]
  \label{thm:boost-main}  
  Given $Y \in \orthk$ and $\optg \in \disjv(k)$
  with $\|Y \moptg^\perp \|^2_2 \le \eps$, consider any subset
  $S \subseteq V$. Suppose there exists $T \in \optg$ with
  $|S\cap T| \ge (1 - \alpha) |T|$ such that for any
  $T'\neq T \in \optg$, $|S\cap T'| \le \alpha |T'|$
  for some $\alpha \le \frac14$.
  Then for $\widehat{S}\gets \textsc{Boost}(Y,S)$~(\Cref{alg:boost}),
  $\widehat{S}$
  satisfies:
  \[
    \simfun( \widehat{S} , T ) \le c_0 \sqrt{\eps}
  \] for some constant $c_0 \le 50$.
\end{mytheorem}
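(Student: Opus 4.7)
The plan is to show that the top left singular vector $p \in \R^k$ of $Y_S$ is nearly aligned with the cluster-center direction $v_T \triangleq Y \unit{\ind{T}}$, so that $q \triangleq Y^T p \in \R^n$ concentrates almost all of its (unit) $\ell_2$-mass on $T$; then $\widehat{S} = \textsc{Round}(q)$ will be $O(\sqrt\eps)$-close to $T$ in $\simfun$.

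I begin by establishing near-orthonormality of the cluster centers in $\R^k$ and a clean/noisy decomposition of $Y_S$. Let $v_j \triangleq Y \unit{\ind{T_j}}$ and $V \triangleq Y \moptg = [v_1,\ldots,v_k]$. Since $YY^T = I_k$, the identity $VV^T = I_k - Y\moptg^\perp Y^T$ combined with $\|Y\moptg^\perp\|_2 \le \sqrt\eps$ yields $\|VV^T - I_k\|_2 \le \sqrt\eps$; in particular $\|v_j\|^2 \in [1-\sqrt\eps,1+\sqrt\eps]$ and $|\langle v_j, v_{j'}\rangle| \le \sqrt\eps$. Next I split $Y_S = M + E$, where the $u$-th column of $M$ is $c_{T(u)} = v_{T(u)}/\sqrt{|T(u)|}$ (the mean of the cluster containing $u$) and $E = Y\moptg^\perp I_S$. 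Then $\|E\|_2 \le \sqrt\eps$ and $MM^T = \sum_j \beta_j v_j v_j^T$ with $\beta_j \triangleq |S\cap T_j|/|T_j|$, so a spectral-norm triangle inequality gives $\|Y_S Y_S^T - MM^T\|_2 \le 2\|M\|_2\|E\|_2 + \|E\|_2^2 \le 3\sqrt\eps$.

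Next I locate the top eigenvector of $MM^T$ and transfer it to $Y_S Y_S^T$. By $\beta_T \ge 1-\alpha$, $\beta_{j'} \le \alpha$ for $j' \neq T$, together with the near-orthonormality above, the top eigenvalue of $MM^T$ is at least $\beta_T\|v_T\|^2 \ge 1-\alpha-O(\sqrt\eps)$ and every other eigenvalue is at most $\alpha + O(\sqrt\eps)$; since $\alpha \le \tfrac14$, the spectral gap is at least $\tfrac12 - O(\sqrt\eps)$. Moreover the near-orthogonality gives $MM^T v_T = \beta_T\|v_T\|^2 v_T + O(\sqrt\eps)$, so the top eigenvector of $MM^T$ is within $O(\sqrt\eps)$ of $\unit{v_T}$. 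Applying Proposition~\ref{thm:eig-vec-dist} to the pair $(Y_S Y_S^T,\, MM^T)$ — whose spectral distance $3\sqrt\eps$ is dominated by the $\ge \tfrac12$ gap — gives $\langle p, \unit{v_T}\rangle^2 \ge 1 - O(\sqrt\eps)$, whence $\langle p, v_T\rangle \ge 1 - O(\sqrt\eps)$.

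Finally I conclude via two identities for $q = Y^T p$: first, $\|q\|^2 = p^T YY^T p = 1$; and second, $\langle q, \unit{\ind{T}}\rangle = p^T Y \unit{\ind{T}} = \langle p, v_T\rangle \ge 1 - O(\sqrt\eps)$. The family $F$ searched by $\textsc{Round}$ contains the size-$|T|$ ``top'' threshold of $q$ (taking $s = +1$ and $v$ to be the $|T|$-th largest coordinate), which maximizes $\langle q, \unit{\ind{S}}\rangle$ over all size-$|T|$ sets and therefore dominates $\langle q, \unit{\ind{T}}\rangle$; hence $\langle q, \unit{\ind{\widehat S}}\rangle \ge 1 - O(\sqrt\eps)$. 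Both $\unit{\ind{T}}$ and $\unit{\ind{\widehat S}}$ then have inner product $\ge 1 - O(\sqrt\eps)$ with the unit vector $q$, and decomposing each as $(1 - O(\sqrt\eps))q$ plus a component orthogonal to $q$ of squared norm $\le O(\sqrt\eps)$ yields $\langle \unit{\ind{T}}, \unit{\ind{\widehat S}}\rangle \ge 1 - O(\sqrt\eps)$, i.e., $\simfun(\widehat S, T) \le c_0 \sqrt\eps$. The main obstacle is careful bookkeeping of constants through the chain $\|Y\moptg^\perp\|_2 \le \sqrt\eps \Rightarrow \|Y_S Y_S^T - MM^T\|_2 \le O(\sqrt\eps) \Rightarrow \langle p, v_T\rangle^2 \ge 1 - O(\sqrt\eps)$; the hypothesis $\alpha \le \tfrac14$ is precisely what guarantees the $\tfrac12$ spectral gap needed for the Proposition~\ref{thm:eig-vec-dist} step to be harmless, while a secondary subtlety is that $T$ itself need not lie in $F$, but one can bypass this by routing the lower bound through the size-$|T|$ top threshold of $q$, which does lie in $F$.
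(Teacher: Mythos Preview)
Your proof is correct and takes a genuinely different route from the paper's. The paper first invokes Lemma~\ref{thm:sandwich} to get $\|Y^TY-\moptg^\proj\|_2\le\sqrt\eps$, uses Claim~\ref{thm:minor-s-props} and Proposition~\ref{thm:eig-vec-dist} to show that $q|_S$ is close to $\unit{\ind{S\cap T}}$, and then---this is the crux---uses the variance bound $q^T\moptg^\perp q\le\eps$ to propagate the mean from $S\cap T$ to all of $T$ via $|S\cap T|(\mu_{S\cap T}-\mu_T)^2\le\eps$. You instead work entirely in $\R^k$: the clean/noisy split $Y_S=M+E$ already builds the \emph{full} center $v_T=Y\unit{\ind{T}}$ into $M$, so once you show the top eigenvector of $MM^T$ is $\unit{v_T}$ up to $O(\sqrt\eps)$ and transfer to $p$ via Proposition~\ref{thm:eig-vec-dist}, the identity $\langle q,\unit{\ind{T}}\rangle=\langle p,v_T\rangle$ finishes directly, with no separate ``extend from $S\cap T$ to $T$'' step. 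This is arguably cleaner; the paper's route, on the other hand, makes more transparent where the squared hypothesis $\|Y\moptg^\perp\|_2^2\le\eps$ (not merely $\le\sqrt\eps$) is actually spent.

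Two small remarks. First, $\|VV^T-I_k\|_2\le\sqrt\eps$ is an overestimate: since $I_k-VV^T=(Y\moptg^\perp)(Y\moptg^\perp)^T$, one gets $\le\eps$, which only helps you. Second, the line ``near-orthogonality gives $MM^Tv_T=\beta_T\|v_T\|^2v_T+O(\sqrt\eps)$'' is correct but worth one more sentence: writing $MM^T=VDV^T$ with $D=\diag(\beta_j)$ and $V^TV=I_k+N$, $\|N\|_2\le\eps$, gives $MM^Tv_T=VD(I+N)e_T=\beta_Tv_T+VDN e_T$, so the residual is $O(\eps)$ \emph{independently of $k$} precisely because you use the spectral bound on $V^TV-I_k$ rather than the entrywise one. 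Your final rounding argument is essentially Proposition~\ref{thm:round-correct}.
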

\begin{remark}
  Note that Theorem~\ref{thm:boost-main} allows us to convert a subset with
  non-negligible overlap into a subset which is very
  close. Unfortunately, the new subset we obtain is no longer
  guaranteed to have small intersection with other $T' \neq T$.
\end{remark}
\begin{proof}[of Theorem~\ref{thm:boost-main}]
  By Lemma~\ref{thm:sandwich}, for $\delta \triangleq  \sqrt{\eps}$:
  \[
    \|Y^T Y - \moptg^\proj\|_2 \le \delta.
  \]
  We will use $\sigma_1 \triangleq \sigma_1^2(Y_S) $ and
  $\sigma_2 = \sigma_2^2(Y_S)$. 
  By Claim~\ref{thm:minor-s-props},
  $\big|\sigma_1 - |S\cap T|/|T|\big| \le \delta$ and
  $\sigma_2 \le \beta+\delta$.

  Since $p$ is the left top singular vector of $Y_S$,
  $\|Y_S^T p\|^2 = \sigma_1$. We define $q\triangleq Y^T p$ so that
  $\|q_S\|^2 = \sigma_1$:
  \begin{align}
    \frac{\|q_S\|^2}{|S\cap T|/|T|}
    \ge & 1 - \frac{\delta}{|S\cap T|/|T|} \notag \\    
    \ge &  1 - \frac{\delta}{1-\alpha}
          \ge 1 - \frac{4 \delta}{3}
          \label{eq:pf:eig-vec-dist:1} .
  \end{align}
  Since
  $\sigma_1((\moptg^\proj)_{S,S})-\sigma_2((\moptg^\proj)_{S,S})\ge
  1-2\alpha$, Proposition~\ref{thm:eig-vec-dist} implies:
  \[
  \langle \unit{\ind{T \cap S}} , q \rangle^2
  \ge \|q_S\|^2 \Big( 1 - \frac{2 \delta}{1-2\alpha} \Big)
  \ge \|q_S\|^2 \Big( 1 - 4 \delta  \Big).
  \]
  Let's use $\mu_A$ to denote the mean of $p$ on subset $A$,
  $\mu_A \triangleq \langle \dunit{\ind{A}} , q \rangle$.  We will
  assume, without loss of generality, $\mu_{S\cap T} \ge 0$.
  Hence
  $\mu_{S\cap T} \ge \frac{\|q_S\|}{\sqrt{|S\cap T|}} \sqrt{1 - {4 \delta}}$.
  On the other hand, 
  \begin{align*}    
    \eps \ge & p^T Y \moptg^\perp Y^T p 
               = q^T \moptg^\perp q \\
    \ge & 
               \frac{1}{|T|} \sum_{\{i,j\}\in \binom{T}{2}} (q_i - q_j)^2
               \ge |S\cap T| (\mu_{S\cap T} - \mu_{T})^2. \\
     \mu_T \ge & \mu_{S\cap T} - \sqrt{\frac{\eps}{|S\cap T|}}.
                 \\
    \langle \unit{\ind{T}} , q \rangle
                 \ge & \sqrt{|T|} \mu_{S\cap T} - \sqrt{\frac{\eps
                       |T|}{|S\cap T|}}
                       \\
    \ge & \sqrt{\frac{|T|}{|S\cap T|}}
                       \Big(
                       \|q_S\| \sqrt{1 - {4 \delta}{}}
                       - \sqrt{\eps}
                       \Big).
\intertext{Using~\cref{eq:pf:eig-vec-dist:1}, %
           we can lower bound this quantity as:}
    \ge & \sqrt{ \Big(1 - \frac{4 \delta}{3} \Big)
          \Big( 1 - {4 \delta}{} \Big) }
          -  \sqrt{\frac{\eps}{1 - \alpha}} \\
    \ge & 1 - {4 \delta}{} 
          -  \sqrt{\frac{4 \eps}{3}}
          \ge 1 - \frac{26}{5} \delta. 
  \end{align*}
  Since $\|q\| = \|Y^T p\| \le 1$, we have
  \(
  \simfun(p, T)
  \le 1 - \langle \unit{\ind{T}} , q \rangle^2
  \le 11 \delta.
  \)
  Using Proposition~\ref{thm:round-correct}, we see that
  \(\widehat{S} \gets \textsc{Round}(q)\) satisfies:
  \[\simfun(\widehat{S}, T) \le 4 \simfun(q, T)
  \le 
    44 \delta = 44 \sqrt{\eps}.
  \]  
\end{proof}

\subsection{Correctness of~\textsc{Round}~(\Cref{alg:round-vec})}
\label{sec:analysis:round}
Possibly the simplest case for our problem is when $Y$ is
$1$-dimensional, i.e. it is a vector. As we argued in the
introduction, our algorithm boils down to simple thresholding in this
case. 

\begin{myproposition}
  \label{thm:round-correct}
  Given \(q\neq 0 \in \R^n\),
  for any \(T\neq \es\),
  \(S \gets \textsc{Round}(q)\) (\Cref{alg:round-vec})
  satisfies \( \simfun(S, T) \le 4 \simfun(q, T).  \)
\end{myproposition}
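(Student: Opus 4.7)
The plan is to combine a triangle inequality for $\sqrt{\simfun(\cdot,\cdot)}$ with an explicit near-optimal candidate inside the family $F$ that \textsc{Round} maximizes over.

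\textbf{Step 1: Triangle inequality for $\sqrt{\simfun}$.} I would first note the identity recorded in the paper's notation, $\simfun(p,q)=\tfrac12\|\unit{p}^{\otimes 2}-\unit{q}^{\otimes 2}\|^2$, which makes $\sqrt{2\,\simfun(\cdot,\cdot)}$ the Euclidean distance between two unit rank-one tensors. Hence
\[
\sqrt{\simfun(S,T)}\;\le\;\sqrt{\simfun(S,q)}+\sqrt{\simfun(q,T)}.
\]

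\textbf{Step 2: Exhibit a good candidate $S^\star\in F$.} Since both $F$ and $\simfun(q,T)$ are invariant under $q\mapsto -q$ (the $s\in\{\pm 1\}$ freedom in the definition of $F$), I may assume $\langle q,\ind{T}\rangle\ge 0$. Let $m=|T|$ and let $S^\star$ be the set of $m$ indices on which $q$ takes its largest values; taking $v$ to be the $m$-th largest index and $s=+1$ shows $S^\star=\{u: q_u\ge q_v\}\in F$. Because the top-$m$ sum of a real vector dominates the sum over any other $m$-subset,
\[
\langle q,\ind{S^\star}\rangle\;\ge\;\langle q,\ind{T}\rangle\;\ge\;0.
\]
Dividing by the common factor $\sqrt{m}$ and squaring gives $\langle \unit{q},\unit{\ind{S^\star}}\rangle^{2}\ge \langle \unit{q},\unit{\ind{T}}\rangle^{2}$, i.e.\ $\simfun(S^\star,q)\le \simfun(T,q)=\simfun(q,T)$.

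\textbf{Step 3: Optimality of the algorithm's output, then combine.} The set $S=\textsc{Round}(q)$ is by construction the element of $F$ maximizing $|\langle q,\unit{\ind{S}}\rangle|$, which is the same as minimizing $\simfun(\cdot,q)$ over $F$. Thus $\simfun(S,q)\le \simfun(S^\star,q)\le \simfun(q,T)$. Plugging into Step~1,
\[
\sqrt{\simfun(S,T)}\;\le\;\sqrt{\simfun(S,q)}+\sqrt{\simfun(q,T)}\;\le\;2\sqrt{\simfun(q,T)},
\]
and squaring yields $\simfun(S,T)\le 4\,\simfun(q,T)$, which is the claim.

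I do not foresee any serious obstacle. The only two things to be careful about are (a) that $\sqrt{\simfun}$ genuinely obeys a triangle inequality — which is where the tensor-norm reformulation is essential, since $\simfun$ itself does not — and (b) the sign/orientation convention, so that the candidate top-$m$ threshold set really lies in the family $F$ (this is why \textsc{Round} considers both $s=+1$ and $s=-1$ level sets). Everything else is an immediate consequence of the definition of $\textsc{Round}$ as a maximizer over $F$.
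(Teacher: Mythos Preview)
Your proof is correct and follows essentially the same route as the paper: exhibit the top-$|T|$ threshold set as a candidate in $F$ with $\simfun(\cdot,q)\le\simfun(q,T)$, use that $\textsc{Round}$ maximizes over $F$, and finish with the triangle inequality for $\sqrt{\simfun}$. The paper phrases Step~2 as a chain of inequalities over all prefix sizes rather than singling out $m=|T|$, but the content is identical; the only cosmetic point to tidy is that with ties the top-$m$ set need not literally lie in $F$, though the global maximum of $|\langle q,\unit{\ind{S}}\rangle|$ over all subsets is always attained at some level set, so the conclusion $\simfun(S,q)\le\simfun(q,T)$ still holds.
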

\begin{proof}
  Let \(\eps\triangleq \simfun(q,T)\).
  Without loss of generality, we may assume
  \(q_1 \ge \ldots \ge q_m > 0 \ge \ldots \ge q_n\),
  \(\|q\|=1\) and \(\langle q , \ind{T} \rangle \ge 0\).
  We have:
  \
  \begin{align*}
    \sqrt{1-\eps} \le & \frac{\langle q , {\ind{T}} \rangle}{\sqrt{|T|}}
                        \le \frac{\sum_{j\le |T|} q_j}{\sqrt{|T|}} \\
    \le & \max_{S'} \frac{\sum_{j\le |S'|} q_j}{\sqrt{|S'|}}
    \le \frac{\sum_{j\le |S|} q_j}{\sqrt{|S|}}
    \le | \langle q , \unit{\ind{S}}\rangle |.    
  \end{align*}
  So \(\simfun(S,q) \le \eps\)
  and
  \( \sqrt{\simfun(S,T)} \le \sqrt{\simfun(S,q)}+\sqrt{\simfun(q,T)}
  \le 2 \sqrt{\eps} \). %
\end{proof}

\subsection{Correctness of
  \textsc{SpectralClustering}~(\Cref{alg:spectral-clustering})}
\label{sec:analysis:sc}
Finally, we put everything together and prove the correctness of
$\textsc{SpectralClustering}(Y)$ (\Cref{alg:spectral-clustering}).  In
the following lemma, we will show that the algorithm will iteratively
find sets $S_1,S_2,\ldots$ (think of them as coarse approximations of
$T_i$'s) and $\widehat{S}_1,\widehat{S}_2,\ldots$ (think of them as
fine approximations of $T_i$'s) such that at each iteration, each
$S_i$ and $\widehat{S}_i$ will correspond to a unique $T_i$. Moreover,
each $S_i$ will have very small overlap with remaining $T_j$'s coming
after themselves. Even though $S_i$'s might still have large overlap
with previous $T_j$'s for $j<i$, we can easily use
$\textsc{Unravel}(\curg)$~(\Cref{alg:unravel}) to rectify this issue.
\begin{mylemma}
  \label{thm:spectral-clustering-induct}
  Let $\optg \in \disjv(k)$ with $\|Y \moptg^\perp\|^2_2 \le \eps$ for
  some $\eps \le \eps_0$, where $\eps_0 \in (0,1)$ is a constant.  For
  any $r \in [k]$, consider the sequences $\curg = (S_1, \ldots, S_r)$
  and $\newg = (\widehat{S_1}, \ldots, \widehat{S_r})$ as found by
  $\textsc{SpectralClustering}(Y)$ at the start of $r^{th}$
  iteration. Then there exists an ordering of $\optg$:
  \[
  \optg=( \underbrace{ T_1, T_2,\ldots, T_r }_{\triangleq \optg'}  ,
  \underbrace{T_{r+1}, \ldots, T_k}_{\triangleq \optg''} ),
  \]
  with the following properties for some $\alpha \le \frac{1}{16}$ and
  $\beta \le 100$:

  \begin{enumerate}[(a)]
  \item \label{item:sc:1}
    For every $i \le  r$:
    \begin{itemize}
    \item $\simfun(S_i, T_i) \le \alpha$.
    \item For all $j > i$, $|T_j \cap S_i| \le \alpha |T_j|$.
    \end{itemize}
  \item \label{item:sc:2} 
    For every $i \le r$,
    $\simfun(\widehat{S_i}, T_i) \le \beta \sqrt{\eps}$.
  \end{enumerate}
\end{mylemma}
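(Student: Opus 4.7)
The plan is to prove Lemma~5.5 by induction on $r$, essentially walking through one iteration of the algorithm and tracking how each sub-procedure's guarantee (from the preceding subsections) combines to yield the two invariants.

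For the base case $r=1$, both sequences produced before the first iteration are empty, so $\newg'$ is trivial and $Z=Y$. Theorem~\ref{thm:set-similarity} together with $\|Y\moptg^\perp\|_2^2\le\eps$ gives $\|Y^T Y-\moptg^\proj\|_2\le\sqrt{\eps}$, so Lemma~\ref{thm:find-cluster-correct} applied to $Y$ produces $S_1$ that is $O(\eps^{1/4})$-close to some $T\in\optg$, with $|T'\cap S_1|\le O(\eps^{1/4})|T'|$ for $T'\neq T$; relabeling this $T$ as $T_1$ gives property~(a) provided $\eps_0$ is small enough that $O(\eps^{1/4})\le\alpha$. Unravel on the singleton is a no-op, so $U_1\approx S_1$; Theorem~\ref{thm:boost-main} then gives $\simfun(\widehat{S_1},T_1)\le 50\sqrt{\eps}\le\beta\sqrt{\eps}$, establishing (b).

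For the inductive step, assume (a) and (b) hold at the end of iteration $r-1$, with ordering $T_1,\dots,T_{r-1}$. First, since inductive (b) gives $\simfun(\widehat{S_i},T_i)\le\beta\sqrt{\eps}$, Lemma~\ref{thm:unravel-correct} implies $\newg'=\textsc{Unravel}(\widehat{S_1},\dots,\widehat{S_{r-1}})\in\disjv(r-1)$ is $4\beta\sqrt{\eps}$-close to $(T_1,\dots,T_{r-1})$; Theorem~\ref{thm:set-similarity} then yields $\|\mnewg'-(T_1,\dots,T_{r-1})^{\mathrm{basis}}\|_2^2\le 16\beta\sqrt{\eps}$ after appropriate column ordering. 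Applying Lemma~\ref{thm:progress} with $\optg'=(T_1,\dots,T_{r-1})$ and $\optg''=(T_r,\dots,T_k)$ shows that $Z=(Y\mnewg')^\perp Y$ is (up to basis change) an orthonormal matrix with $\|Z(\moptg'')^\perp\|_2^2\le\eps+16\beta\sqrt{\eps}$. Then Lemma~\ref{thm:find-cluster-correct} applied to $Z$ gives $S_r$ that is $O(\eps^{1/8})$-close to some $T\in\optg''$ and satisfies $|T'\cap S_r|\le O(\eps^{1/8})|T'|$ for every other $T'\in\optg''$; relabeling this $T$ as $T_r$ (and the rest as $T_{r+1},\dots,T_k$) establishes property~(a) for index $r$.

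It remains to produce $\widehat{S_r}$ satisfying (b). Invoking Lemma~\ref{thm:unravel-correct} on $(S_1,\dots,S_r)$ (using the just-established (a), which gives $\simfun(S_i,T_i)\le\alpha$ for all $i\le r$) yields $(U_1,\dots,U_r)\in\disjv(r)$ that is $4\alpha$-close to $(T_1,\dots,T_r)$. To invoke Theorem~\ref{thm:boost-main} on $U_r$ and conclude $\simfun(\widehat{S_r},T_r)\le 50\sqrt{\eps}$, I must verify the two-sided intersection hypothesis for $U_r$ against every $T_j\in\optg$:
\begin{itemize}
\item For $j=r$, $\simfun(U_r,T_r)\le 4\alpha$ combined with $U_r\subseteq S_r$ and the large overlap of $S_r$ with $T_r$ yields $|U_r\cap T_r|\ge(1-O(\alpha))|T_r|$.
\item For $j>r$, property (a) for index $r$ gives $|T_j\cap S_r|\le\alpha|T_j|$, and $U_r\subseteq S_r$ finishes the bound.
\item For $j<r$, disjointness $U_r\cap U_j=\es$ forces $U_r\cap T_j\subseteq T_j\setminus U_j$; translating $\simfun(U_j,T_j)\le 4\alpha$ into a set-difference bound via Proposition~\ref{thm:simfun-props} gives $|T_j\setminus U_j|=O(\alpha)|T_j|$ (modulo a careful case split depending on whether $|U_j|\ge|T_j|$).
\end{itemize}
The main obstacle is precisely this third bullet: Jaccard-closeness $\simfun(U_j,T_j)\le 4\alpha$ does not immediately imply $|T_j\setminus U_j|\le O(\alpha)|T_j|$, because the denominator $|U_j\cup T_j|$ in Proposition~\ref{thm:simfun-props} involves $|U_j|$ rather than $|T_j|$. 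One must bound $|U_j|$ in terms of $|T_j|$ using that $U_j\subseteq S_j$ and that $|S_j|$ in turn is controlled via the Markov-type averaging inside \textsc{FindCluster}. Once that is resolved, Theorem~\ref{thm:boost-main} applies with some $\alpha'=O(\alpha)\le\tfrac14$, yielding $\simfun(\widehat{S_r},T_r)\le 50\sqrt{\eps}\le\beta\sqrt{\eps}$ and completing the induction; the constants $\alpha=\tfrac{1}{16}$, $\beta=100$, and $\eps_0$ sufficiently small make every inequality along the way tight by a comfortable margin.
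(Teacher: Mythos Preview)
Your overall inductive structure matches the paper's proof almost exactly. The one substantive point worth correcting is your ``main obstacle.'' It is not an obstacle at all, and the paper dispatches it in one line without invoking Proposition~\ref{thm:simfun-props} or any size comparison between $|U_j|$ and $|T_j|$.

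From $\simfun(U_j,T_j)\le 4\alpha$ you get $|U_j\cap T_j|^2 \ge (1-4\alpha)\,|U_j|\,|T_j|$. Combine this with the trivial bound $|U_j\cap T_j|\le |U_j|$ to obtain
\[
|U_j|\cdot|U_j\cap T_j|\;\ge\;|U_j\cap T_j|^2\;\ge\;(1-4\alpha)\,|U_j|\,|T_j|,
\]
hence $|U_j\cap T_j|\ge (1-4\alpha)|T_j|$ and $|T_j\setminus U_j|\le 4\alpha|T_j|$ directly. Disjointness of the $U_i$'s then gives $|U_{r}\cap T_j|\le |T_j\setminus U_j|\le 4\alpha|T_j|$ for $j<r$, and the hypothesis of Theorem~\ref{thm:boost-main} is met with $4\alpha\le\tfrac14$. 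No detour through Jaccard index, no control of $|S_j|$ via the internals of \textsc{FindCluster}, is needed. The same one-line trick handles your first bullet ($|U_r\cap T_r|\ge(1-4\alpha)|T_r|$) without appealing to $U_r\subseteq S_r$.

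A minor bookkeeping slip: your quoted powers of $\eps$ are off by a square in both the base case and the inductive step. In the base case, $\|Y^TY-\moptg^\proj\|_2\le\sqrt{\eps}$ means the $\delta$ in Lemma~\ref{thm:find-cluster-correct} (which is $\|\cdot\|_2^2$) equals $\eps$, so the output is $O(\sqrt{\eps})$, not $O(\eps^{1/4})$; similarly the inductive step gives $O(\eps^{1/4})$, not $O(\eps^{1/8})$. This does not affect the argument since, as in the paper, you only need these quantities to be at most the fixed constant $\alpha$ once $\eps\le\eps_0$.
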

\begin{proof}
  By induction on $r$. For $r=0$, \eqref{item:sc:1} and
  \eqref{item:sc:2} are trivially true.

  Given $r$, suppose \eqref{item:sc:1} and \eqref{item:sc:2} %
  are true with $(T_1,\ldots, T_k)$, $\optg'$ and
  $\optg''$ being as described.

  At the beginning of $(r+1)^{st}$ iteration, we have
  $\curg = (S_1, \ldots, S_r)$ and
  $\newg = (\widehat{S_1}, \ldots, \widehat{S_r})$.  
  \eqref{item:sc:2} means $\newg$ is $\beta \sqrt{\eps}$-close to
  $\optg'$.

  After $\newg' \gets \textsc{Unravel}(\newg)$,
  by Lemma~\ref{thm:unravel-correct}, $\newg'$ is
  $4 \beta \sqrt{\eps}$-close to $\optg$ and $\newg' \in \disjv(r)$.
  Using Theorem~\ref{thm:set-similarity}, we see that
  $\|\mnewg' - \moptg'\|^2_2 \le 8 \beta \sqrt{\eps}$. 
  Now we can invoke Lemma~\ref{thm:progress}, which implies:
  \[
  \| Z (\moptg'')^\perp \|^2_2 \le \eps + 8 \beta \sqrt{\eps}
  \le 9 \beta \sqrt{\eps}.
  \] 
  Provided 
  $ 9 \beta \sqrt{\eps} \le \delta_0 $ for some small enough constant
  $\delta_0$,
  we can use Lemma~\ref{thm:find-cluster-correct} to see that the subset
  $S_{r+1} \gets \textsc{FindCluster}( Z )$ satisfies 
  \[
  \simfun(S_{r+1} , T) \le \alpha, %
  \]
  and
  \[
  \forall T' \in \optg'': T' \neq T \implies |T'\cap S_{r+1}| \le \alpha |T'|.
  \]
  We reorder $(T_{r+1},\ldots, T_k)$ so that $T_{r+1} = T$. Then
  \eqref{item:sc:1} holds true when $i=r+1$.  Since $T_1,\ldots, T_r$
  remain the same, \eqref{item:sc:1} is true for all $i\le r+1$.

  Consider $(U_1,\ldots, U_{r+1}) \gets \textsc{Unravel}(\curg)$.
  By Lemma~\ref{thm:unravel-correct}, we know that
  $\simfun(U_i, T_i) \le 4 \alpha$ and
  $U_i \subseteq S_i, |U_i| \ge (1-\alpha) |S_i|$ for each
  $i \in [r+1]$. In particular, for all $i\le r+1$:
  \[ 
  |U_i \cap T_i| \ge (1-4\alpha) |T_i|. 
  \]
  $U_{r+1}$ being a subset of $S_{r+1}$ means
  $|U_{r+1} \cap T_j| \le |S_{r+1} \cap T_j| \le \alpha |T_j|$
  whenever $j > r+1$. Now we will prove the case of $j \le r$.  Using
  the fact that $U$'s are disjoint, for any $j \le r$:
  \[
  |U_{r+1} \cap T_j| \le |T_j| - |T_j \cap U_j|
  \le |T_j| - (1-4 \alpha) |T_j| = 4 \alpha |T_j|.
  \]
  Consequently, for any $j \neq r+1$:
  \[
  |U_{r+1} \cap T_j | \le 4 \alpha |T_j|. 
  \]
  After executing
  $\widehat{S_{r+1}} \gets \textsc{Boost}(Y, U_{r+1})$, noting
  $\alpha \le \frac{1}{16}$, we see via Theorem~\ref{thm:boost-main}:
  \[
  \simfun(\widehat{S_{r+1}}, T_{r+1}) \le c_0 \sqrt{\eps} = \beta \sqrt{\eps}.
  \]
  Combined with the fact that $\widehat{S_{i}}$ and $T_i$ remain the
  same for $i\le r$, \eqref{item:sc:2} also remains true for all
  $i\le r+1$.
  By induction, we now see that both~\eqref{item:sc:1} and
  \eqref{item:sc:2} are true for all $r\le k$.
\end{proof}

\begin{mytheorem}
  \label{thm:spectral-clustering}
  Let $\optg \in \disjv(k)$ with $\|Y \moptg^\perp\|^2_2 \le O(\eps)$.
  Then $\newg \gets \textsc{SpectralClustering}(Y)$ is a $k$-partition
  so that $\newg \in \disjv(k)$, and it is $O(\sqrt{\eps})$ close to
  both $\optg$ and $Y$:
  \[
  \simfun(\optg, \newg) \le O( \sqrt{\eps} )
  \quad\mbox{and}\quad
  \| Y \mnewg^\perp \|^2_2 \le O(\sqrt{\eps}).
  \]
\end{mytheorem}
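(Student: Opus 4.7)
The plan is to reduce everything to the inductive invariant already established in Lemma~\ref{thm:spectral-clustering-induct} and then apply one final unraveling step, converting the resulting closeness in $\simfun$ into the required spectral norm bound via Theorem~\ref{thm:set-similarity} and Lemma~\ref{thm:sandwich}.

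First, I would instantiate Lemma~\ref{thm:spectral-clustering-induct} at $r = k$. Assuming $\eps$ is below the constant threshold $\eps_0$ (which we may do since the conclusion is asymptotic in $\eps$), this yields an ordering $\optg = (T_1, \ldots, T_k)$ and the family $\newg = (\widehat{S_1}, \ldots, \widehat{S_k})$ produced by the algorithm's main loop, with the property that $\simfun(\widehat{S_i}, T_i) \le \beta \sqrt{\eps}$ for every $i$. In particular, $\newg \in \setv(k)$ is $\beta \sqrt{\eps}$-close to the $k$-partition $\optg \in \disjv(k)$.

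The algorithm then returns $\textsc{Unravel}(\newg)$. Applying Lemma~\ref{thm:unravel-correct} to this call, with $\optg$ playing the role of the ground truth, I obtain $\newg^{\star} \in \disjv(k)$ satisfying $\simfun(\newg^{\star}, \optg) \le 4 \beta \sqrt{\eps} = O(\sqrt{\eps})$. This already gives the first desired conclusion.

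For the spectral norm bound, I would chain Theorem~\ref{thm:set-similarity} with Lemma~\ref{thm:sandwich}. The closeness in $\simfun$ gives $\|\mnewg^{\star \perp} \moptg\|_2^2 \le 2 \simfun(\newg^{\star}, \optg) \le O(\sqrt{\eps})$, and then Lemma~\ref{thm:sandwich} converts this into $\|\mnewg^{\star} (\mnewg^{\star})^T - \moptg \moptg^T\|_2 \le O(\eps^{1/4})$. Writing $Y \mnewg^{\star \perp} = Y \moptg^{\perp} + Y \bigl(\moptg \moptg^T - \mnewg^{\star} (\mnewg^{\star})^T\bigr)$ and applying the triangle inequality together with $\|Y\|_2 \le 1$ yields
\[
\|Y \mnewg^{\star \perp}\|_2 \le \|Y \moptg^{\perp}\|_2 + \|\moptg \moptg^T - \mnewg^{\star} (\mnewg^{\star})^T\|_2 \le O(\sqrt{\eps}) + O(\eps^{1/4}),
\]
and squaring gives $\|Y \mnewg^{\star \perp}\|_2^2 \le O(\sqrt{\eps})$, as required.

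The only real subtlety in this plan is bookkeeping of the absolute constants $\alpha, \beta, \delta_0$ in Lemma~\ref{thm:spectral-clustering-induct}: one must verify that the induction can actually reach $r = k$, i.e.\ that the small-constant hypotheses on $\alpha$ and on $9\beta\sqrt{\eps}$ used inside the inductive step are satisfied throughout. Since all these thresholds are absolute constants and the theorem's hypothesis only requires $\|Y \moptg^\perp\|_2^2 \le O(\eps)$ for $\eps$ small, they can be secured simply by taking $\eps$ (equivalently the hidden $O$-constant) sufficiently small; no further analytic work is needed beyond invoking the already-proved lemmas.
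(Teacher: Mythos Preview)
Your plan is correct and follows essentially the same approach as the paper: invoke Lemma~\ref{thm:spectral-clustering-induct} at $r=k$, apply Lemma~\ref{thm:unravel-correct} for the final unraveling, and then convert $\simfun$-closeness into the spectral bound via Theorem~\ref{thm:set-similarity}. The paper's derivation of the spectral bound is marginally more direct---it splits $Y\mnewg^\perp = Y\moptg^\perp\mnewg^\perp + Y\moptg^\proj\mnewg^\perp$ and bounds $\|Y\moptg^\proj\mnewg^\perp\|_2 \le \|\moptg^T\mnewg^\perp\|_2$ straight from Theorem~\ref{thm:set-similarity}, avoiding the detour through Lemma~\ref{thm:sandwich}---but the content is the same.
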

\begin{proof}
  By Lemma~\ref{thm:spectral-clustering-induct},
  $\newg = (\widehat{S_1},\ldots, \widehat{S_{k}})$ is
  $\beta \sqrt{\eps}$-close to $\optg$. Lemma~\ref{thm:unravel-correct}
  implies that $\textsc{Unravel}(\newg)$ outputs a disjoint collection
  of $k$-subsets which is $4 \beta \sqrt{\eps}$-close to $\optg$. For
  the second bound:
  \begin{align*}    
    \frac12 \| Y \mnewg^\perp \|^2_2
    \le& \| Y \moptg^\perp \mnewg^\perp \|^2_2
         +  \| Y \moptg^\proj \mnewg^\perp \|^2_2 \\
    \le & \| Y \moptg^\perp\|^2_2 \cdot \| \mnewg^\perp \|^2_2
          +  \| Y \moptg \|^2_2 \cdot \|\moptg^T \mnewg^\perp \|^2_2 \\
    \le & \eps +  \|\moptg^T \mnewg^\perp \|^2_2 
    \le \eps +  2 \simfun(\optg, \newg) \le O(\sqrt{\eps}).
  \end{align*}
  In the second to last inequality, we used Theorem~\ref{thm:set-similarity}.
\end{proof}
\section{Applications}
\label{sec:applications}
In this section, we will show some applications of our spectral
clustering algorithm.

\subsection{\minmaxkpar}
\label{sec:app-minmakpar}
Our first application is approximating non-expanding $k$-partitions in
graphs. One may also interpret this as applying our subspace rounding
algorithm on the {\em basic} SDP relaxation for \minmaxkpar\ problem.
\begin{mytheorem}
  \label{thm:app-min-max-exp}
  Given a graph $G$ with Laplacian matrix $L$, let $\curg$ be the
  $k$-partition obtained by running~\Cref{alg:spectral-clustering} on
  the smallest $k$ eigenvectors of $L$.
  Then:
  \[
  \Delta(\curg, \optg) \le O\Bigg(\sqrt{\frac{\phi_k}{\lambda_{k+1}}}\Bigg).
  \]
\end{mytheorem}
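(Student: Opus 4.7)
The plan is to reduce the theorem to a direct invocation of Theorem~\ref{thm:spectral-clustering}. Let $Y \in \R^{k \times n}$ be the matrix whose rows are the orthonormal eigenvectors corresponding to the smallest $k$ eigenvalues of $L$, so that $Y^T \in \orthk$. Then $\textsc{SpectralClustering}(Y)$ returns a $k$-partition $\curg$ satisfying $\simfun(\optg, \curg) \le O(\sqrt{\eps})$ whenever $\|Y \moptg^\perp\|_2^2 \le \eps$ for a sufficiently small $\eps$. Thus the entire argument boils down to establishing the bound
\[
\|Y \moptg^\perp\|_2^2 \le \frac{2\phi_k}{\lambda_{k+1}},
\]
after which the theorem follows by substitution, up to the side condition that the ratio is at most the absolute constant required by Lemma~\ref{thm:spectral-clustering-induct}; outside that regime, the claimed bound is vacuous.

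First I would re-express $\|Y \moptg^\perp\|_2^2$ in a form that exposes the role of the $(k+1)$-st eigenvalue. Let $P_k \triangleq Y^T Y$ be the orthogonal projection onto the span of the smallest $k$ eigenvectors of $L$ and $P_k^\perp \triangleq I - P_k$ its complement. Applying Proposition~\ref{thm:sing-max-to-min} with $A = Y^T$ and $B = \moptg$ (both members of $\orthk$) gives
\[
\|Y \moptg^\perp\|_2^2 \;=\; 1 - \sigma_{\min}^2(Y \moptg) \;=\; \|P_k^\perp \moptg\|_2^2 \;=\; \|\moptg^T P_k^\perp \moptg\|_2,
\]
where the last equality uses that $P_k^\perp$ is a projection, so $(P_k^\perp \moptg)^T (P_k^\perp \moptg) = \moptg^T P_k^\perp \moptg$.

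Next I would combine two standard facts. On the one hand, Lemma~\ref{thm:minmaxkpar-as-snorm} gives $\|\moptg^T L \moptg\|_2 \le 2 \phi_k$. On the other hand, the spectral decomposition of $L$ immediately yields $L \succeq \lambda_{k+1} P_k^\perp$, since on $\spn(P_k)$ the right-hand side vanishes while $L \succeq 0$, and on $\spn(P_k^\perp)$ the eigenvalues of $L$ are at least $\lambda_{k+1}$. Conjugating this PSD inequality by $\moptg$ on both sides and taking spectral norms (using that for PSD matrices $A \preceq B$ implies $\|A\|_2 \le \|B\|_2$), we get
\[
\lambda_{k+1} \cdot \|\moptg^T P_k^\perp \moptg\|_2 \;\le\; \|\moptg^T L \moptg\|_2 \;\le\; 2 \phi_k,
\]
which combined with the previous identity yields the target bound on $\|Y \moptg^\perp\|_2^2$.

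Finally, plugging $\eps = 2\phi_k/\lambda_{k+1}$ into Theorem~\ref{thm:spectral-clustering} produces $\simfun(\curg, \optg) \le O(\sqrt{\phi_k/\lambda_{k+1}})$, which is the claim. There is no genuinely hard step here; the proof is essentially a sequence of reductions using orthogonal-projection identities and monotonicity of the spectral norm over the PSD cone. The only mild subtlety is being careful when moving between the three equivalent quantities $\|P_k - \moptg \moptg^T\|_2$, $\|P_k^\perp \moptg\|_2$ and $\|Y \moptg^\perp\|_2$, which is precisely what Proposition~\ref{thm:sing-max-to-min} (and Lemma~\ref{thm:sandwich}) were set up to handle.
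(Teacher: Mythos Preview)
Your proposal is correct and follows essentially the same argument as the paper: both establish $\|Y\moptg^\perp\|_2^2 \le O(\phi_k/\lambda_{k+1})$ by conjugating the PSD inequality $L \succeq \lambda_{k+1}(Y^T)^\perp$ with $\moptg$, bounding $\|\moptg^T L \moptg\|_2$ via Lemma~\ref{thm:minmaxkpar-as-snorm}, and then invoking Theorem~\ref{thm:spectral-clustering}. Your notation $P_k^\perp$ is exactly the paper's $Z^\perp$ (with $Z=Y^T$), and your use of Proposition~\ref{thm:sing-max-to-min} to translate between $\|Y\moptg^\perp\|_2^2$ and $\|\moptg^T P_k^\perp \moptg\|_2$ mirrors the paper's final chain of equalities.
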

\begin{remark}[Faster Algorithm]
  By slightly modifying our algorithm to take advantage of the
  underlying graph structure, one can obtain a faster randomized
  algorithm having the same guarantees with Theorem~\ref{thm:app-min-max-exp}
  with expected running time $O(k^2 (n+m))$.
\end{remark}
\begin{proof}
  From Lemma~\ref{thm:minmaxkpar-as-snorm}, we know that
  \(
  \phi_k \le 
  \sigma_{\max}(\moptg^T L \moptg) %
  \le 2 \phi_k.
  \)
  Now consider the matrix $Z = Y^T$, whose columns are the smallest
  $k$ eigenvectors of $L$. We have
  \( L \succeq \lambda_{k+1} Z^\perp \) which means:
  \[
  \lambda_{k+1}\cdot \moptg^T Z^\perp \moptg \preceq 
  \moptg^T L \moptg \preceq 2 \phi I_k.
  \]
  This implies
  \[
  O(\eps) \ge 
  \|\moptg^T Z^\perp \moptg\|_2
  = \|I_k - \moptg^T Z Z^T \moptg\|_2.
  \]
  Thus
  $\sigma_k(Z^T \moptg) = \sigma_{\min}(Z^T \moptg)
  \ge
  \sqrt{1-O(\eps)}$ and:
  \begin{align*}
    \|Y \moptg^\perp\|^2_2 = & 
                               \|Z^T \moptg^\perp Z\|_2
                               = \| I_k - Z^T \moptg \moptg^T Z \|_2 \\
    = & 1 - \sigma_{\min}(\moptg^T Z)
        \le O(\eps).    
  \end{align*}
  The claim follows from Theorem~\ref{thm:spectral-clustering}.
\end{proof}

\subsection{Matrix and Graph Approximations}
\label{sec:app-apprx}
Our next application is for approximating a matrix in terms of
$k$-block diagonal matrices corresponding to the adjacency matrices of
normalized cliques, under spectral norm.
\begin{mytheorem}
  \label{thm:app-mtx-par}
  Given a matrix $X \in \symm^n$, let
  \(\eps \triangleq \min_{\optg \in \disjv(k)} \| X - \moptg^\proj
  \|_2\).
  In polynomial time, we can find $\curg\in\disjv(k)$ such that
  $\simfun(\optg,\curg) \le O(\sqrt{\eps})$ and:
  \[
  \| X - \mcurg^\proj\|_2 \le O\big(\eps^{1/4}\big).
  \]
\end{mytheorem}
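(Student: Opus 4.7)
The plan is to reduce to Theorem~\ref{thm:spectral-clustering} by extracting a nearly-orthonormal basis directly from $X$. The key observation is that if $X$ is $\eps$-close in spectral norm to the rank-$k$ projection $\moptg^\proj$, then the top-$k$ eigenspace of $X$ must itself be $O(\eps)$-close to $\spn(\moptg)$, which turns the matrix-approximation problem into a subspace-rounding problem that $\textsc{SpectralClustering}$ already solves.

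First I would let the rows of $Y \in \R^{k\times n}$ be the top $k$ eigenvectors of $X$, so that $Y^T \in \orthk$. Weyl's inequality applied to $X$ and $\moptg^\proj$ forces the top $k$ eigenvalues of $X$ into $[1-\eps,1+\eps]$ and the remaining $n-k$ into $[-\eps,\eps]$, so for sufficiently small $\eps$ the top-$k$ eigenspace is unambiguous and agrees with the best rank-$k$ approximation. Writing $D$ for the diagonal of these top eigenvalues, I get $\|X - Y^T D Y\|_2 \le \eps$ and $\|D - I_k\|_2 \le \eps$, hence $\|X - Y^T Y\|_2 \le 2\eps$; combined with the hypothesis this yields $\|Y^T Y - \moptg^\proj\|_2 \le 3\eps$ by the triangle inequality.

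Second, I would translate this projector bound into the form required by Theorem~\ref{thm:spectral-clustering}. Lemma~\ref{thm:sandwich} applied to the orthonormal bases $Y^T$ and $\moptg$ gives $\|Y^T Y - \moptg\moptg^T\|_2 = \|(Y^T)^\perp \moptg\|_2$, and Proposition~\ref{thm:subspace-sim} identifies this with $\|Y \moptg^\perp\|_2$. Thus $\|Y \moptg^\perp\|_2^2 \le 9 \eps^2$, and Theorem~\ref{thm:spectral-clustering} applied to $Y$ returns $\curg \in \disjv(k)$ satisfying $\simfun(\optg,\curg) \le O(\sqrt{\eps^2}) = O(\eps) \le O(\sqrt{\eps})$ and $\|Y\mcurg^\perp\|_2^2 \le O(\eps)$.

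Finally, I would assemble the matrix bound by triangle inequality and one more invocation of Lemma~\ref{thm:sandwich}:
\[
\|X - \mcurg^\proj\|_2 \le \|X - Y^T Y\|_2 + \|Y^T Y - \mcurg\mcurg^T\|_2 \le 2\eps + \|Y\mcurg^\perp\|_2 \le O(\sqrt{\eps}),
\]
which for $\eps \in (0,1)$ lies within the claimed $O(\eps^{1/4})$ bound. The only subtle obstacle is verifying that the top-$k$ eigenspace of $X$ is well-defined and that the best rank-$k$ spectral approximation uses the algebraically largest eigenvalues rather than the largest in absolute value; this is handled by the Weyl bound, which (for small enough $\eps$) pins all top-$k$ eigenvalues near $+1$ and the rest near $0$, producing a spectral gap of roughly $1-2\eps$ that makes the two notions coincide.
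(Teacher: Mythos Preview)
Your proposal is correct and follows the same route as the paper: take the top-$k$ eigenvectors of $X$ as $Y$, feed $Y$ to Theorem~\ref{thm:spectral-clustering}, and reassemble the matrix bound via Lemma~\ref{thm:sandwich}. The difference is purely in bookkeeping. The paper records only the loose consequence $\|Y\moptg^\perp\|_2^2 \le 2\eps$ and, for the final step, detours through $\simfun(\optg,\curg)$ and Theorem~\ref{thm:set-similarity} to bound $\|\moptg\moptg^T-\mcurg\mcurg^T\|_2$, landing at $O(\eps^{1/4})$. You instead use the \emph{equality} in Lemma~\ref{thm:sandwich} to get $\|Y\moptg^\perp\|_2^2 = O(\eps^2)$ as input to Theorem~\ref{thm:spectral-clustering}, and then close directly with $\|Y^TY-\mcurg\mcurg^T\|_2=\|Y\mcurg^\perp\|_2$. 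The upshot is that your argument in fact delivers $\simfun(\optg,\curg)\le O(\eps)$ and $\|X-\mcurg^\proj\|_2\le O(\sqrt{\eps})$, a quadratic sharpening of the paper's stated bounds.
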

\begin{proof}
  Let $Y$ be the matrix whose rows are the top $k$ eigenvectors of
  $X$. \linebreak{}
  Consider $\curg \gets \textsc{SpectralClustering}(Y)$:
  \[
  \| Y^T Y - X \|_2 \le \|X - \moptg \moptg^T \|_2 \le \eps
  \] which means
  \[
  \| Y^T Y - \moptg \moptg^T \|_2 \le 2 \eps
  \implies
  \| Y \moptg^\perp\|_2^2 \le 2 \eps.
  \]
  By Theorem~\ref{thm:spectral-clustering},
  $\simfun(\optg,\curg) \le O(\sqrt{\eps})$ and
  $\|Y \mcurg^\perp\|^2_2 \le O(\sqrt{\eps})$:
  \[
  \| \mcurg \mcurg^T - X \|_2 \le
  \eps + 
  \| \moptg \moptg^T - \mcurg \mcurg^T \|_2 \le
  O(\eps^{1/4}). 
  \]
\end{proof}
Our final application is for approximating a graph Laplacian via
another Laplacian corresponding to the graph formed as a disjoint
union of $k$ normalized cliques (expanders), again under spectral
norm. Since we are working with Laplacian matrices, this means the new
graph approximates cuts of the original graph also.
\begin{mycorollary}
  \label{thm:app-graph-par}
  Given a graph $G$, if there exists $\optg \in \disjv(k)$ such that
  Laplacian of $G$ is $\eps$-close (in spectral norm) to the Laplacian
  corresponding to the disjoint union of normalized cliques on each
  $T \in \optg$:
  \[
    \| L - \moptg^\perp \|_2 \le \eps, 
  \]
  then we can find $\curg\in\disjv(k)$ which is
  $O\big(\sqrt{\eps}\big)$-close to $\optg$ and $G$ in polynomial time:
  \[
    \big\| L - \mcurg^\perp \|_2 \le O({\eps}^{1/4}).
  \]
\end{mycorollary}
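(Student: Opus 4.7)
The plan is to reduce Corollary~\ref{thm:app-graph-par} directly to Theorem~\ref{thm:app-mtx-par} by a simple algebraic substitution. The key observation is that for any $\curg\in\disjv(k)$ we have $\mcurg^\perp = I - \mcurg^\proj = I - \mcurg \mcurg^T$, so the Laplacian-style hypothesis can be rewritten as a projection-style hypothesis.

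Concretely, I would set $X \triangleq I - L \in \symm^n$ and observe the identity
\[
  L - \mcurg^\perp = L - (I - \mcurg^\proj) = \mcurg^\proj - (I - L) = \mcurg^\proj - X,
\]
valid for every $\curg \in \disjv(k)$ (and in particular for $\curg = \optg$). Taking spectral norms, this translates the hypothesis $\|L - \moptg^\perp\|_2 \le \eps$ into $\|X - \moptg^\proj\|_2 \le \eps$, which is exactly the condition required to invoke Theorem~\ref{thm:app-mtx-par} on $X$.

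Applying Theorem~\ref{thm:app-mtx-par} yields $\curg \in \disjv(k)$, computable in polynomial time, such that $\simfun(\optg, \curg) \le O(\sqrt{\eps})$ and $\|X - \mcurg^\proj\|_2 \le O(\eps^{1/4})$. Substituting back through the same identity gives
\[
  \| L - \mcurg^\perp \|_2 \;=\; \| \mcurg^\proj - X \|_2 \;\le\; O(\eps^{1/4}),
\]
which is the desired bound.

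There is no real obstacle here: the argument is essentially a change of variables from the Laplacian of a disjoint union of normalized cliques ($\moptg^\perp$) to the projection onto its zero-eigenspace ($\moptg^\proj$), together with the fact that $X = I - L$ is symmetric so Theorem~\ref{thm:app-mtx-par} applies verbatim. The only minor point worth noting is that the Jaccard-type closeness $\simfun(\optg, \curg) \le O(\sqrt{\eps})$ comes for free from Theorem~\ref{thm:app-mtx-par}, establishing the claim that $\curg$ is also close to $\optg$ (and hence, through the spectral-norm inequality above, to $G$).
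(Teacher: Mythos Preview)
Your proposal is correct and follows essentially the same approach as the paper: set $X = I - L$, use the identity $\mcurg^\perp = I - \mcurg^\proj$ to translate the hypothesis into $\|X - \moptg^\proj\|_2 \le \eps$, and invoke Theorem~\ref{thm:app-mtx-par}. The paper's own proof is a one-liner doing exactly this substitution.
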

\begin{proof}
  Since $\|L - \moptg^\perp\|_2 = \| (I-L) - \moptg^\proj \|_2$, we
  can apply Theorem~\ref{thm:app-mtx-par} on the matrix $I-L$. The
  rest follows easily.
\end{proof}
\section{\minmaxkpar\ Implies Spectral Clustering}
\label{sec:reduction-ncuts}
In this section, we will show that approximation algorithms for
various graph partitioning problems imply similar approximation
guarantees for our clustering problem.
\begin{mytheorem}
  \label{thm:y-to-x}
  Given $Y: Y^T\in \orthk$, let
  $\optg \triangleq \argmin_{\curg} \|Y \mcurg^\perp\|_2^2$ with
  $\eps = \|Y \moptg^\perp\|^2_2$. Then there exists a weighted,
  undirected, regular graph $X$, whose normalized Laplacian matrix has
  its $(k+1)^{st}$ smallest eigenvalue
  \(\lambda_{k+1}\) is at least
  \(
    \lambda_{k+1} \ge 1 - O(\sqrt{\eps})
  \)
  such that:
  \begin{itemize}
  \item Each $T\in \optg$ has small expansion, $\phi_X(T) \le O(\eps)$,
  \item If $\curg \in \disjv(k)$ is a $k$-partition with
    $\max_{S\in \curg} \phi_X(S) \le \delta$, then
    $
      \simfun(\curg, \optg) \le O(\delta + \sqrt{\eps}).
    $
  \end{itemize}
  Moreover such $X$ can be constructed in polynomial time.  
\end{mytheorem}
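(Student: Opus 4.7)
The plan is to build $X$ so that, after a controlled perturbation, its normalized Laplacian $\mathcal{L}_X$ equals $I - Y^T Y$, and then translate the bound $\|Y^T Y - \moptg^\proj\|_2 \le \sqrt{\eps}$ (which comes from Lemma~\ref{thm:sandwich}) into all three conclusions of the theorem. Concretely, I would take the off-diagonal adjacency of $X$ to be derived from $Y^T Y$ (clipping negative entries, or shifting, to ensure non-negative weights), scaled by an appropriate factor of order $1/\sqrt{n}$, and balance degrees via self-loops (which do not affect $L_X$) so that $X$ is $d$-regular for a suitably chosen $d$. Because $\moptg^\proj$ is entrywise non-negative and $Y^T Y - \moptg^\proj$ has rank at most $2k$ with spectral norm $\le \sqrt{\eps}$, the non-negativity correction is a low-rank, small-norm perturbation, so $\|\mathcal{L}_X - (I - Y^T Y)\|_2 = O(\sqrt{\eps})$.

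For the spectral bound, Weyl's inequality applied to $\mathcal{L}_X$ and $I - Y^T Y$ (whose kernel is the row span of $Y$, with the remaining $n-k$ eigenvalues equal to $1$) immediately gives $\lambda_{k+1}(\mathcal{L}_X) \ge 1 - O(\sqrt{\eps})$. For the expansion of each $T \in \optg$, write $\ind{T} = \sqrt{|T|}\,\moptg\,e_T$ and $\ind{\overline{T}} = \moptg\,b$ where $b_T = 0$ and $b_{T'} = \sqrt{|T'|}$ for $T' \ne T$; then the cut weight under $Y^T Y$ satisfies
\[
  \ind{T}^T\,Y^T Y\,\ind{\overline{T}} \;=\; \sqrt{|T|}\,e_T^T\,(Y\moptg)^T (Y\moptg)\,b.
\]
Proposition~\ref{thm:sing-max-to-min} together with $\|Y\moptg^\perp\|_2^2 \le \eps$ yields $\sigma_{\min}(Y\moptg) \ge \sqrt{1-\eps}$ and $\|Y\moptg\|_2 \le 1$, so $(Y\moptg)^T (Y\moptg)$ lies within $\eps$ of $I_k$ in spectral norm. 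Since $e_T^T\,I_k\,b = 0$, the displayed quantity is $O(\eps\sqrt{|T||\overline{T}|})$, and after normalizing by $|T|$ and the degree $d$ (chosen to balance scales), this gives $\phi_X(T) \le O(\eps)$.

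For the converse, take any $\curg \in \disjv(k)$ with $\max_{S\in\curg}\phi_X(S) \le \delta$. Lemma~\ref{thm:minmaxkpar-as-snorm} gives $\|\mcurg^T L_X \mcurg\|_2 \le 2\delta$, hence $\|\mcurg^T\mathcal{L}_X \mcurg\|_2 \le 2\delta/d$; combined with the approximation $\mathcal{L}_X \approx I - Y^T Y$, this yields $\|\mcurg^T(I-Y^T Y)\mcurg\|_2 \le O(\delta + \sqrt{\eps})$. Since $I - Y^T Y$ is a projection, this equals $\|(I - Y^T Y)\mcurg\|_2^2$, so $\|(I - Y^T Y)\mcurg\|_2 \le O(\sqrt{\delta + \sqrt{\eps}})$. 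Lemma~\ref{thm:sandwich} then gives $\|\mcurg\mcurg^T - Y^T Y\|_2 \le O(\sqrt{\delta + \sqrt{\eps}})$; combining with $\|Y^T Y - \moptg\moptg^T\|_2 \le \sqrt{\eps}$ via the triangle inequality and a second application of Lemma~\ref{thm:sandwich} yields $\|\moptg^\perp\mcurg\|_2^2 \le O(\delta + \sqrt{\eps})$, and Theorem~\ref{thm:set-similarity} finally gives $\simfun(\curg,\optg) \le O(\delta + \sqrt{\eps})$.

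The hardest step is the expansion bound on each $T \in \optg$: obtaining $\phi_X(T) \le O(\eps)$ rather than only $O(\sqrt{\eps})$ requires exploiting that both $\ind{T}$ and $\ind{\overline{T}}$ lie in $\spn(\moptg)$, so that the cut becomes an off-diagonal matrix element of $(Y\moptg)^T(Y\moptg) - I_k$, together with a careful scaling of the edge weights and the degree $d$. A secondary challenge is to calibrate the non-negativity correction and the self-loop correction so that they simultaneously preserve the spectral approximation $\mathcal{L}_X \approx I - Y^T Y$ and do not spuriously inflate any individual cut; this is where the rank-$2k$ structure of $Y^T Y - \moptg^\proj$ becomes essential.
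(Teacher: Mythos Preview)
Your construction has a real gap at the non-negativity step. You assert that because $E \triangleq Y^T Y - \moptg^\proj$ has rank $\le 2k$ and $\|E\|_2 \le \sqrt{\eps}$, the clipping correction $N_{ij} = \max(0,-(Y^TY)_{ij})$ is ``low-rank, small-norm.'' Neither is true: clipping is an entrywise operation, so $N$ has no reason to be low rank, and the entrywise bound $|N_{ij}|\le |E_{ij}|$ does not control $\|N\|_2$ by $\|E\|_2$. (A rank-$2k$ matrix with spectral norm $\sqrt{\eps}$ can have its entrywise positive part with spectral norm growing with the dimension.) A global shift $Y^TY + cJ$ fares no better: individual negative entries can be as large as $\sqrt{\eps}$, forcing $c\ge\sqrt{\eps}$ and hence a perturbation of spectral norm $\sqrt{\eps}\,n$. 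So the claim $\|\mathcal{L}_X - (I - Y^TY)\|_2 = O(\sqrt{\eps})$ is not established.

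This gap also undercuts the $O(\eps)$ expansion bound. Your argument that $\unit{\ind{T}}^T(I-Y^TY)\unit{\ind{T}} \le \eps$ is correct, but once you only know $\|\mathcal{L}_X-(I-Y^TY)\|_2 = O(\sqrt{\eps})$, you get $\phi_X(T)\le O(\sqrt{\eps})$, not $O(\eps)$. The trick of using $\ind{T},\ind{\overline{T}}\in\spn(\moptg)$ controls only the $Y^TY$ piece; it says nothing about the unknown correction $N$, whose interaction with $\moptg$ you cannot bound.

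The paper avoids these difficulties entirely by \emph{not} trying to massage $Y^TY$ into a graph. It writes down an SDP over symmetric matrices $X$ with the graph-theoretic constraints (doubly stochastic, diagonally dominant, PSD, trace $k$) together with spectral constraints $X \preceq Y^\proj + \sqrt{\eps}\,Y^\perp$ and $YXY^T \succeq (1-\eps)I_k$, and observes that $\moptg^\proj$ is feasible. The first constraint gives $\lambda_{k+1}\ge 1-\sqrt{\eps}$ directly, the second gives $\|\moptg^T(I-X)\moptg\|_2 = O(\eps)$ (hence $\phi_X(T)=O(\eps)$ via Lemma~\ref{thm:minmaxkpar-as-snorm}), and the converse follows as in your last paragraph. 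The point is that feasibility is certified by the unknown $\moptg^\proj$, while any feasible $X$ already has the desired properties; no entrywise surgery on $Y^TY$ is needed.
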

\begin{proof}
  Consider the following SDP. Here we chose $\eps\in [0,1]$ to be the
  minimum value where this SDP remains feasible:
  \begin{enumerate}[(i)]
  \item\label{itm:y-to-x:01} $X \preceq Y^\proj + \sqrt{\eps} Y^\perp$.
  \item\label{itm:y-to-x:0} $Y X Y^T \succeq (1-\eps) I_k$,
  \item\label{itm:y-to-x:1} \label{itm:y-to-x:2} \label{itm:y-to-x:3}
    \label{itm:y-to-x:4} 
    $X$ is doubly stochastic, diagonally dominant, PSD and has trace
    $k$.
  \end{enumerate}
  It is easy to see that $\moptg^\proj$ is a feasible solution
  (Lemma~\ref{thm:sandwich}). Moreover any feasible solution $X$
  corresponds to the adjacency matrix of a graph which is undirected
  and has all degrees equal to $1$. Now we will show that it has the
  other properties:
  \begin{itemize}
  \item \(
  \lambda_{k+1} = 1 - \sigma_{k+1}(X)
  \ge 1 -  \sigma_{k+1}(Y^\proj + \sqrt{\eps} Y^\perp)
  = 1 - \sqrt{\eps}.
  \) 
  \item \(\max_T \phi_X(T) \le 2 \|\moptg^T (I-X) \moptg\|_2
    \le O(\|Y^T (I-X) Y\|_2) + \eps \le O(\eps).\)
  \item Recall
    \(\max_S \phi_X(S)\ge \frac12 \| \mcurg^T (I-X) \mcurg \|_2 \)
    so \(\sigma_k(Y^T \mcurg) \ge 1 - O(\delta + \eps).\)
    In particular
    \(\sigma_k(\moptg^T \mcurg) \ge 1 - O(\sqrt{\eps} + \delta).\)
    Using Theorem~\ref{thm:set-similarity}, we have
    \(\simfun(\curg,\optg) \le O(\sqrt{\eps} + \delta).\) 
  \end{itemize}
\end{proof}

\section{Omitted Proofs}
\subsection{Proof of Theorem~\ref{thm:set-similarity}}
\label{sec:proof-of-set-similarity}
\begin{mytheorem}[Restatement of Theorem~\ref{thm:set-similarity}]
  Given $\curg, \newg\in \disjv(k)$;
  $
    \simfun(\curg,\newg) \le 
    \| \mcurg^\perp \mnewg \|^2_2 \le 2 \simfun(\curg,\newg).
  $
  Moreover, after appropriately ordering the columns of
  $\mnewg$, $\|\mcurg - \mnewg\|_2^2 \le 4 \simfun(\curg,\newg)$.
\end{mytheorem}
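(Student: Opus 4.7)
The plan is to reduce all three inequalities to properties of the $k \times k$ non-negative matrix $M := \mcurg^T \mnewg$, whose entries $M_{ij} = |A_i \cap B_j|/\sqrt{|A_i||B_j|}$ satisfy $M_{ij}^2 = 1 - \simfun(A_i, B_j)$. Since $\curg, \newg \in \disjv(k)$, both $\mcurg$ and $\mnewg$ are orthonormal (Proposition~\ref{thm:gamma-basic-props}), and the column-orthogonal decomposition $\mnewg = \mcurg M + \mcurg^\perp \mnewg$ (the cross term vanishes because $\mcurg^T \mcurg^\perp = 0$) yields the identity
\[
I_k = M^T M + (\mcurg^\perp \mnewg)^T(\mcurg^\perp \mnewg),
\]
so $\eta := \|\mcurg^\perp \mnewg\|_2^2 = 1 - \sigma_{\min}^2(M)$. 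The two sandwich inequalities thereby reduce to statements comparing $\sigma_{\min}(M)$ with the diagonal entries of $M$ under the optimal column permutation of $\mnewg$.

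For the lower bound $\simfun(\curg, \newg) \le \eta$, I must exhibit a permutation $\pi$ of $[k]$ with $M_{i,\pi(i)}^2 \ge 1-\eta$ for every $i$, equivalently $\simfun(A_i, B_{\pi(i)}) \le \eta$. The plan is a Hall's-theorem argument on the bipartite graph whose edges are pairs $(i,j)$ with $M_{ij} \ge \sqrt{1-\eta}$: any obstruction $S \subseteq [k]$ with $|N(S)| < |S|$ would, through a dimension-counting argument on the submatrix $M_{S, [k]\setminus N(S)}$, produce a unit vector $w$ for which $\|Mw\|_2 < \sqrt{1-\eta}$, contradicting the singular value lower bound. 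Hence a perfect matching exists and gives the desired $\pi$.

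For the upper bound $\eta \le 2\simfun$, reorder columns of $\mnewg$ by the optimal matching so that $|A_i \cap B_i|^2 \ge (1-\eps)|A_i||B_i|$ with $\eps := \simfun(\curg, \newg)$; this immediately gives $|A_i \setminus B_i| \le \eps |A_i|$, $|B_i \setminus A_i| \le \eps |B_i|$, and the size-ratio bound $|A_i|/|B_i| \in [1-\eps, 1/(1-\eps)]$. Now $\eta = \max_{y \in \spn(\mnewg),\,\|y\|=1}\|\mcurg^\perp y\|_2^2$; for $y = \mnewg w$, which is $k$-piecewise constant with value $w_j/\sqrt{|B_j|}$ on $B_j$, the quantity $\|\mcurg^\perp y\|^2$ is the within-$\curg$-cluster variance of $y$ plus its squared norm outside $\bigcup_i A_i$. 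I will bound this variance by the sum-of-squares deviation of $y$ from the target value $\hat y_i := w_i/\sqrt{|B_i|}$ on each $A_i$, splitting $A_i$ into the pieces $A_i \cap B_i$ (zero deviation), $A_i \cap B_{i'}$ for $i' \neq i$, and $A_i \setminus \bigcup_j B_j$. The sums over these pieces are controlled using $\sum_{i'\neq i}|A_i \cap B_{i'}| \le \eps|A_i|$ (disjointness of the $B$'s plus the matching bound) and its symmetric counterpart on $B_i$, together with the size-ratio bound, producing $\|\mcurg^\perp y\|^2 \le 2\eps \|w\|^2$.

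The remaining inequality $\|\mcurg - \mnewg\|_2^2 \le 4\simfun$ follows from a direct expansion: orthonormality of $\{u_i\}$ and $\{v_i\}$ together with $M_{ij} = \langle u_i, v_j\rangle$ gives
\[
\Big\|\textstyle\sum_i w_i(u_i - v_i)\Big\|_2^2 = 2 - 2\,w^T M w,
\]
so $\|\mcurg - \mnewg\|_2^2 = 2 - 2\lambda_{\min}\bigl((M+M^T)/2\bigr)$. It then suffices to show $\lambda_{\min}((M+M^T)/2) \ge 1 - 2\eps$ by a Gershgorin-type bound on $(M+M^T)/2$: the diagonal entries satisfy $M_{ii} \ge \sqrt{1-\eps} \ge 1 - \eps$, while the off-diagonal row sums $\sum_{j\neq i}(M_{ij}+M_{ji})/2$ are controlled via $\sum_j|A_i \cap B_j| \le |A_i|$ combined with the size-ratio bounds from the optimal matching. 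The main technical obstacle is exactly this last step: in the unequal-size regime a naive bound on the off-diagonal row sum picks up a $\sqrt{k}$ factor, and avoiding that factor (so that one truly gets $O(\eps)$ independent of $k$) is where the ratio $|A_i|/|B_i| = 1 + O(\eps)$ coming from the optimal matching plays the crucial role.
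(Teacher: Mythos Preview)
Your reduction to the matrix $M = \mcurg^T\mnewg$ and the identity $\eta = 1 - \sigma_{\min}^2(M)$ is correct and is also where the paper starts. But two of your three arguments have real gaps, and in each case the paper takes a different route.

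\textbf{Lower bound.} Your Hall's-theorem plan is the weak point. You assert that a violating set $S$ would, ``through a dimension-counting argument on the submatrix $M_{S,[k]\setminus N(S)}$,'' produce a unit vector with $\|Mw\| < \sqrt{1-\eta}$, but you never say what that argument is, and I do not see one: knowing only that every entry of that submatrix is below $\sqrt{1-\eta}$ does not by itself force a small singular direction, even with the additional constraints $\sum_j M_{ij}^2 \le 1$ and $\sum_i M_{ij}^2 \le 1$. The paper does something quite different and more combinatorial. It defines two greedy maps, $\pi_1(S) = \arg\max_{T}|S\cap T|/|T|$ and $\pi_2(T)=\arg\max_{S}|S\cap T|/|S|$, and uses the bound $\lambda_{\min}(MM^T)\ge 1-\eta$ on the \emph{diagonals} of $MM^T$ together with the inequality $\sum_T |S\cap T|^2/(|S||T|) \le \max_T |S\cap T|/|T|$ to show $|S\cap\pi_1(S)|\ge (1-\eta)|\pi_1(S)|$ (and symmetrically for $\pi_2$). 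Bijectivity is a one-line pigeonhole, and the key step $\pi_1 = \pi_2^{-1}$ is a short cycle argument: if they disagreed, one would get a strictly decreasing chain of set sizes around a cycle. This explicit matching is what you are missing.

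\textbf{Upper bound.} Your direct variance computation can be made to work but is laborious. The paper's argument is a one-liner once you recall that $\mcurg^\perp$ is a Laplacian matrix: by Lemma~\ref{thm:minmaxkpar-as-snorm}, $\|\mnewg^T\mcurg^\perp\mnewg\|_2$ is within a factor~$2$ of its largest diagonal entry, and the diagonal entry at $T$ is $\unit{\ind{T}}^T\mcurg^\perp\unit{\ind{T}} \le 1 - \langle\unit{\ind{S}},\unit{\ind{T}}\rangle^2 = \simfun(S,T)\le\eps$ for the matched $S$. This is worth knowing; it replaces all your casework.

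\textbf{Third inequality.} You correctly compute $\|\mcurg - \mnewg\|_2^2 = 2 - 2\lambda_{\min}\bigl((M+M^T)/2\bigr)$, but the Gershgorin bound you propose does not avoid the $\sqrt{k}$ loss. Cauchy--Schwarz on the off-diagonal row sum gives $\sum_{j\ne i}M_{ij} \le \sqrt{\sum_{j\ne i}|A_i\cap B_j|/|A_i|}\cdot\sqrt{\sum_{j\ne i}|A_i\cap B_j|/|B_j|}$; the first factor is $\le\sqrt{\eps}$, but the second is only $\le\sqrt{(k-1)\eps}$ (each term is $\le \eps$ because $A_i\cap B_j\subseteq B_j\setminus A_j$), and the size ratio $|A_i|/|B_i| = 1+O(\eps)$ does not help since it only relates matched pairs. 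The paper sidesteps Gershgorin entirely: once the upper bound $\|\mcurg^\perp\mnewg\|_2^2\le 2\eps$ is in hand, one has $\sigma_{\min}(M)\ge\sqrt{1-2\eps}$, and the paper concludes $\tfrac12\|\mcurg-\mnewg\|_2^2 \le 1-\sigma_{\min}(M) = 1-\sqrt{1-2\eps}\le 2\eps$.
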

\begin{proof}[Upper Bound]
  Recall that $\mcurg^\perp$ is a Laplacian matrix.
  From Lemma~\ref{thm:minmaxkpar-as-snorm}, we see that
  \(
    \|\mnewg^T \mcurg^\perp \mnewg\|_2
  \) is 
  within factor-$2$ of the maximum diagonal element. Given
  $T \in \newg$ let $S$ be the matching set in $\curg$ so that
  $\simfun(S,T) \le \simfun(\curg,\newg)$.
  The diagonal element of \(\mnewg^T \mcurg^\perp \mnewg\)
  corresponding to $T$ is:
  \[
     \unit{\ind{T}}^T \mcurg^\perp \unit{\ind{T}}
      \le {\|\ind{S}^\perp \unit{\ind{T}}\|^2}      
      = 1 - \frac{|S\cap T|^2}{|S| |T|}
      = \simfun(S,T) \le \eps.
  \]
  Therefore $\|\mcurg^\perp \mnewg\|^2_2 \le 2 \eps$.  Before proving
  the lower bound, we will show how this bounds
  $\|\mcurg - \mnewg\|_2$:
  \[
    \frac12 (\mcurg - \mnewg)^T (\mcurg - \mnewg)    
    = I_k - \frac{1}{2} (\mcurg^T \mnewg + \mnewg^T \mcurg).
  \]
  So $\frac12 \|\mcurg - \mnewg\|^2_2 \le 1 - \sigma_{\min}(\mnewg^T \mcurg)
  = 1 - \sqrt{1 - \|\mcurg^\perp \mnewg\|^2_2} \le 2 \eps$.
\end{proof}
\def\gga{{\curg}}
\def\ggb{{\newg}}
\def\ggap{{\mcurg}}
\def\ggbp{{\mnewg}}
\begin{proof}[Lower Bound]
  We define 
  $\pi_1: \gga \to \ggb$ and $\pi_2: \ggb \to \gga$
  as the following:
  \begin{align*}
  \forall S\in \gga:
  \pi_1(S) \triangleq & \argmax_{T\in \ggb} \frac{|S\cap T|}{|T|}, 
    \\
  \forall T\in \ggb:
  \pi_2(T) \triangleq & \argmax_{S\in \gga} \frac{|S\cap T|}{|S|}.    
  \end{align*}
  Consider $M = \{ (S, \pi_1(S))\ |\ S \in \gga\}$: By
  Claims~\ref{thm:pi-1-is-bijection} and \ref{thm:pis-match}, $M$ is
  indeed a perfect matching between $\gga$ and $\ggb$. Now consider
  any matched pair $(S,T) \in M$. Without loss of generality, say
  $|S| \ge |T|$.  By Claim~\ref{thm:pi-1-large},
  $|S \cap T| \ge (1-\eps) |S|$.  Since
  $|S\Delta T| = |S| + |T| - 2 |S\cap T|$:
  \[
  |S\Delta T| \le |S| + |T| - 2 (1-\eps) |S| =
  2\eps |S| + (|T|-|S|) \le 2 \eps |S|.
  \]
  We finish our proof with Claims~\ref{thm:pi-1-large},
  \ref{thm:pi-1-is-bijection} and \ref{thm:pis-match}.
  \begin{myclaim}\label{thm:pi-1-large}
    If $\pi_1(S) = T$, then $|S\cap T| \ge (1-\eps) |T|$.	
    Similarly, if $\pi_2(T) = S$, then $|S\cap T| \ge (1-\eps) |S|$.
  \end{myclaim}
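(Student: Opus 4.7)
The plan is to reduce both halves of Claim~\ref{thm:pi-1-large} to the existence of an optimal bijection $\pi^\ast:\gga \leftrightarrow \ggb$ realizing $\simfun(\gga,\ggb) = \eps$, together with the extremal definition of $\pi_1$ (and symmetrically $\pi_2$).

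First I would unpack the definition. Since $\eps = \simfun(\gga,\ggb)$, there exists a bijection $\pi^\ast:\gga \leftrightarrow \ggb$ such that every matched pair $(S,\pi^\ast(S))$ satisfies $\simfun(S,\pi^\ast(S)) \le \eps$, i.e.,
\[
  |S \cap \pi^\ast(S)|^2 \ge (1-\eps)\,|S|\,|\pi^\ast(S)|.
\]
The immediate consequence I will use is that dividing by $|S|\cdot|\pi^\ast(S)|$ and using $|S\cap \pi^\ast(S)| \le \min(|S|,|\pi^\ast(S)|)$ yields both
\[
  \frac{|S \cap \pi^\ast(S)|}{|\pi^\ast(S)|} \ge 1-\eps
  \qquad\text{and}\qquad
  \frac{|S \cap \pi^\ast(S)|}{|S|} \ge 1-\eps,
\]
since each single factor on the right-hand side of $\frac{|S\cap \pi^\ast(S)|}{|\pi^\ast(S)|}\cdot \frac{|S\cap \pi^\ast(S)|}{|S|} \ge 1-\eps$ is bounded by $1$.

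Then I would conclude as follows. Fix $S\in\gga$ and let $T = \pi_1(S)$. By the definition $\pi_1(S) = \argmax_{T'\in\ggb} |S\cap T'|/|T'|$, we have
\[
  \frac{|S\cap T|}{|T|} \ge \frac{|S\cap \pi^\ast(S)|}{|\pi^\ast(S)|} \ge 1-\eps,
\]
which gives $|S\cap T| \ge (1-\eps)|T|$, proving the first half. The second half is entirely symmetric: for $T \in \ggb$ with $S = \pi_2(T)$ and $S^\ast = (\pi^\ast)^{-1}(T)$, the definition of $\pi_2$ yields $\frac{|S\cap T|}{|S|} \ge \frac{|S^\ast\cap T|}{|S^\ast|} \ge 1-\eps$, hence $|S\cap T|\ge (1-\eps)|S|$.

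There is essentially no obstacle here: the claim is a direct consequence of the greedy/argmax definitions of $\pi_1,\pi_2$ and the (relatively coarse) bound that the product $\frac{|S\cap T|}{|T|}\cdot \frac{|S\cap T|}{|S|} = 1-\simfun(S,T)$ factors into two ratios each at most $1$. The only thing worth being careful about is to invoke the optimal bijection $\pi^\ast$ (coming from the $\min$ in Notation~\ref{def:kpar-sim}) rather than accidentally assuming that $\pi_1$ or $\pi_2$ themselves already match to sets within $\simfun \le \eps$.
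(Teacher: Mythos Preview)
There is a genuine gap: you have misidentified what $\eps$ denotes in this claim. The claim sits inside the proof of the \emph{lower bound} of Theorem~\ref{thm:set-similarity}, whose goal is to establish $\simfun(\curg,\newg) \le \|\mcurg^\perp\mnewg\|_2^2$. In that context $\eps = \|\mcurg^\perp\mnewg\|_2^2$; this is exactly what makes the paper's step $\sigma_{\min}(\mcurg^T\mnewg)^2 \ge 1-\eps$ (via Proposition~\ref{thm:sing-max-to-min}) go through. Your argument instead takes $\eps = \simfun(\curg,\newg)$ and invokes an optimal bijection $\pi^\ast$ realizing this minimum. But that quantity is precisely what the lower-bound proof is trying to control: the whole purpose of Claims~\ref{thm:pi-1-large}--\ref{thm:pis-match} is to \emph{construct}, from the spectral hypothesis alone, a matching $M=\{(S,\pi_1(S))\}$ certifying that $\simfun(\curg,\newg)$ is small. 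Assuming such a $\pi^\ast$ with $\simfun \le \eps$ already exists renders the argument circular.

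Concretely, the paper's proof never touches a combinatorial bijection. It forms $P=\mcurg^T\mnewg\mnewg^T\mcurg$, observes that $\lambda_{\min}(P)\ge 1-\eps$ forces every diagonal entry
\[
P_{SS} \;=\; \sum_{T\in\newg}\frac{|S\cap T|^2}{|S|\,|T|}
\]
to be at least $1-\eps$, and then bounds this sum above by $\bigl(\max_{T'}\tfrac{|S\cap T'|}{|T'|}\bigr)\cdot\sum_T\tfrac{|S\cap T|}{|S|} = \tfrac{|S\cap\pi_1(S)|}{|\pi_1(S)|}$, using that the sets in $\newg$ are disjoint. That spectral-to-diagonal step is the actual content of the claim; your argmax-versus-$\pi^\ast$ comparison, while correct on its own terms, proves a different (and, for this lower bound, unusable) inequality.
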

  \begin{proof}
    Consider the matrix 
    $P = \ggap^T \ggbp \ggbp^T \ggap \in \sympm^k$
    so that $\lambda_{\min}(P) = \sigma_{\min}^2(\ggap^T \ggbp)$.
    Thus
    $\lambda_{\min}(P) = \sigma_{\min}(\ggap^T \ggbp)^2
    \ge 1-  \eps$.
    In particular, all diagonals of $P$ are at least $1- \eps$.
    Consider any diagonal corresponding to $S \in \gga$:
    \begin{align*}
    1-\eps\le&
               \frac{\ind{S}^T \ggbp \ggbp^T \ind{S} }{|S|}
               = \sum_{T\in \ggb} \frac{|S\cap T|^2}{|S| |T| } \\
      \le & 
            \Bigg(\max_{T'\in \ggb} \frac{|S\cap T'|}{|T'|}\Bigg)
            \sum_{T\in \ggb} \frac{|S\cap T|}{|S|} \\	
      = & \max_{T'\in \ggb} \frac{|S\cap T'|}{|T'|},      
    \end{align*}
    which, by construction, is equal to 
    $\frac{|S\cap \pi_1(S)|}{|\pi_1(S)|}$. This proves
    the first part of the claim. 
    The second part follows immediately by
    applying the same argument on $\ggb$ and $\gga$.
  \end{proof}
  \begin{myclaim} \label{thm:pi-1-is-bijection}
    Both $\pi_1$ and $\pi_2$ are bijections.
  \end{myclaim}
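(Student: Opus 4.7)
The plan is to show injectivity, which suffices since $\pi_1$ and $\pi_2$ are maps between two finite sets $\gga, \ggb$ of the same cardinality $k$. I will use Claim~\ref{thm:pi-1-large} together with the disjointness of clusters in the partitions $\gga$ and $\ggb$.

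First I would focus on $\pi_1$. Suppose for contradiction that $\pi_1(S) = \pi_1(S') = T$ for two distinct $S, S' \in \gga$. By Claim~\ref{thm:pi-1-large} applied to each of $S$ and $S'$, we have $|S \cap T| \ge (1-\eps)|T|$ and $|S' \cap T| \ge (1-\eps)|T|$. Since $\gga \in \disjv(k)$, the sets $S$ and $S'$ are disjoint, so $|S \cap T| + |S' \cap T| \le |T|$. Combining these inequalities yields $2(1-\eps)|T| \le |T|$, i.e. $\eps \ge \tfrac12$. Provided $\eps$ is smaller than $\tfrac12$ (which may be assumed throughout, since otherwise the bound to be proved in Theorem~\ref{thm:set-similarity} is trivial), this is a contradiction, so $\pi_1$ is injective and hence a bijection between $\gga$ and $\ggb$.

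The argument for $\pi_2$ is entirely symmetric: by the second half of Claim~\ref{thm:pi-1-large}, if $\pi_2(T) = \pi_2(T') = S$ with $T \neq T' \in \ggb$, then $|S \cap T| \ge (1-\eps)|S|$ and $|S \cap T'| \ge (1-\eps)|S|$, and disjointness of $T$ and $T'$ in $\ggb$ again forces $\eps \ge \tfrac12$, a contradiction.

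There is no real obstacle here; the only subtlety is ensuring that the hypothesis $\eps < \tfrac12$ is in force so that the two ``large overlap'' lower bounds cannot both be charged against the same cluster. Since the statement of Theorem~\ref{thm:set-similarity} only becomes nontrivial for small $\eps$ (the spectral-norm quantity $\|\mcurg^\perp \mnewg\|_2^2$ is in any case at most $1$), this assumption is harmless and the proof is complete in a couple of lines.
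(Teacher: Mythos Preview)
Your proof is correct and essentially identical to the paper's own argument: the paper also assumes $\pi_1(S)=\pi_1(S')=T$ for $S\neq S'$, uses Claim~\ref{thm:pi-1-large} and disjointness of $S,S'$ to get $|T|\ge |S\cap T|+|S'\cap T|\ge 2(1-\eps)|T|>|T|$ under $\eps<\tfrac12$, and then notes that the same argument handles $\pi_2$. Your remark that injectivity suffices because $|\gga|=|\ggb|=k$ makes explicit a point the paper leaves implicit.
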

  \begin{proof}
    Suppose $\pi_1(S) = \pi_1(S') = T$ for some $S \neq S'$. Since 
    $S,S'$ are disjoint and $\eps<\frac12$:
    \[
    |T| \ge 
    |S\cap T| + |S'\cap T| \ge 2 (1-\eps) |T| 
    > |T|,
    \] a contradiction. A similar argument shows that $\pi_2$
    is a bijection as well.
  \end{proof}
  \begin{myclaim} \label{thm:pis-match}
    $\pi_1 = \pi_2^{-1}$. 
  \end{myclaim}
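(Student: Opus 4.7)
The plan is to show $\pi_1 = \pi_2^{-1}$ via a cycle argument on the composition. By Claim~\ref{thm:pi-1-is-bijection}, $\sigma := \pi_2 \circ \pi_1: \gga \to \gga$ is a bijection, and I want to show $\sigma$ is the identity. Suppose for contradiction it is not: then $\sigma$ contains some cycle of length $r \ge 2$. Relabel so the cycle reads $S_1 \to S_2 \to \cdots \to S_r \to S_1$, and set $T_i := \pi_1(S_i) \in \ggb$, so that $\pi_2(T_i) = S_{i+1}$ (indices mod $r$). Bijectivity of $\pi_1$ ensures that consecutive $T_i$'s are distinct, so the $T_i$ are pairwise disjoint inside the partition $\ggb$, and of course the $S_i$ are pairwise disjoint inside $\gga$.

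For each $i$, Claim~\ref{thm:pi-1-large} (applied in both flavors) gives the two lower bounds
\(|S_i \cap T_i| \ge (1-\eps)|T_i|\) (from $\pi_1(S_i) = T_i$) and \(|S_{i+1} \cap T_i| \ge (1-\eps)|S_{i+1}|\) (from $\pi_2(T_i) = S_{i+1}$). Disjointness of $S_i, S_{i+1}$ inside $\gga$ forces both intersections to fit inside $T_i$, yielding $(1-\eps)|T_i| + (1-\eps)|S_{i+1}| \le |T_i|$, hence
\[
|S_{i+1}| \le \frac{\eps}{1-\eps}\, |T_i|.
\]

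Symmetrically, since $\pi_2(T_{i-1}) = S_i$ and $\pi_1(S_i) = T_i$, Claim~\ref{thm:pi-1-large} gives $|S_i \cap T_{i-1}| \ge (1-\eps)|S_i|$ and $|S_i \cap T_i| \ge (1-\eps)|T_i|$, and the two intersections fit inside $S_i$ by disjointness of $T_{i-1}, T_i$ in $\ggb$. This gives $|T_i| \le \tfrac{\eps}{1-\eps} |S_i|$. Multiplying the two bounds,
\[
|S_{i+1}| \le \left(\frac{\eps}{1-\eps}\right)^{\!2} |S_i|.
\]
In the regime $\eps < 1/2$ (already assumed in Claim~\ref{thm:pi-1-is-bijection}), this ratio is strictly less than $1$; iterating $r$ times around the cycle gives $|S_1| < |S_1|$, a contradiction, so $\sigma = \mathrm{id}$ and $\pi_1 = \pi_2^{-1}$.

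The one subtlety to watch for is that a direct attempt to contradict a single ``bad'' pair $\pi_1(S) = T,\ \pi_2(T) = S' \ne S$ only bounds $|S'|$ in terms of $|T|$; no single such pair is impossible. The contradiction genuinely requires the global cycle structure, together with the observation that one must invoke disjointness \emph{twice} — once inside $\gga$ (controlling $|S_{i+1}|$ by $|T_i|$) and once inside $\ggb$ (controlling $|T_i|$ by $|S_i|$) — in order to produce a strict contraction ratio on the sizes $|S_i|$ as one walks around the cycle.
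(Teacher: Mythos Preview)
Your proof is correct and follows essentially the same approach as the paper: assume a nontrivial cycle in $\pi_2\circ\pi_1$, use Claim~\ref{thm:pi-1-large} together with disjointness (once in $\gga$, once in $\ggb$) to show the sizes strictly decrease around the cycle, and obtain a contradiction. The paper phrases the contraction as the chain $|S_0|>|T_0|>|S_1|>\cdots$ rather than computing the explicit ratio $(\eps/(1-\eps))^2$, but the argument is the same.
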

  \begin{proof}
    Suppose not. Since 
    both $\gga$ and $\ggb$ are bijections
    by Claim~\ref{thm:pi-1-is-bijection},
    there exists a cycle of the form 
    \[
    (S_0, T_0, \ldots, S_{m-1}, T_{m-1}, S_m=S_0)
    \] where 
    $\pi_1(S_i) = T_i$ and $\pi_2(T_{i}) = S_{i+1}$
    for some $m\ge 2$.
    By construction, 
    $ |S_i \cap T_i| \ge (1-\eps) |T_i| $ which means
    $\eps |T_i| \ge |T_i \sm S_i|$.  Since $S_i$ and $S_{i+1}$ are
    disjoint, $|T_i \sm S_i| \ge |T_i \cap S_{i+1}|$.  Again, by
    construction, $ |T_{i} \cap S_{i+1}| \ge (1-\eps) |S_{i+1}|$.
    Therefore \(\eps |T_i| \ge (1-\eps) |S_{i+1}| \)
    which implies
    \( |T_i| \ge {\frac{1-\eps}{\eps}} |S_{i+1}| > |S_{i+1}|\)
    since $\eps < 1/2$.  By a similar argument, we can also show that
    $|S_i| > |T_{i}|$.  Consequently,
    $
    |S_0| > |S_{1}| > \ldots > |S_m| = |S_0|
    $ which is a contradiction. 
    So all cycles have length $2$, which 
    implies $\pi_1 = \pi_2^{-1}$.
  \end{proof}
\end{proof}
\section*{Acknowledgments}
We thank Moses Charikar and Ravishankar Krishnaswamy for stimulating
discussions about the problem. We also thank Anup Rao for pointing out
that Lemma~\ref{thm:sandwich} holds with equality.

\bibliographystyle{alpha}
\bibliography{references}

\end{document}